\newtheorem{thm}{Theorem}
\theoremstyle{definition}
\newtheorem{defn}[thm]{Definition}
\theoremstyle{plain}
\newtheorem{prop}[thm]{Proposition}
\theoremstyle{plain}
\theoremstyle{plain}
\theoremstyle{remark}
\newtheorem{rem}[thm]{Remark}
\begin{document}

\title{Birth and stabilization of phase clusters by multiplexing of adaptive networks} 



\author{Rico Berner$^{1,2}$}
\email[]{rico.berner@physik.tu-berlin.de}
\author{Jakub Sawicki$^{1}$}
\author{Eckehard Sch\"oll$^{1}$}
\email[]{schoell@physik.tu-berlin.de}
\affiliation{$^{1}$Institut f\"ur Theoretische Physik, Technische Universit\"at Berlin, Hardenbergstr.\,36, 10623 Berlin, Germany}
\affiliation{$^{2}$Institut f\"ur Mathematik, Technische Universit\"at Berlin, Hardenbergstr.\,36, 10623 Berlin, Germany}

\date{\today}

\begin{abstract}
We propose a concept to generate and stabilize diverse partial synchronization patterns (phase clusters) in adaptive networks which are widespread in neuro- and social sciences, as well as biology, engineering, and other disciplines. We show by theoretical analysis and computer simulations that multiplexing in a multi-layer network with symmetry can induce various stable phase cluster states in a situation where they are not stable or do not even exist in the single layer. Further, we develop a method for the analysis of Laplacian matrices of multiplex networks which allows for insight into the spectral structure of these networks enabling a reduction to the stability problem of single layers. We employ the multiplex decomposition to provide analytic results for the stability of the multilayer patterns. As local dynamics we use the paradigmatic Kuramoto phase oscillator, which is a simple generic model and has been successfully applied in the modeling of synchronization phenomena in a wide range of natural and technological systems.

\end{abstract}

\pacs{}
\maketitle 

Complex networks are an ubiquitous paradigm in nature and technology, with a wide field of applications ranging from physics, chemistry, biology, neuroscience, to engineering and socio-economic systems. Of particular interest are adaptive networks, where the connectivity changes in time, for instance, the synaptic connections between neurons are adapted depending on the relative timing of neuronal spiking~\cite{MAR97a,ABB00,CAP08a,Meisel2009,LUE16}. 
Similarly, chemical systems have been reported~\cite{JAI01}, where the reaction rates adapt dynamically depending on the variables of the system. Activity-dependent plasticity is also common in epidemics~\cite{GRO06b} and in biological or social systems~\cite{GRO08a}. Synchronization is an important feature of the dynamics in networks of coupled nonlinear oscillators \cite{PIK01,STR01a,ALB02a,NEW03,BOC18}. Various synchronization patterns are known, like cluster synchronization where the network splits into groups of synchronous elements \cite{DAH12}, or partial synchronization patterns like chimera states where the system splits into coexisting domains of coherent (synchronized) and incoherent (desynchronized) states~\cite{KUR02a,ABR04,PAN15}. These patterns were also explored in adaptive networks \cite{AOK09,AOK11,NEK16,KAS16a,BER19,BER19a,KAS17,KAS18,GUS15a,PIC11a,TIM14,REN07,AVA18,Papadopoulos2017,Kasatkin2018a,Kasatkin2019}. 
Furthermore, adapting the network topology has also successfully been used to control cluster synchronization in delay-coupled networks~\cite{LEH14}. 

Another focus of recent research in network science are multilayer networks, which are systems interconnected through different types of links \cite{BOC14,DE13,DE15,KIV14}. A prominent example are social networks which can be described as groups of people with different patterns of contacts or interactions between them~\cite{GIR02,Amato2017,Amato2017a}. Other applications are communication, supply, and transportation networks, for instance power grids, subway networks, or airtraffic networks~\cite{CAR13d}. In neuroscience, multilayer networks represent for instance neurons in different areas of the brain, neurons connected either by a chemical link or by an electrical synapsis, or the modular connectivity structure of brain regions~\cite{Meunier2010,BEN16,BAT17,Vaiana2018,RAM19,Ashourvan2019,Zhou2006a,Zhou2007,Wang2019}. A special case of multilayer networks are multiplex topologies, where each layer contains the same set of nodes, and only pairwise connections between corresponding nodes from neighbouring layers exist~\cite{ZHA15a,MAK16,JAL16,GHO16,LEY17a,AND17,GHO18,SEM18,MIK18,SAW18c,OME19,RYB19,NIK19,Blaha2019,Sevilla-Escoboza2016,Requejo2016,Pitsik2018,Leyva2018,Jalan2019,Frolov2018}.

In spite of the lively interest in the topic of adaptive networks, little is known about the interplay of adaptively coupled groups of networks~\cite{KAS18,MAS18,Goremyko2017}. Such adaptive multilayer or multiplex networks appear naturally in neuronal networks, e.g., in interacting neuron populations with plastic synapses but different plasticity rules within each population~\cite{Citri2008,Edelmann2017}, or affected by different mechanisms of plasticity~\cite{Zenke2015}, or the transport of metabolic resources~\cite{Virkar2016}. Beyond brain networks, coexisting forms of (meta)plasticity are investigated in neuro-inspired devices to develop artificially intelligent learning circuitry~\cite{John2018}.

In this Letter we show that a plethora of novel patterns can be generated by multiplexing adaptive networks. In particular, partial synchronization patterns like phase clusters and more complex cluster states which are unstable in the corresponding monoplex network can be stabilized, or even states which do not exist in the single-layer case for the parameters chosen, can be born by multiplexing. Thus our aim is to provide fundamental insight into the combined action of adaptivity and multiplex topologies. Hereby we elucidate the delicate balance of adaptation and multiplexing which is a feature of many real-world networks even beyond neuroscience~\cite{Gross2006,Shai2013,Wardil2014,Klimek2016}. As local dynamics we use the paradigmatic Kuramoto phase oscillator model, which is a simple generic model and has been successfully applied in the modeling of synchronization phenomena in a wide range of natural and technological systems \cite{BOC18}. 

A general multiplex network with $L$ layers each consisting of $N$ identical adaptively coupled phase oscillators is described by
\begin{align}
\label{eq:PhiDGL_general}
	\dot{\phi}_{i}^{\mu} &=\omega-\frac{1}{N}\sum_{j=1}^{N}\kappa_{ij}^\mu\sin(\phi_{i}^\mu-\phi_{j}^\mu+\alpha^{\mu\mu}) \nonumber \\[-.15cm]
	& \phantom{\quad}-\sum_{\nu=1,\nu\ne\mu}^L\sigma^{\mu\nu}\sin(\phi_{i}^\mu-\phi_{i}^\nu+\alpha^{\mu\nu}), \\
	\dot{\kappa}_{ij}^\mu&=-\epsilon\left(\kappa_{ij}^\mu+\sin(\phi_{i}^\mu-\phi_{j}^\mu+\beta^{\mu})\right), \nonumber
\end{align}
where $\phi_i^{\mu}\in [0,2\pi)$ represents the phase of the $i$\textsuperscript{th} oscillator ($i=1,\dots,N$) in the $\mu$\textsuperscript{th} layer ($\mu=1,\dots,L$), and $\omega$ is the natural frequency. The interaction between the phase oscillators within each layer is described by the coupling matrix elements $\kappa_{ij}^{\mu}\in[-1,1]$. The intra-layer coupling weights $\kappa_{ij}^\mu$ are determined adaptively, whereas the inter-layer coupling weights $\sigma^{\mu\nu} \ge 0$ are fixed. The parameters $\alpha^{\mu\nu}$ are the phase lags of the interaction \cite{SAK86}. The adaptation rate $0<\epsilon \ll1$ is assumed to be a small parameter separating the time scales of the slow dynamics of the coupling weights and the fast dynamics of the oscillatory
system. Using the neuroscience terminology, such an adaptation may be called plasticity~\cite{AOK11}. The phase lag parameter $\beta^{\mu}$ of the adaptation function $\sin(\phi_{i}^\mu-\phi_{j}^\mu+\beta^\mu)$ describes different rules that may occur in neuronal networks.
For instance, for $\beta^\mu=-\pi/2$, a Hebbian-like rule~\cite{HOP96,SEL02,AOK15} is obtained where the coupling $\kappa_{ij}$ is increasing between any two systems with close-by phases, i.e., $\phi_{i}-\phi_{j}\approx 0$ ~\cite{HEB49}.
If $\beta=0$, the link $\kappa_{ij}$ will be strengthened if the
$i$\textsuperscript{th} oscillator is advancing the $j$\textsuperscript{th}. Such a relationship is typical for spike-timing dependent plasticity in neuroscience \cite{CAP08a,MAI07,LUE16,POP13}. If $\beta=\pi/2$, an anti-Hebbian-like rule emerges.

Eq.\,\eqref{eq:PhiDGL_general} has been widely used as a paradigmatic model for adaptive networks \cite{AOK09,AOK11,NEK16,KAS16a,BER19,BER19a,KAS17,KAS18,GUS15a,PIC11a,TIM14,REN07,AVA18}. It generalizes the Kuramoto-Sakaguchi model with fixed coupling topology~\cite{KUR84,STR00,ACE05a,PIK08,OME12b}.

Let us note important properties of the model. First, $\omega$ can be set to zero without loss of generality due to the shift-symmetry of Eq.\,\eqref{eq:PhiDGL_general}, \textit{i.e.}, considering the co-rotating frame $\phi\to\phi+\omega t$. Moreover, due to the existence of the attracting region $G\equiv\left\{ \left(\phi_{i}^\mu,\kappa_{ij}^\mu\right):\phi_{i}^\mu\in (0,2\pi],|\kappa_{ij}^\mu|\le1,\,i,j=1,\dots,N,\right.$ $\left.\mu=1,\dots,L\right\} $, one can restrict the range of the coupling weights to the interval $-1\le\kappa_{ij}\le1$ \cite{KAS17}. Finally, based on the parameter symmetries of the model
\begin{align*}
	(\bm{\alpha},\bm{\beta},\bm{\phi},\bm{\kappa}) &\mapsto (-\bm{\alpha},\pi-\bm{\beta},-\bm{\phi},\bm{\kappa}),\\
	(\alpha^{\mu\mu},\beta^{\mu},\phi_{i}^{\mu},\kappa_{ij}^{\mu}) &\mapsto(\alpha^{\mu\mu}+\pi,\beta^{\mu}+\pi,\phi_{i}^{\mu},-\kappa_{ij}^{\mu}),
\end{align*}
where $\bm{\alpha},\bm{\beta},\bm{\phi},\bm{\kappa}$ abbreviate the whole set of variables and parameters, it is sufficient to analyze the system within the parameter region $\alpha^{11}\in[0,\pi/2)$, $\alpha^{\mu\mu}\in[0,\pi)$ ($\mu\ne 1$), $\alpha^{\mu\nu}\in[0,2\pi)$ ($\mu\ne\nu$) and $\beta^{\mu}\in[-\pi,\pi)$. 

For a single layer, Eq.\,\eqref{eq:PhiDGL_general} has been studied numerically and analytically~\cite{AOK09,AOK11,NEK16,KAS16a,KAS17,BER19}. In particular, it was shown that starting from uniformly distributed random initial condition $\phi_{i}\in[0,2\pi)$, $\kappa_{ij}\in[-1,1]$ the system can reach different frequency multi-cluster states with hierarchical structure depending on the parameters $\alpha$ and $\beta$. The frequency multi-clusters in turn consist of several one-clusters which determine the existence and stability of the former~\cite{BER19a}. Therefore, these one-cluster states (with identical frequency, but different phase distributions) constitute the building blocks of adaptively coupled phase oscillators. 

Before we consider multiple layers, we will introduce notions and patterns of the single-layer case, which will be needed for the multiplex network. Each solution of Eq.\,\eqref{eq:PhiDGL_general} for $L=1$ or $L=2$ is called a monoplex or duplex state, respectively.

It is known that already the monoplex system \eqref{eq:PhiDGL_general} possesses a huge variety of dynamical states such as multi-clusters with respect to frequency synchronization, chaotic attractors, and chimera-like states~\cite{NEK16,KAS16a,KAS17,BER19, BER19a}. In this article we will focus on the simplest patterns, namely one-cluster states, and their generalization to the multiplex case. The reasons for this are threefold. First, from the analytical point of view, the one-cluster states in the monoplex network are very well understood. Second, the one-clusters are building blocks for more complex dynamical states such as multi-cluster states. Therefore, they are essential for the understanding of more complex dynamical patterns. And third, stable chimera-like states as they were studied in~\cite{KAS17,KAS18} exist close to the borders of the stability regions for one-clusters, so the existence and stability of one-clusters may pave the way for observing those hybrid patterns.

In general, one-cluster states are given by equilibria relative to a co-rotating frame~\cite{BER19} 
\begin{equation}
\label{eq:pPL}
    \begin{aligned}
	\phi_{i}^\mu &= \Omega t+a_{i}^\mu,\\
	\kappa_{ij}^\mu &= -\sin(a_{i}^\mu-a_{j}^\mu+\beta^{\mu}),
    \end{aligned}
\end{equation}
with collective frequency $\Omega$ and relative phases $a_{i}^\mu$. 
Hence the second moment order parameter $R_2(\mathbf{a}^{\mu}) = \frac1N\left|\sum_{j=1}^Ne^{\mathrm{i}2a_{j}^{\mu}}\right|$
with $\mathbf{a}^\mu\equiv(a_{1}^\mu,\dots,a_{N}^\mu)^T$ can be used as a characteristic measure. In the case of monoplex systems ($L=1$), three types of solutions exist (see Fig.~\ref{fig:1Cl_Illustration}) which are characterized by corresponding frequencies $\Omega$ as a function of $(\alpha^{11},\beta^{1})$ \cite{BER19}: (a) $\Omega=\cos(\alpha^{11}-\beta^{1})/2$ if $R_{2}(\mathbf{a}^1)=0$ (Splay state), (b) $\Omega=\sin\alpha^{11}\sin\beta^{1}$ if $R_{2}(\mathbf{a}^1)=1$ with $a_i^1\in\{0,\pi\}$ (Antipodal state), (c) $\Omega=\cos(\alpha^{11}-\beta^{1})/2-R_{2}(\mathbf{a})\cos(\psi_Q)/2$ if $0<R_{2}(\mathbf{a}^1)<1$ with $a_{i}^1\in\{ 0,\pi,\psi_{Q},\psi_{Q}+\pi\}$ (Double antipodal state) with $\psi_{Q}$ being the unique solution (modulo $2\pi$) of
\begin{align*}
	(1-q)\sin(\psi_{Q}-\alpha^{11}-\beta^{1})=q\sin(\psi_{Q}+\alpha^{11}+\beta^{1}),
\end{align*}
where $q=Q/N$ and $Q\in\left\{ 1,\dots,N-1\right\}$ denotes the number of phase shifts $a_{i}^1\in\{0,\pi\}$. Here, splay states are defined in a more general sense by $R_2(\mathbf{a}^1)=0$, which includes the states $a_{i}^1=2 \pi i/N$ usually referred to as splay state~\cite{CHO09}.
\begin{figure}
	\begin{center}
		\includegraphics[width=0.8\linewidth]{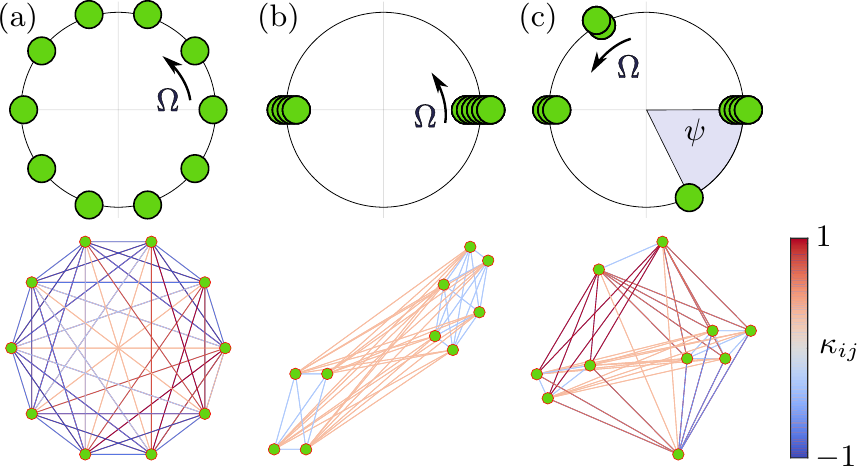}
	\end{center}%
	\caption{\label{fig:1Cl_Illustration}Illustration of the three types of monoplex one-cluster states of Eq.\,\eqref{eq:pPL} ($L=1$) for an ensemble of $10$ oscillators (green circles) with frequencies $\Omega$ (upper panels) and coupling structure with weights $\kappa_{ij}$ (lower panels): One-cluster (a) of splay type ($R_{2}(\mathbf{a})=0$), (b) of antipodal type ($R_{2}(\mathbf{a})=1$), and (c) of double antipodal type with $Q=7$. Parameters: $\alpha=0.1\pi$, $\beta=0.1\pi$}
\end{figure}

Let us now consider these one-cluster states in multiplex structures. Therefore, we introduce the notion of \emph{lifted} one-cluster states, where in each layer $\mu=1,\dots,L$ the state $(\phi^\mu_i(t), \kappa_{ij}^\mu(t))$ is a monoplex one-cluster, i.e., the phases $a_{i}^\mu$ of the oscillators are of splay, antipodal, or double antipodal type. It can be shown \cite{suppl} that in duplex systems ($L=2$) the phase difference of oscillators between the layers $\Delta a\equiv a_{i}^{1}-a_{i}^{2}$
takes only two values and solves $\Delta\Omega = \sigma^{12}\sin(\Delta a+\alpha^{12})+\sigma^{21}\sin(\Delta a-\alpha^{12})$,
where $\Delta\Omega\equiv\Omega(\alpha^{11},\beta^{1})- \Omega(\alpha^{22},\beta^{2})$ is given above for the three different one-cluster states (splay, antipodal, double antipodal). 
These states are shown in Fig.\,\ref{fig:Fig2_DuplexEqu}: Panels (a),(b),(d) display lifted states of splay, antipodal, and splay type, respectively. The phase distributions in both layers are the same but shifted by the constant value $\Delta a$ in agreement with the above equation. 
In contrast to the lifted states, Fig.\,\ref{fig:Fig2_DuplexEqu}(c) shows another possible one-cluster for the duplex network. Due to the interaction of the two layers we can find a phase distribution which is of double antipodal type in each layer but not a lifted state. This means that these states are born by the duplex set-up. Moreover, in contrast to the other examples the phase distribution between the layers does not agree, $\psi^1 \ne\psi^2$. For the monoplex case, it has been shown that double antipodal states are unstable for any set of parameters~\cite{BER19a}. Hence, finding stable double antipodal states which interact through the duplex structure is unexpected.
\begin{figure}
	\begin{center}
		\includegraphics[width=0.8\linewidth]{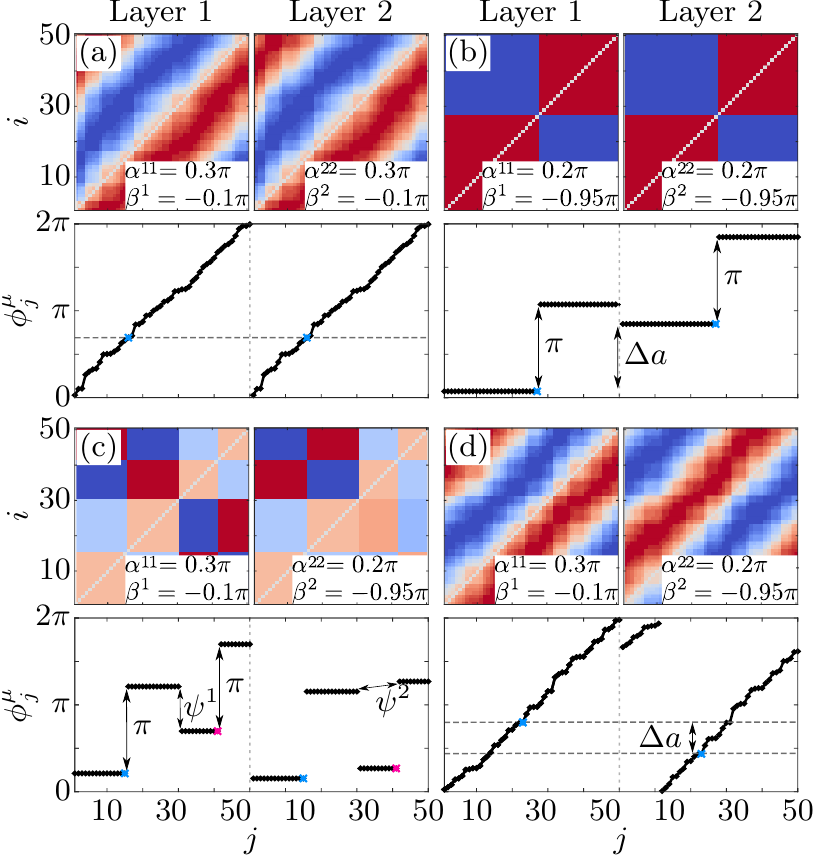}
	\end{center}%
	\caption{\label{fig:Fig2_DuplexEqu}Different duplex states of Eq.\,\eqref{eq:pPL} ($L=2$) for an ensemble of $50$ oscillators in each layer with color-coded coupling weights $\kappa_{ij}^{\mu}$ (upper panels, color code as in Fig.1), phases $\phi_j^{\mu}$ (lower panels): Duplex one-cluster states (a) of lifted splay type ($R_{2}(\mathbf{a^\mu})=0$) for $\alpha^{12/21} = 0.3\pi$, $\sigma^{12/21}=0.07$; (b) of lifted antipodal type ($R_{2}(\mathbf{a^\mu})=1$) for $\alpha^{12} = 0.3\pi$, $\alpha^{21}=0.75\pi$, $\sigma^{12/21}=0.62$; (c) of double antidodal type (not a lifted state) for $\alpha^{12/21}= 0.05\pi$, $\sigma^{12/21}=0.28$; (d) of lifted splay type for $\alpha^{12} = 0.3\pi$, $\alpha^{21}=0.4\pi$, $\sigma^{12/21}=0.8$, and $\epsilon=0.01$. In the lower panels phase differences between the two layers are indicated by $\Delta a\equiv a_{i}^{1}-a_{i}^{2}$, and between the two new antipodal states (c) by $\psi^1, \psi^2$.}
\end{figure}

For more insight into the birth of phase-locked states by multiplexing, Fig.\,\ref{fig:BirthDAP} displays the emergence of double antipodal states in a parameter regime where they do not exist in single-layer networks. They are characterized by the second moment order parameter $R_2$. It is remarkable that the new double antipodal state can be found for a wide range of the inter-layer coupling strength larger than a certain critical value $\sigma_c$, and is clearly different from those of the monoplex. Below the critical value $\sigma_c$, the double antipodal states are no longer stable, and more complex temporal dynamics occurs which causes temporal changes in $R_2$. This leads to non-vanishing temporal variance indicated by the error bars in Fig.~\ref{fig:BirthDAP}. 
\begin{figure}
	\begin{center}
		\includegraphics[width=0.8\linewidth]{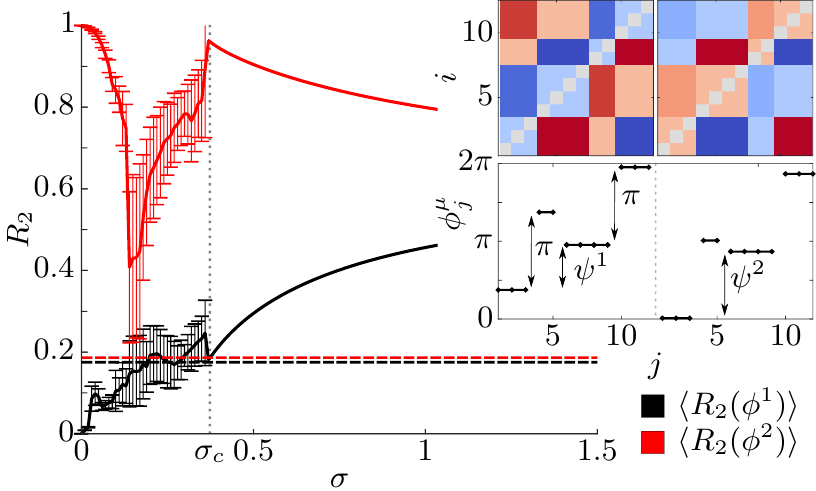}
	\end{center}%
	\caption{Birth of double antipodal state in a duplex network ($N=12$) for a wide range of inter-layer coupling strength $\sigma=\sigma^{12}=\sigma^{21}$. The solid lines are the temporal averages for the second moment order parameter $R_2$ of the individual layers (layer 1: black, layer 2: red). The error bars for $\sigma<\sigma_c$ denote the standard deviation of the temporal evolution of $R_2$. The dashed horizontal lines represent the unique values of $R_2$ for the double antipodal state in a monoplex network. The plot was obtained by adiabatic continuation of a duplex double antipodal state (see inset) in both directions starting from $\sigma=0.5$. Parameters: $\alpha^{11/22} = 0.3\pi$, $\alpha^{12/21}=0.05$, $\beta^{1}=0.1\pi$, $\beta^{2}=-0.95\pi$, and $\epsilon=0.01$.\label{fig:BirthDAP}}
\end{figure}

In the following we show how the dynamics in a neighborhood of theses states can be lifted as well, i.e., we investigate their local stability. Everything is exemplified for antipodal states but can be generalized to lifted splay and lifted double antipodal states in a straightforward manner. To study the dynamics around the one-cluster states described by Eq.\,\eqref{eq:pPL}, we linearize Eq.\,\eqref{eq:PhiDGL_general} around these states:
\begin{eqnarray}
\label{eq:Linearized_OneCl_phi}
&&\dot{\delta\phi}_{i}^\mu =\frac{1}{N}\sum_{j=1}^{N}\big[\sin(\Delta a+\beta^{\mu})\cos(\Delta a+\alpha^{\mu\mu})\Delta_{ij}^{\mu\mu} \delta\phi - \nonumber\\[-.35cm]
&&\sin(\Delta a+\alpha^{\mu\mu})\delta\kappa_{ij}^\mu\big]-\sum_{\nu=1}^M \sigma^{\mu\nu}\cos(\Delta a+\alpha^{\mu\nu})\Delta_{ij}^{\mu\nu} \delta\phi,\nonumber \\
&&\dot{\delta\kappa}_{ij}^\mu =-\epsilon\left(\delta\kappa_{ij}^\mu+\cos(\Delta a+\beta^{\mu})\Delta_{ij}^{\mu\mu} \delta\phi\right)
\end{eqnarray}
where $\Delta_{ij}^{\mu\nu} \delta\phi \equiv\delta\phi^\mu_i-\delta\phi^\nu_j$.

In duplex networks, the coupling structure is given by a $2\times 2$ block matrix $M$ with the $N\times N$ unity matrix $\mathbb{I}_N$:
 
	\begin{align}\label{eq:duplexForm}
		M & =\begin{pmatrix}A & m\cdot\mathbb{I}_N\\
			n\cdot\mathbb{I}_N & B
		\end{pmatrix}.
	\end{align}
If $A$ and $B$ are diagonalizable $N\times N$ matrices which commute ($m,n\in\mathbb{R}$, $n\ne 0$), the following relation for the characteristic polynomial can be proven \cite{suppl} using Schur's decomposition~\cite{BOY04,Liesen2015}:
	\begin{align}
	\label{lem:DuplexDecomp}
	\det\left(mn\cdot \mathbb{I}_N - (D_A-\lambda\mathbb{I}_N)(D_B-\lambda\mathbb{I}_N)\right)=0,
	\end{align}
where $D_A$ and $D_B$ are the diagonal matrices corresponding to $A$ and $B$, respectively. Note that Eq.\,\eqref{lem:DuplexDecomp} not only simplifies the calculation for the eigenvalues in the case of a duplex structure, moreover, it is a general result on linear dynamical systems on duplex networks. Therefore, this result is important for the investigation of stability and symmetry in multiplex networks.

In the case of a duplex antipodal one-cluster state Eq.\,\eqref{eq:PhiDGL_general} with $a^1_i\in \{0,\pi\}$ and $a_i^2=a_i^1-\Delta a$, Eq.\,\eqref{eq:Linearized_OneCl_phi} can be brought to the form~\eqref{eq:duplexForm} and possesses the following set of Lyapunov exponents
$\mathcal{S}=\{-\epsilon,\left(\lambda_{i,1},\lambda_{i,2},\lambda_{i,3},\lambda_{i,4}\right)_{i=1,\dots,N}\}$
where $\lambda_{i,1,\dots,4}$ are the solutions of polynomials containing the eigenvalues of the monoplex system \cite{suppl}.

Thus, the stability analysis of the duplex system can be reduced to that of the monoplex case. We are able to analyze the stabilizing and destabilizing features of a duplex network numerically and analytically. To illustrate the effect of multiplexing, the interaction between two clusters of antipodal type is presented in Fig.\,\ref{fig:DuplexFeat}. The stability of these states is determined by integrating Eq.\,\eqref{eq:pPL} numerically starting with a slightly perturbed lifted antipodal state. The states are stable if the numerical trajectory is approaching the lifted antipodal state. Otherwise, the state is considered as unstable. The black contour lines in Fig.\,\ref{fig:DuplexFeat} show the borders of the stability regions in dependence of the coupling strength $\sigma^{21}$, as calculated from the Lyapunov exponents.
The borders are in remarkable agreement with the numerical results.
\begin{figure}
	\begin{center}
		\includegraphics[width=0.8\linewidth]{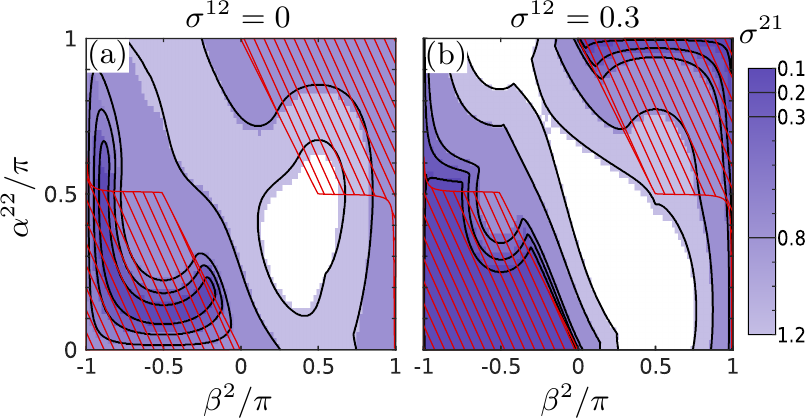}
	\end{center}%
	\caption{Regions of stability (blue) and instability (white) of the lifted antipodal state in the $(\alpha^{22},\beta^{2})$ parameter plane for different values of interlayer coupling (indicated by different blue shading) $\sigma^{21}$, where regions of stronger coupling $\sigma^{21}$ (lighter blue) include such of weaker $\sigma^{21}$ (darker blue). Stability regions for single-layer antipodal clusters are indicated by red hatched areas. The inter-layer coupling is considered as (a) unidirectional ($\sigma^{12}=0$) and (b) bidirectional ($\sigma^{12}=\sigma^{21}$). Parameters: $\alpha^{11} = 0.2\pi$, $\beta^{1}=-0.8\pi$, $\alpha^{12}=0$, $\alpha^{21}=0.3\pi$, and $\epsilon=0.01$.\label{fig:DuplexFeat}}
\end{figure}

We investigate two different situations in Fig.\,\ref{fig:DuplexFeat}: In both panels the parameters for the first layer $\alpha^{11}, \beta^{1}$ are chosen such that the antipodal state is stable without inter-layer coupling. The stability of the duplex antipodal states is displayed in the $(\alpha^{22},\beta^{2})$ parameter plane for several values of the inter-layer coupling $\sigma^{21}$. To compare the effects of the duplex network with the mono-layer case, the stability regions for monoplex antipodals states are displayed, as well, as red hatched areas. They are markedly different. In Figure \ref{fig:DuplexFeat}(a), the two layers are connected unidirectionally ($\sigma^{12}=0$). It can be seen that with increasing inter-layer coupling weight $\sigma^{21}$ the region of stability for the lifted antipodal state also grows. Already for small values of the inter-layer couplings $\sigma^{21}$, a stabilizing effect of the duplex network can be noticed. For $\sigma=0.1$ there exist already regions for which the duplex antipodal state is stable but the corresponding monoplex state would not be stable. The opposite effect is found as well where the duplex network destabilizes a lifted state. Figure~\ref{fig:DuplexFeat}(b) shows the results for two layers with bidirectional coupling. In this case, the duplex structure can have stabilizing and destabilizing effects, as well. Further, for the bidirectional coupling we also notice a growth of the stability region with increasing $\sigma^{21}$ similar to the unidirectional case. However, the regions of stability grow at different rates in dependence on $\sigma^{21}$ and non-monotonically with respect to the parameters $\alpha^{22},\beta^{2}$. Comparing the size of the stability region for both cases, one can see that for small values of $\sigma^{21}$ the region for bidirectional coupling is larger. In turn, for higher inter-layer coupling, the regions for the unidirectional case are larger. It is worth noting that in Fig.\,\ref{fig:DuplexFeat} the stability regions for smaller values of $\sigma^{21}$ are always contained in the region for larger values of $\sigma^{21}$.

In conclusion, we have proposed a concept to induce diverse partial synchronization patterns (phase clusters) in adaptively coupled phase oscillator networks.  While adaptive networks have recently attracted a lot of attention in the fields of neuro- and social sciences, biology, engineering, and other disciplines, and multilayer networks are a paradigm for real-world complex networks, little has been known about the interplay of multilayer structures and adaptivity. We have aimed to fill this gap within a rigorous framework of theoretical analysis and computer simulations.  We have shown that multiplexing in a multi-layer with symmetry can induce various stable phase cluster states like splay states, antipodal states, and double antipodal states, in a situation where they are not stable or do not even exist in the single layer. Further, we have developed a novel method for analysis of Laplacian matrices of duplex networks which allows for insight into the spectral structure of these networks, and can easily be generalized to more than two layers~\cite{suppl}. This new approach of multiplex decomposition has a broad range of applications to physical, biological, socio-economic, and technological systems, ranging from plasticity in neurodynamics or the dynamics of linear diffusive systems~\cite{GOM13,Sole-Ribalta2013} to generalizations of the master stability approach~\cite{PEC98,TAN19} for adaptive networks~\cite{suppl}. We have used the multiplex decomposition to provide analytic results for the stability of lifted states in the multilayer system. As local dynamics we have used the paradigmatic Kuramoto phase oscillator model, supplemented by adaptivity of the link strengths with a phase lag parameter which can model a whole range of adaptivity rules from Hebbian via spike-timing dependent plasticity to anti-Hebbian.

\begin{acknowledgments}
    This work was supported by the German Research Foundation DFG (Projects SCHO 307/15-1 and YA 225/3-1 and Projektnummer - 163436311 - SFB 910). We thank Serhiy Yanchuk for insightful discussions.
\end{acknowledgments}


%

\clearpage

\begin{center}
	\textbf{\large }
\end{center}
\pagebreak

\onecolumngrid
\begin{center}
	\textbf{\large Supplemental Material on\\ Birth and stabilization of phase clusters by multiplexing of adaptive networks}\\[.2cm]
	Rico Berner$^{1,2,*}$, Jakub Sawicki$^{1}$, and Eckehard Sch\"oll$^{1,\dagger}$\\[.1cm]
	{\itshape $^{1}$Institut f\"ur Theoretische Physik, Technische Universit\"at Berlin, Hardenbergstr.\,36, 10623 Berlin, Germany\\
		$^{2}$Institut f\"ur Mathematik, Technische Universit\"at Berlin, Hardenbergstr.\,36, 10623 Berlin, Germany\\}
\end{center}
\twocolumngrid

\setcounter{equation}{0}
\setcounter{figure}{0}
\setcounter{table}{0}
\setcounter{page}{1}
\renewcommand{\theequation}{S\arabic{equation}}
\renewcommand{\thefigure}{S\arabic{figure}}
\renewcommand{\bibnumfmt}[1]{[S#1]}
\renewcommand{\citenumfont}[1]{S#1}





\section{Model}\label{sec:Intro}
We consider a multiplex network with $L$ layers each consisting of $N$ identical adaptively coupled phase oscillators
\begin{align}
	\frac{d\phi_{i}^{\mu}}{dt} &=\omega-\frac{1}{N}\sum_{j=1}^{N}\kappa_{ij}^\mu\sin(\phi_{i}^\mu-\phi_{j}^\mu+\alpha^{\mu\mu}) \label{eq:PhiDGL_general}\\
	&\phantom{=\omega} -\sum_{\nu=1,\nu\ne\mu}^L\sigma^{\mu\nu}\sin(\phi_{i}^\mu-\phi_{i}^\nu+\alpha^{\mu\nu}), \nonumber \\
	\frac{d\kappa_{ij}^\mu}{dt}&=-\epsilon\left(\kappa_{ij}^\mu+\sin(\phi_{i}^\mu-\phi_{j}^\nu+\beta^{\mu})\right),\label{eq:KappaDGL_general}
\end{align}
where $\phi_{i}^{\mu}\in [0,2\pi)$ represents the phase of the $i$th
oscillator ($i=1,\dots,N$) in the $\mu$th layer ($\mu=1,\dots,L$) and $\omega$ is the natural frequency. The
interaction between the phase oscillators within each layer is described by the coupling
matrix $\kappa_{ij}^\mu\in[-1,1]$. The intra-layer coupling weights obey equation~(\ref{eq:KappaDGL_general}). Between the layers the interaction is given by the fixed coupling weights $\sigma^{\mu\nu} \ge 0$. The parameters $\alpha^{\mu\nu}$ can be considered as a phase lag of the interaction~\cite{SAK86s}.

In order to include also more general dynamics on static networks we also consider the following dynamical equations with diffusive coupling~\cite{Keane2012s,Lehnert2016s,Pecora1998s}
\begin{multline}\label{eq:CoupledSysDiff}
	\dot{\mathbf{x}}^{k}_i=f(\mathbf{x}^{k}_i(t))-\sigma\sum_{j=1}^N a_{ij}^{k}H(\mathbf{x}^{k}_i-\mathbf{x}^{k}_j)\\
	-\rho\sum_{l=1}^L m^{kl}G(\mathbf{x}^{k}_i-\mathbf{x}^{l}_i),
\end{multline}
where $\mathbf{x}^{k}_i$ is a $d$-dimensional state vector of the $i$th node in the $k$th layer, \textit{i.e.} $\mathbf{x}^{k}_i\in\mathbb{C}^d$, $f\in\mathcal{C}^1(\mathbb{C}^d,\mathbb{C}^d)$ describes the local dynamics, the functions ${H},{G}\in\mathcal{C}^1(\mathbb{C}^d,\mathbb{C}^d)$ determine the intra- and inter-layer coupling scheme, respectively. The parameter $\sigma\in\mathbb{R}$ is the intra-layer coupling strength, $\rho\in\mathbb{R}$ and is the inter-layer coupling strength. The connectivities are given by the matrix elements ${a^{k}_{ij}\in\{0,1\}}$ of the $N\times N$ intra-layer adjacency matrix $A^{k}$ and the $L\times L$ (inter)-layer adjacency matrix $M$ with elements $m^{kl}\in\{0,1\}$.
\section{Existence of duplex equilibria in adaptive networks}\label{sec:ExistenceDuplexEqui}
Suppose we have two one-cluster states where each is of either splay, antipodal, or double antipodal type which form a duplex one-cluster (see Eq.~(2) of the main text) and  $\phi_{i}^\mu=\Omega(\alpha^{\mu\mu},\beta^\mu)t+\omega^\mu t+a_{i}^\mu$ ($\mu=1,2$), where $\Omega(\alpha^{\mu\mu},\beta^\mu)$ is given by Eq.~(3), of the main text, and the coupling weights are given by $\kappa_{ij}^\mu=-\sin(a_{i}^\mu-a_{j}^\mu+\beta^{\mu})$. We verify by directly inserting that $\phi_{i}^\mu$ and $\kappa_{ij}^\mu$ solve Eq.~\eqref{eq:KappaDGL_general}. For the given ansatz, Eq.~\eqref{eq:PhiDGL_general} reads
\begin{multline*}
	\Omega(\alpha^{11},\beta^1)+\omega^1\\ =\frac{1}{2}\cos(\alpha^{11}-\beta^{1})-\frac{1}{2}\Re\left(e^{-\mathrm{i}(2a_{i}^1+\alpha^{11}+\beta^{1})}Z_{2}(\mathbf{a}^1)\right)\\
	-\sigma^{12}\sin(\Delta\Omega t+\Delta\omega t + a_{i}^1-a_{i}^2+\alpha^{12}),
\end{multline*}
and
\begin{multline*}
	\Omega(\alpha^{22},\beta_2)+\omega^2\\ =\frac{1}{2}\cos(\alpha^{22}-\beta^{2})-\frac{1}{2}\Re\left(e^{-\mathrm{i}(2a_{i}^1+\alpha^{22}+\beta^{2})}Z_{2}(\mathbf{a}^2)\right)\\
	+\sigma^{21}\sin(\Delta\Omega t+\Delta\omega t+a_{i}^1-a_{i}^2-\alpha^{21}).
\end{multline*}
where $\Delta\Omega=\Omega(\alpha^{11},\beta^1)-\Omega(\alpha^{22},\beta^2)$ and $\Delta\omega=\omega^1-\omega^2$, respectively.
Thus, $\phi=(\phi^1,\phi^2)$ is a duplex equilibrium if
\begin{align*}
	\Delta\Omega + \Delta\omega = 0
\end{align*}
which is equivalent to
\begin{multline*}
	\Delta\Omega = \sigma^{12}\sin(a_{i}^1-a_{i}^2+\alpha^{12})+\\ 
	\sigma^{21}\sin(a_{i}^1-a_{i}^2-\alpha^{21})
\end{multline*}
for all $i=1,\dots,N$. Note that $\Delta\Omega$ is not necessarily zero even if the phase-lag parameters for both layers agree. They can still differ in the type of one-cluster state. The former equation can be written as
\begin{align}\label{eq:Double1Cl_Existence}
	\frac{\Delta\Omega}{C} = \sin(a_{i}^1-a_{i}^2+\nu)
\end{align}
with
\begin{align}
	\sin(\nu)&=\frac{1}{C}\left(\sigma^{12}\sin(\alpha^{12})-\sigma^{21}\sin(\alpha^{21})\right), \label{eq:sinnu_double1Cl}\\
	\cos(\nu)&=\frac{1}{C}\left(\sigma^{12}\cos(\alpha^{12})+\sigma^{21}\cos(\alpha^{21})\right), \nonumber
\end{align}
where
\begin{align*}
	C=\sqrt{(\sigma^{12})^2+(\sigma^{21})^2+2\sigma^{12}\sigma^{21}\cos(\alpha^{12}+\alpha^{21})}.
\end{align*}
Whenever $(\sigma^{12})^2+(\sigma^{21})^2+2\sigma^{12}\sigma^{21}\cos(\alpha^{12}+\alpha^{21})\ge 0$ and
\begin{align}\label{eq:sigmaduplexrel}
	(\sigma^{12})^2+(\sigma^{21})^2+2\sigma^{12}\sigma^{21}\cos(\alpha^{12}+\alpha^{21})\ge\Delta\Omega^2,
\end{align}
Eq.~(\ref{eq:Double1Cl_Existence}) has the two solutions $a_{i}^1-a_{i}^2=\arcsin(\Delta\Omega/C)-\nu$ and $a_{i}^1-a_{i}^2=\pi-\arcsin(\Delta\Omega/C)-\nu$. Considering the inverse function $\arcsin:[-1,1]\to[-\pi/2,\pi/2]$ applied to Eq.~(\ref{eq:sinnu_double1Cl}) determines $\nu$ to be either $\nu'$
or $\pi-\nu'$, where $\nu':=\arcsin(\sin(\nu))$ and $\sin(\nu)$ as given in (\ref{eq:sinnu_double1Cl}). The second equation for $\cos(\nu)$ then fixes $\nu$ to take one of the values.

The condition~\eqref{eq:sigmaduplexrel} is a relation between all parameters of the system which has to be fulfilled for the existence of duplex relative equilibria. Note that for any given inter-layer coupling $\sigma^{12}\ne 0$ and $\alpha^{12}+\alpha^{21}\ne\pm\pi/2$ or $\pm 3\pi/2$ there exists a minimum coupling weight $\sigma^{21}<\infty$ such that the lifted one-clusters exist. In case of unidirectional coupling, i.e., $\sigma^{12}=0$, the condition gives the minimum weight $\sigma^{21}\ge \Delta\Omega$.
\section{Multiplex networks and their decomposition}
In this section, we provide important tools and theorems to find the spectrum of multiplex networks.
\begin{thm}\label{thm:DeterminantBlockMatrixOnRing}
	Let $\mathcal{R}$ be a commutative subring of $\mathbb{C}^{N\times N}$ and let $M\in \mathcal{R}^{L\times L}$. Then,
	\begin{align*}
		{\det}_\mathbb{C}M = {\det}_\mathbb{C} \left({\det}_{\mathcal{R}} M\right). 
	\end{align*}
\end{thm}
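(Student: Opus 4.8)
The plan is to exploit the classical fact that a commuting family of complex matrices admits a common Schur triangularization. First I would invoke the result that there exists a single unitary $U\in\mathbb{C}^{N\times N}$ such that $U^{*}RU$ is upper triangular for \emph{every} $R\in\mathcal{R}$ simultaneously; this is available precisely because $\mathcal{R}$ is commutative (a commuting family over $\mathbb{C}$ has a common eigenvector, and one triangularizes by induction on $N$). For each $k=1,\dots,N$ I then define $\lambda_{k}(R):=\left(U^{*}RU\right)_{kk}$. Because the $(k,k)$ entry of a product of two upper triangular matrices is the product of their $(k,k)$ entries, and the diagonal is additive, each $\lambda_{k}\colon\mathcal{R}\to\mathbb{C}$ is a ring homomorphism. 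Commutativity is used twice here: to obtain the common triangularizing basis, and to make ${\det}_{\mathcal{R}}M\in\mathcal{R}$ well defined via the usual commutative-ring Leibniz formula.

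Next I would conjugate the block matrix. Writing $M=(M_{ij})_{i,j=1}^{L}$ with $M_{ij}\in\mathcal{R}$ and setting $\tilde U=\mathbb{I}_{L}\otimes U$, invertibility of $\tilde U$ gives ${\det}_{\mathbb{C}}M={\det}_{\mathbb{C}}(\tilde U^{*}M\tilde U)$, and the $(i,j)$ block of $\tilde U^{*}M\tilde U$ is the upper triangular matrix $U^{*}M_{ij}U$. Relabelling the $LN$ indices from block-then-within-block order $(i,k)$ to within-block-then-block order $(k,i)$ is a permutation similarity, so it preserves the determinant; under it the matrix becomes block upper triangular with $N$ diagonal blocks of size $L\times L$, the $k$-th diagonal block being $\bigl(\lambda_{k}(M_{ij})\bigr)_{i,j=1}^{L}$, since the strictly-below-$k$ contributions vanish by upper-triangularity of each $U^{*}M_{ij}U$. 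As the determinant of a block triangular matrix is the product of the determinants of its diagonal blocks,
\begin{align*}
	{\det}_{\mathbb{C}}M=\prod_{k=1}^{N}{\det}_{\mathbb{C}}\bigl(\lambda_{k}(M_{ij})\bigr)_{i,j=1}^{L}.
\end{align*}

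To finish, I would apply each ring homomorphism $\lambda_{k}$ to the expansion ${\det}_{\mathcal{R}}M=\sum_{\pi\in S_{L}}\operatorname{sgn}(\pi)\prod_{i=1}^{L}M_{i\pi(i)}$. Since $\lambda_{k}$ respects sums and products, $\lambda_{k}({\det}_{\mathcal{R}}M)={\det}_{\mathbb{C}}\bigl(\lambda_{k}(M_{ij})\bigr)_{i,j}$, which is exactly the $k$-th factor above. On the other hand ${\det}_{\mathcal{R}}M$ lies in $\mathcal{R}$ and is therefore also triangularized by $U$, so the scalars $\lambda_{k}({\det}_{\mathcal{R}}M)$ are the diagonal entries of $U^{*}({\det}_{\mathcal{R}}M)U$, whose product is precisely ${\det}_{\mathbb{C}}({\det}_{\mathcal{R}}M)$. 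Combining the two observations yields the identity.

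The main obstacle is the opening step: securing a single basis that simultaneously triangularizes the whole ring $\mathcal{R}$ (not merely the finitely many entries $M_{ij}$), and verifying cleanly that each diagonal functional $\lambda_{k}$ is multiplicative. Once that triangularizing basis is fixed, the permutation to block-triangular form and the homomorphism bookkeeping are routine.
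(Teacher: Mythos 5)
Your proof is correct, but it follows a genuinely different route from the one the paper relies on. The paper does not prove Theorem~1 itself; it cites Silvester's article on determinants of block matrices, whose argument is purely algebraic: induction on the block size $L$, using block row operations (Schur-complement--style elimination, legitimate because the entries commute) together with an adjugate/indeterminate trick --- replacing a possibly singular pivot block by $M_{11}+x\mathbb{I}$ and comparing polynomial identities in $x$ --- so that the result holds over an arbitrary field, indeed with $\mathbb{C}$ replaced by any commutative coefficient situation. Your argument is instead spectral: simultaneous unitary triangularization of the commuting family $\mathcal{R}$, the diagonal functionals $\lambda_k$ as ring homomorphisms, a permutation similarity to block upper triangular form with diagonal blocks $\bigl(\lambda_k(M_{ij})\bigr)_{i,j=1}^{L}$, and then pushing $\lambda_k$ through the Leibniz expansion of ${\det}_{\mathcal{R}}M$. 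This is sound as written; it is restricted to algebraically closed scalars (which is all the theorem claims), and it buys something Silvester's proof does not make visible: the $LN$ eigenvalues of $M$ organize into $N$ spectra of the $L\times L$ matrices $\bigl(\lambda_k(M_{ij})\bigr)$, which is exactly the eigenvalue-level decomposition the paper exploits afterwards (Proposition~2 and the duplex/triplex polynomial equations are the diagonalizable special case of your triangular picture). The obstacle you flag at the end is not a real one: simultaneous triangularization holds for an arbitrary commuting family over $\mathbb{C}$ (common eigenvector plus induction, exactly as you say), and alternatively you may triangularize only the finitely generated commutative subring spanned by the $L^{2}$ entries $M_{ij}$, which already contains ${\det}_{\mathcal{R}}M$, so the final step goes through unchanged; multiplicativity of each $\lambda_k$ is immediate from the triangular product rule you quote.
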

The proof can be found in Ref~\onlinecite{Silvester2000}. This rather abstract result allows for a very nice decomposition for pairwise commuting matrices and yields a useful tool to study the local dynamics in multiplex systems.
\begin{prop}\label{prop:DeterminantDecompDiagAble}
	Let $M\in \mathbb{C}^{N\times N}$ be a unitary diagonalizable matrix with $M=U D_M U^{H}$ where $U$, $U^H$ and $D_M$ are a unitary, its adjoint and a diagonal matrix, respectively. Let further $\mathcal{D}_M$ be the set of simultaneously diagonalizable matrices to $M$, i.e., the set of all matrices which commute pairwise and with $M$. Then,
	\begin{multline}\label{eq:DeterminantDecompDiagAble}
		\det \begin{pmatrix}
			A_{11} & \cdots & A_{1L}\\
			\vdots & \ddots & \vdots\\
			A_{L1} & \cdots & A_{LL}
		\end{pmatrix}=\\
		\det\left(\sum_{\sigma\in S_L}\left[\textrm{sgn}(\sigma)\prod_{\mu=1}^{L}D_{A_{\mu,\sigma(\mu)}}\right]\right)
	\end{multline}
	where $A_{\mu\nu}\in\mathcal{D}_M$ for $\mu,\nu=1,\dots,L$ and $S_L$ is the set of all permutations of the numbers $1,\dots,L$.
\end{prop}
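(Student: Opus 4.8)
The plan is to invoke Theorem~\ref{thm:DeterminantBlockMatrixOnRing} with the commutative ring $\mathcal{R}=\mathcal{D}_M$. First I would verify that $\mathcal{D}_M$ really is a commutative subring of $\mathbb{C}^{N\times N}$: by hypothesis every $A\in\mathcal{D}_M$ is simultaneously diagonalized by the \emph{same} unitary $U$ that diagonalizes $M$, i.e.\ $A=U D_A U^H$ with $D_A$ diagonal. Hence $\mathcal{D}_M$ contains the zero matrix and $\mathbb{I}_N$, is closed under addition and under multiplication (the inner factors $U^H U$ cancel and the diagonal parts multiply), and is commutative because diagonal matrices commute. Since the block matrix $(A_{\mu\nu})_{\mu,\nu=1}^{L}$ lies in $\mathcal{R}^{L\times L}$, Theorem~\ref{thm:DeterminantBlockMatrixOnRing} gives ${\det}_{\mathbb{C}}(A_{\mu\nu}) = {\det}_{\mathbb{C}}\bigl({\det}_{\mathcal{R}}(A_{\mu\nu})\bigr)$, thereby reducing the $NL\times NL$ determinant to the ordinary determinant of the single $N\times N$ matrix ${\det}_{\mathcal{R}}(A_{\mu\nu})\in\mathcal{R}$.

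Next I would evaluate the ring-valued determinant by the Leibniz formula, which is well defined over any commutative ring,
\begin{equation*}
{\det}_{\mathcal{R}}(A_{\mu\nu}) = \sum_{\sigma\in S_L}\mathrm{sgn}(\sigma)\prod_{\mu=1}^{L}A_{\mu,\sigma(\mu)}.
\end{equation*}
Using the common diagonalization $A_{\mu,\sigma(\mu)}=U D_{A_{\mu,\sigma(\mu)}}U^H$ together with $U^H U=\mathbb{I}_N$, each ordered product telescopes to $\prod_{\mu}A_{\mu,\sigma(\mu)} = U\bigl(\prod_{\mu}D_{A_{\mu,\sigma(\mu)}}\bigr)U^H$, where the order inside the bracket is immaterial as the diagonal factors commute. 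Pulling the common $U$ and $U^H$ out of the sum then yields
\begin{equation*}
{\det}_{\mathcal{R}}(A_{\mu\nu}) = U\Bigl(\sum_{\sigma\in S_L}\mathrm{sgn}(\sigma)\prod_{\mu=1}^{L}D_{A_{\mu,\sigma(\mu)}}\Bigr)U^H.
\end{equation*}

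Finally I would take the ordinary determinant of both sides and use its multiplicativity together with unitary invariance: since $\det_{\mathbb{C}}(U)\det_{\mathbb{C}}(U^H)=|\det_{\mathbb{C}} U|^2=1$, conjugation by $U$ leaves the determinant unchanged, so ${\det}_{\mathbb{C}}\bigl({\det}_{\mathcal{R}}(A_{\mu\nu})\bigr)$ equals the determinant of the bracketed $N\times N$ matrix, which is precisely the right-hand side of~\eqref{eq:DeterminantDecompDiagAble}. Chaining this equality with the identity from Theorem~\ref{thm:DeterminantBlockMatrixOnRing} completes the argument. The only genuinely delicate point is the very first step --- confirming that the hypothesis of simultaneous diagonalizability furnishes one common unitary $U$ for \emph{all} blocks and that the resulting family forms a commutative ring, so that Theorem~\ref{thm:DeterminantBlockMatrixOnRing} is applicable; once that is secured, the rest is bookkeeping with the Leibniz expansion and the telescoping of $U^H U$.
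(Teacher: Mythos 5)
Your proof is correct and takes essentially the same route as the paper's: both rest on Silvester's theorem for determinants over a commutative subring combined with simultaneous unitary diagonalization, differing only in the order of the two steps --- the paper first conjugates the block matrix by $\text{diag}(U,\dots,U)$ and $\text{diag}(U^H,\dots,U^H)$ to reduce to the commutative subring of diagonal matrices and then applies the theorem, whereas you apply the theorem directly to the subring $\mathcal{D}_M$ and diagonalize inside the Leibniz sum afterwards. This reordering is immaterial, and your explicit check that $\mathcal{D}_M$ is a commutative subring (so the theorem applies) merely makes precise a point the paper leaves implicit.
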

\begin{proof}
	Consider any $A,B \in \mathcal{D}_M$, then they are simultaneously diagonalizable with $M$ and hence $A=D_A U^H$ and $B=UD_B U^H$ with the same $U$. Thus, all $A_{\mu\nu}$ can be diagonalized with the same $U$. Since $U$ is unitary,\textit{i.e.} $(\det U)^2 = 1$, we find
	\begin{multline*}
		\det \begin{pmatrix}
			A_{11} & \cdots & A_{1L}\\
			\vdots & \ddots & \vdots\\
			A_{L1} & \cdots & A_{LL}
		\end{pmatrix}
		=\det \begin{pmatrix}
			D_{A_{11}} & \cdots & D_{A_{1L}}\\
			\vdots & \ddots & \vdots\\
			D_{A_{L1}} & \cdots & D_{A_{LL}}
		\end{pmatrix}
	\end{multline*}
	by applying the block diagonal matrices $\text{diag}(U,\cdots,U)$ and $\text{diag}(U^H,\cdots,U^H)$ from the left and right, respectively. The set of diagonal matrices with usual matrix multiplication and addition form a commutative subring of $\mathbb{C}^{N\times N}$. Applying Theorem~\ref{thm:DeterminantBlockMatrixOnRing} and using the well-known determinant representation of Leibniz, the expression~\eqref{eq:DeterminantDecompDiagAble} follows.
\end{proof}
\begin{rem}
	The set $\mathcal{D}_M$ consists of all matrices which commute with $M$ and all the other elements of $\mathcal{D}_M$. In particular, the identity matrix $\mathbb{I}_N\in\mathcal{D}_M$ for any $M\in\mathbb{C}^{N\times N}$.
\end{rem}
In the following, we apply the last result to a duplex and triplex system and connect the local dynamics on the one-layer network to the multiplex case. We specify our consideration by defining two special multiplex systems.
\begin{defn}\label{prop:ecompDiagAble}
	Suppose $A,B,C\in\mathbb{C}^{N\times N}$ and $m_{ij}\in\mathbb{C}$ ($i,j=1,\dots,3$). Then, the $2N\times2N$ block matrix
	\begin{align}\label{eq:DuplexAdj}
		M^{(2)} = \begin{pmatrix}
			A & m_{12}\mathbb{I}\\
			m_{21}\mathbb{I} & B
		\end{pmatrix} 
	\end{align}
	and the $3N\times3N$ block matrix
	\begin{align}\label{eq:TriplexAdj}
		M^{(3)} = \begin{pmatrix}
			A & m_{12}\mathbb{I} & m_{13}\mathbb{I}\\
			m_{21}\mathbb{I} & B & m_{23}\mathbb{I}\\
			m_{31}\mathbb{I} & m_{32}\mathbb{I} & C
		\end{pmatrix} 
	\end{align}
	are called (complex) duplex and triplex network, respectively.
\end{defn}
Suppose we know how to diagonalize the individual layer topologies. The next result shows how the eigenvalues of the individual layers are connected to eigenvalues of the multiplex system. This will be done for the duplex and triplex network. For the proof of our following statement, we provide two different proofs. The first approach makes use of Schur's decomposition~\cite{BOY04s,Liesen2015s} which will be used, later on, in order to derive the characteristic equations. In particular, any $m\times m$ matrix $M=\begin{pmatrix}A & B\\C & D\end{pmatrix}$ in the $2\times2$ block form can be written as
\begin{align}\label{eq:SchurComplement}
	M
	=\begin{pmatrix}\mathbb{I}_{p} & BD^{-1}\\
		0 & \mathbb{I}_{q}
	\end{pmatrix}\begin{pmatrix}A-BD^{-1}C & 0\\
		0 & D
	\end{pmatrix}\begin{pmatrix}\mathbb{I}_{p} & 0\\
		D^{-1}C & \mathbb{I}_{q}
	\end{pmatrix}.
\end{align}
With this, a simplified form of the determinant of a $2\times2$ block matrix $M=\begin{pmatrix}A & B\\C & D\end{pmatrix}$ is derived, namely
\begin{align}
	\det(M) & =\det(A-BD^{-1}C)\cdot\det(D),
\end{align}
An extension of the first approach to any number of layers in the network can be found by induction but is very technical, see~\cite{Powell2011s}. The second approach uses Proposition~\ref{prop:DeterminantDecompDiagAble} which allows for a straightforward extension to any number of layers in a multiplex network.
\begin{prop}\label{prop:EigenvaluePolynomsMultiplex}
	Suppose $A,B,C\in\mathbb{C}^{N\times N}$, they commute pairwise, and are diagonalizable with diagonal matrices $D_A,D_B,D_C$ and unitary matrix $U$. Then, the eigenvalues $\mu$ for the multiplex networks $M^{(2)}$ and $M^{(3)}$ can be found by solving the $N$ quadratic
	\begin{align}\label{eq:DuplexDecomp}
		\mu^2-\left((d_A)_i+(d_B)_i\right)\mu+(d_A)_i(d_B)_i-m_{12}m_{21} = 0
	\end{align}
	and cubic polynomial equations
	\begin{align}\label{eq:TriplexDecomp}
		\mu^3+a_{2,i}\mu^2+a_{1,i}\mu+a_{0,i}= 0,
	\end{align}
	respectively, with
	\begin{align*}
		a_{2,i} &= -\left((d_A)_i+(d_B)_i+(d_C)_i\right) \\
		a_{1,i} &= (d_A)_i(d_B)_i + (d_A)_i(d_C)_i + (d_B)_i(d_D)_i \\
		&- m_{12}m_{21} - m_{13}m_{31} - m_{23}m_{32} \\
		a_{0,i} &= m_{12}m_{21}(d_C)_i + m_{13}m_{31}(d_B)_i + m_{23}m_{32}(d_A)_i \\
		&- (d_A)_i(d_B)_i(d_C)_i - m_{12}m_{23}m_{31} -m_{13}m_{32}m_{21}
	\end{align*}
	and $(d_A)_i,(d_B)_i$, and $(d_C)_i$ the respective diagonal elements of $D_A, D_B$, and $D_C$. 
\end{prop}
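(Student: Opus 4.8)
The plan is to obtain the eigenvalues as the roots of the characteristic polynomials $\det(M^{(2)}-\mu\mathbb{I}_{2N})$ and $\det(M^{(3)}-\mu\mathbb{I}_{3N})$, and to show that each of these determinants factors into $N$ lower-degree polynomials indexed by the common eigenbasis of $A$, $B$, $C$. I would present the two routes flagged in the text. Both start from the observation that subtracting $\mu\mathbb{I}$ keeps every block inside the simultaneously diagonalizable set: since $A,B,C$ and $\mathbb{I}$ are all diagonalized by the single unitary $U$, the shifted diagonal blocks $A-\mu\mathbb{I}$, $B-\mu\mathbb{I}$, $C-\mu\mathbb{I}$ and the off-diagonal blocks $m_{kl}\mathbb{I}$ ($k\ne l$) lie in a common $\mathcal{D}_M$ and share the eigenbasis $U$.

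For the first (Schur-complement) route, specialized to the duplex, I would view $M^{(2)}-\mu\mathbb{I}_{2N}$ as a $2\times2$ block matrix and apply the identity $\det(M)=\det(A-BD^{-1}C)\det(D)$ following from \eqref{eq:SchurComplement} with $D=B-\mu\mathbb{I}$. Because all blocks commute and share the eigenbasis $U$, the Schur complement $(A-\mu\mathbb{I})-m_{12}m_{21}(B-\mu\mathbb{I})^{-1}$ is again diagonalized by $U$, so the determinant collapses to the product over $i$ of $\big[((d_A)_i-\mu)((d_B)_i-\mu)-m_{12}m_{21}\big]$, whose vanishing gives the $N$ quadratics \eqref{eq:DuplexDecomp}. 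The invertibility of $B-\mu\mathbb{I}$ needed here fails only for finitely many $\mu$, so I would close this step by a continuity argument, or equivalently by taking the Schur complement with respect to the other diagonal block.

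For the second route, which extends uniformly to $L=2,3$ and beyond, I would apply Proposition~\ref{prop:DeterminantDecompDiagAble} directly to $M^{(L)}-\mu\mathbb{I}_{LN}$. Since every block lies in $\mathcal{D}_M$, the proposition reduces the $LN\times LN$ determinant to $\det\big(\sum_{\sigma\in S_L}\textrm{sgn}(\sigma)\prod_{k=1}^{L}D_{A_{k,\sigma(k)}}\big)$, that is, to the determinant of the $L\times L$ matrix whose diagonal blocks are $D_A-\mu\mathbb{I}$, $D_B-\mu\mathbb{I}$, $D_C-\mu\mathbb{I}$ and whose off-diagonal blocks are the $m_{kl}\mathbb{I}$, evaluated in the commutative ring of diagonal matrices. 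As this matrix is diagonal entry by entry, its determinant is the entrywise product over $i$, and the $i$-th factor is precisely the characteristic polynomial of the numerical $L\times L$ block with diagonal $(d_A)_i,(d_B)_i,(d_C)_i$. For $L=2$ this reproduces \eqref{eq:DuplexDecomp}; for $L=3$ it yields \eqref{eq:TriplexDecomp}.

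The only genuine work is then the $L=3$ bookkeeping: expanding by the Leibniz rule the $3\times3$ determinant with diagonal $(d_A)_i-\mu$, $(d_B)_i-\mu$, $(d_C)_i-\mu$ and collecting powers of $\mu$ to read off $a_{2,i}$, $a_{1,i}$, $a_{0,i}$. I expect this algebra to be the main, though routine, obstacle; the conceptual content is carried entirely by Proposition~\ref{prop:DeterminantDecompDiagAble}, which guarantees that simultaneous diagonalizability is exactly what makes the block determinant decouple into $N$ independent scalar problems. I would finally remark that the identical argument produces analogous degree-$L$ polynomials for any number of layers, so the general multiplex case requires no new idea.
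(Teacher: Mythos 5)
Your proposal is correct and follows essentially the same two routes as the paper's own proof: the Schur-complement factorization for the duplex and Proposition~\ref{prop:DeterminantDecompDiagAble} applied to $M^{(L)}-\mu\mathbb{I}_{LN}$ in general, with the same entrywise reduction to the scalar characteristic polynomials of the $L\times L$ matrices built from $(d_A)_i,(d_B)_i,(d_C)_i$ and the $m_{kl}$. The only (harmless) deviation is that you take the Schur complement with respect to $B-\mu\mathbb{I}_N$ and patch its invertibility by polynomial continuity in $\mu$, whereas the paper first swaps the block rows/columns so as to invert the constant block $m_{21}\mathbb{I}_N$ (hence its standing assumption that this off-diagonal weight is nonzero); both yield the same determinant identity.
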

\begin{proof}
	Since $A,B,C$ are diagonalizable and commute, Proposition~\ref{prop:DeterminantDecompDiagAble} can be applied to both matrices $M^{(2)}, M^{(3)}$. Anyhow, for the matrix $M^{(2)}$ we will provide another proof using Schur's decomposition.
	
	The determinant is an antisymmetric multi-linear form. Thus, we can write
	\begin{multline*}
		\det\left(M^{(2)}-\mu\mathbb{I}_{2N}\right) =
		\det\begin{pmatrix}A-\mu\mathbb{I}_N & m_{12}\cdot\mathbb{I}_N\\
			m_{21}\cdot\mathbb{I}_N & B-\mu\mathbb{I}_Nm 
		\end{pmatrix}\\
		= (-1)^N\det\begin{pmatrix}m_{12}\cdot\mathbb{I}_N & A-\mu\mathbb{I}_N\\
			B-\mu\mathbb{I}_N & m_{21}\cdot\mathbb{I}_N.
		\end{pmatrix}
	\end{multline*}
	By assumption $A$ and $B$ are both diagonalizable with respect to the unitary transformation matrix $U$, and so are $A-\mu\mathbb{I}$ and $B-\mu\mathbb{I}$. This allows us to write
	\begin{multline*}
		\det\begin{pmatrix}
			m_{12}\cdot\mathbb{I}_N & A-\mu\mathbb{I}_N\\
			B-\mu\mathbb{I}_N & m_{21}\cdot\mathbb{I}_N.
		\end{pmatrix}\\
		= \det
		\begin{pmatrix}
			m_{12}\mathbb{I}_N & D_A-\mu\mathbb{I}_N\\
			D_B-\mu\mathbb{I}_N & m_{21}\mathbb{I}_N
		\end{pmatrix}
	\end{multline*}
	we apply the block diagonal matrices $\text{diag}(U,\cdots,U)$ and $\text{diag}(U^H,\cdots,U^H)$ from the left and right, respectively. Now, using Schur's decomposition~\eqref{eq:SchurComplement} the determinant can written as
	\begin{multline*}
		\det
		\begin{pmatrix}
			m_{12}\mathbb{I}_N & D_A-\mu\mathbb{I}_N\\
			D_B-\mu\mathbb{I}_N & m_{21}\mathbb{I}_N
		\end{pmatrix}\\
		=n^N\det\left(m-\frac{1}{n}\left(D_A-\mu\mathbb{I}_N\right)\left(D_B-\mu\mathbb{I}_N\right)\right)\\
		= \det\left(m_{12}m_{21}\mathbb{I}_N-\left(D_A-\mu\mathbb{I}_N\right)\left(D_B-\mu\mathbb{I}_N\right)\right).
	\end{multline*}
	The last expression together with $\det\left(M^{(2)}-\mu\mathbb{I}_{2N}\right)=0$ yields the $N$ quadratic equations~\eqref{eq:DuplexDecomp}.
	
	Using that $(A-\mu\mathbb{I}),(B-\mu\mathbb{I}),(C-\mu\mathbb{I})$ commute pairwise, Proposition~\ref{prop:DeterminantDecompDiagAble} can be applied.  We find
	\begin{multline*}
		\det\left(M^{(3)}-\mu\mathbb{I}_{3N}\right)=\\
		\det\left((D_A-\mu\mathbb{I}_N)\left[(D_B-\mu\mathbb{I}_N)(D_C-\mu\mathbb{I}_N)-m_{23}m_{32}\mathbb{I}_N\right]\right.\\
		\left. -m_{21}\left[m_{12}(D_C-\mu\mathbb{I}_N)-m_{13}m_{32}\mathbb{I}_N\right]\right.\\
		\left. +m_{31}\left[m_{12}m_{23}\mathbb{I}_N - m_{13}(D_B-\mu\mathbb{I}_N)\right]\right)
	\end{multline*}
	The last expression together with $\det\left(M^{(3)}-\mu\mathbb{I}_{3N}\right)=0$ yields the $N$ cubic equations~\eqref{eq:TriplexDecomp}.
\end{proof}
Let us briefly discuss some special cases for both the duplex and triplex network. Consider a duplex network with master and slave layer, \textit{i.e.}, either $m_{12}=0$ or $m_{21}=0$. Then, the quadratic equations~\eqref{eq:DuplexDecomp} yield 
\begin{align}\label{eq:DuplexDecMasterSlave}
	\left(\mu-(d_A)_i\right)\left(\mu-(d_B)_i\right) = 0.
\end{align}
As shown in Proposition~\ref{prop:DeterminantDecompDiagAble}, the eigenvalues for special triplex networks can be found by solving cubic equations. For the solution even closed forms exist. Despite this, the explicit form of the solutions is rather tedious, in general. However, if we consider $A=B=C$ and a ring-like inter-layer connection between the networks, \textit{i.e.}, $m_{12}=m_{23}=m_{31}=0$, then equation~\eqref{eq:TriplexDecomp} has the following solutions for all $j=1,\dots,N$
\begin{align*}
	\mu_1 &= -(d_A)_j+\left(m_{13}m_{32}m_{21}\right)^{1/3},\\
	\mu_2 &= -(d_A)_j+\frac12 \mathrm{i}(\mathrm{i}+\sqrt{3})\left(m_{13}m_{32}m_{21}\right)^{1/3},\\
	\mu_3 &= -(d_A)_j-\frac12 (\mathrm{i}+\sqrt{3})\left(m_{13}m_{32}m_{21}\right)^{1/3},\\
\end{align*}
where $\mathrm{i}$ denotes the imaginary unit. In analogy to equation~\eqref{eq:DuplexDecMasterSlave}, a decoupling for the eigenvalues can be found. Consider three pairwise commuting matrices $A,B,C$, and the structure between the layers is a directed chain, \text{i.e.}, $m_{12}=m_{13}=m_{31}=m_{23}=0$ , then
\begin{align}\label{eq:TriplexDecMasterSlave}
	\left(\mu-(d_A)_i\right)\left(\mu-(d_B)_i\right)\left(\mu-(d_C)_i\right) = 0.
\end{align}
In the following three section, different applications will be briefly discussed.
\subsection{The master stability approach for multiplex networks}\label{sec:MSA_MltPlx}
In Ref.~\cite{Tang2019s}, the master stability function for dynamical systems on multiplex networks was introduced. In their article, the authors considered a systems similar to~\eqref{eq:CoupledSysDiff} but with Laplacian instead of adjacency matrices. Since they assumed the coupling function ${H}$ and ${G}$ to be linear, both systems are equivalent. Consider now the synchronous solution which solves $\dot{\mathbf{s}}=f(\mathbf{s})$. The master stability function corresponding to~\eqref{eq:CoupledSysDiff} can be derived from the following variational equations
\begin{align*}
	\dot{\bm{\xi}}=\left[\mathbb{I}_{NL}\otimes Df(\mathbf{s})-\sigma\left(\mathcal{L}^{\text{intra}}\otimes H\right)- \rho \left(\mathcal{L}^{\text{inter}}\otimes G\right)\right]\bm{\xi},
\end{align*} 
all necessary details can be found in Ref.~\cite{Tang2019s}. Here, $\bm{\xi}=\mathbf{x}-\mathbb{I}_L\otimes\mathbb{I}_N\otimes\mathbf{s}$ and $\mathbf{x}\in\mathbb{C}^{L\cdot N\cdot d}$ is the system state vector where all individual nodal states are stacked on each other ordered by the layer and node index. Further, the intra-layer Laplacian is defined is defined as
\begin{align*}
	\mathcal{L}^{\text{intra}}=\bigoplus_{l=1}^L L^{l}=\begin{pmatrix}
		L^{1}& &\\
		& \ddots&\\
		& & L^{L}
	\end{pmatrix}
\end{align*}
where
\begin{align*}
	L^{k}=\begin{pmatrix}
		\sum_{j=1}^N a^{k}_{1j} & &\\
		& \ddots &\\
		& & \sum_{j=1}^N a^{k}_{Nj}
	\end{pmatrix}-A^{k}.
\end{align*}
The inter-layer Laplacian is defined as $\mathcal{L}^{\text{inter}}=L^{I}\otimes \mathbb{I}_N$ where
\begin{align*}
	L^{I}=\begin{pmatrix}
		\sum_{l=1}^L m^{1l} & &\\
		& \ddots &\\
		& & \sum_{l=1}^L m^{Ll}
	\end{pmatrix}-M,
\end{align*}
where $m^{kl}$ with $m^{ll}=0$ are the entries of the $L\times L$ matrix $M$.
In Ref~\cite{Tang2019s}, it is shown that if $\mathcal{L}^{\text{intra}}$ and $\mathcal{L}^{\text{inter}}$ commute, a master stability equation for system~\eqref{eq:CoupledSysDiff} can be found which reads
\begin{align*}
	\dot{\mathbf{y}}=\left[Df(\mathbf{s})-\alpha H-\beta G\right]\mathbf{y},
\end{align*}
where $\alpha=\sigma\lambda$, $\beta=\rho\mu$, $\lambda$ and $\mu$ are the (complex) eigenvalues of $\mathcal{L}^{\text{intra}}$ and $\mathcal{L}^{\text{inter}}$, respectively. Note that if $H=G$ the master stability equation can be reduced to
\begin{align}\label{eq:MSEComposite}
	\dot{\mathbf{y}}=\left[Df(\mathbf{s})-\gamma H\right]\mathbf{y},
\end{align}
with $\gamma=\alpha+\beta$ and $\mathbf{y}\in\mathbb{C}^d$. Equation~\eqref{eq:MSEComposite} is called the master stability equation for the composite system where a single supra-Laplacian matrix~$\sigma\mathcal{L}^{\text{intra}}+\rho\mathcal{L}^{\text{inter}}$ described the network topology~\cite{DeDomenico2013s,Kivela2014s}.

Direct evaluation shows that $\left[\mathcal{L}^{\text{intra}},\mathcal{L}^{\text{inter}}\right]=0$ is equivalent to $m^{kl}L^{l}-m^{lk}L^{k}=0$ for all $l,k=1,\dots,L$. Thus, to require $\left[\mathcal{L}^{\text{intra}},\mathcal{L}^{\text{inter}}\right]=0$ yields a linear dependence between the individual layer topologies. Using Proposition~\ref{prop:ecompDiagAble} for composite system, the master stability function can be used under much milder conditions.
\begin{prop}\label{prop:MSE_general}
	Let us consider the multiplex dynamical system~\eqref{eq:CoupledSysDiff}, with linear function $H=G$. Suppose that further all $L^{l}$, $l=1,\dots,L$, commute pairwise. Then, the master stability equation is given by
	\begin{align}\label{eq:MSEComposite_mild}
		\dot{\mathbf{y}}=\left[Df(\mathbf{s})-\mu H\right]\mathbf{y},
	\end{align}
	and $\mu=\mu(\lambda^{1}_i,\dots,\lambda^{l}_i)$ being a non-linear function, mapping the $i$th eigenvalues $\lambda^{l}_i$ ($i=1,\dots,N$) of the layer topologies $L^{l}$, to the "master parameter" $\mu$ in~\eqref{eq:MSEComposite_mild}. The non linear mapping is given as the formal solution to the $L$th order polynomial equation
	\begin{align}
		\det\left(\sum_{\sigma\in S_L}\left[\textrm{sgn}(\sigma)\prod_{l=1}^{L}D_{\left(\mathcal{L}^{\text{supra}}_{l\sigma(l)}-\mu\mathbb{I}_N\delta_{l\sigma(l)}\right)}\right]\right)=0,
	\end{align}
	where $\mathcal{L}^{\text{supra}}=\sigma\mathcal{L}^{\text{intra}}+\rho\mathcal{L}^{\text{inter}}$ is a $L\times L$ block matrix divided into $N\times N$ matrices which we individually refer to with $\mathcal{L}^{\text{supra}}_{kl}$, $k,l=1,\dots,L$, and $\delta_{kl}$ is the Kronecker symbol.
\end{prop}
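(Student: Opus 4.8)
The plan is to start from the variational equation of the multiplex system and exploit the fact that, once $H=G$, the two coupling terms merge into a single supra-Laplacian acting through one Kronecker product. Writing the variational dynamics as $\dot{\bm{\xi}}=[\mathbb{I}_{NL}\otimes Df(\mathbf{s})-\mathcal{L}^{\text{supra}}\otimes H]\bm{\xi}$ with $\mathcal{L}^{\text{supra}}=\sigma\mathcal{L}^{\text{intra}}+\rho\mathcal{L}^{\text{inter}}$, the entire problem reduces to understanding the spectral structure of the $NL\times NL$ block matrix $\mathcal{L}^{\text{supra}}$ and feeding its eigenvalues into the scalar-parametrized equation \eqref{eq:MSEComposite_mild}.

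First I would observe that every $N\times N$ block of $\mathcal{L}^{\text{supra}}$ lies in a single commutative set. The diagonal blocks are $\sigma L^{l}+\rho(L^{I})_{ll}\mathbb{I}_N$ and the off-diagonal blocks are the scalar multiples $\rho(L^{I})_{kl}\mathbb{I}_N$; since the $L^{l}$ commute pairwise by hypothesis and the identity commutes with everything, all blocks $\mathcal{L}^{\text{supra}}_{kl}-\mu\mathbb{I}_N\delta_{kl}$ belong to $\mathcal{D}_M$ for a common unitary $U$ that simultaneously diagonalizes the $L^{l}$. This is exactly the hypothesis under which Proposition~\ref{prop:DeterminantDecompDiagAble} (which in turn rests on Theorem~\ref{thm:DeterminantBlockMatrixOnRing}) applies. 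Applying it to $\det(\mathcal{L}^{\text{supra}}-\mu\mathbb{I}_{NL})$ collapses the $NL\times NL$ determinant to the $\mathbb{C}$-determinant of the single diagonal $N\times N$ matrix $\sum_{\sigma\in S_L}\textrm{sgn}(\sigma)\prod_{l}D_{(\mathcal{L}^{\text{supra}}_{l\sigma(l)}-\mu\mathbb{I}_N\delta_{l\sigma(l)})}$, which is precisely the stated characteristic equation. Because that matrix is diagonal, its determinant is the product of its $N$ diagonal entries, so the equation factors into $N$ independent degree-$L$ polynomials in $\mu$, one per node index $i$; the roots are the master parameters $\mu=\mu(\lambda^{1}_i,\dots,\lambda^{L}_i)$, each a nonlinear function of the layer eigenvalues.

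Second, to extract the reduced master stability equation itself, I would conjugate $\mathcal{L}^{\text{supra}}$ by $\mathrm{diag}(U^H,\dots,U^H)$ so that every block becomes diagonal, and then reorder the $(i,l)$ indices so that $\mathcal{L}^{\text{supra}}$ becomes block-diagonal with $N$ blocks $\mathcal{B}_i$ of size $L\times L$, the $i$th block carrying the $i$th diagonal entries of all the original blocks. The characteristic polynomial of $\mathcal{B}_i$ is exactly the $i$th degree-$L$ factor obtained above, so each eigenvalue $\mu$ of $\mathcal{L}^{\text{supra}}$ is a root of one of these polynomials. Diagonalizing each $\mathcal{B}_i$ and using the mixed-product property of the Kronecker product then decouples $\dot{\bm{\xi}}$ into $NL$ copies of $\dot{\mathbf{y}}=[Df(\mathbf{s})-\mu H]\mathbf{y}$, giving \eqref{eq:MSEComposite_mild}.

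The main obstacle I anticipate is this final diagonalization step. Commutativity of the $L^{l}$ guarantees that the $N\times N$ blocks are simultaneously diagonalizable, but it says nothing about the reduced $L\times L$ blocks $\mathcal{B}_i$, whose off-diagonal structure is inherited from the inter-layer matrix $M$. If $M$ is not normal, the $\mathcal{B}_i$ may fail to be diagonalizable, in which case the clean decoupling into \eqref{eq:MSEComposite_mild} must be read through a block-triangular (Jordan) form with $\mu$ interpreted via the generalized eigenstructure. The polynomial characterization of $\mu$, however, holds unconditionally, so the essential content is unaffected: commutativity of the intra-layer Laplacians alone suffices, in place of the much stronger condition $[\mathcal{L}^{\text{intra}},\mathcal{L}^{\text{inter}}]=0$ of Ref.~\cite{Tang2019s}, to reduce the stability analysis to a single scalar master parameter determined by a degree-$L$ polynomial.
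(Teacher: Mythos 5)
Your proposal is correct and takes essentially the same route as the paper's proof: write the variational equation with the single supra-Laplacian term $\mathcal{L}^{\text{supra}}\otimes H$ (using $H=G$), observe that all $N\times N$ blocks commute because the off-diagonal blocks of $\mathcal{L}^{\text{inter}}$ are scalar multiples of $\mathbb{I}_N$ while the $L^{l}$ commute pairwise by hypothesis, and apply Proposition~\ref{prop:DeterminantDecompDiagAble} to $\mathcal{L}^{\text{supra}}-\mu\mathbb{I}_{NL}$ to obtain the stated degree-$L$ polynomial characterization of $\mu$. Your closing caveat---that the reduced $L\times L$ blocks obtained after reordering indices need not be diagonalizable when the inter-layer matrix is non-normal, so the decoupling into \eqref{eq:MSEComposite_mild} should then be read through a Jordan (block-triangular) form while the polynomial characterization of $\mu$ holds unconditionally---is a legitimate refinement that the paper's terse proof passes over silently.
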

\begin{proof}
	Using $H=G$, the variation equation for~\eqref{eq:CoupledSysDiff} on the synchronous solution $\mathbf{s}(t)$ is given by
	\begin{align*}
		\dot{\bm{\xi}}=\left[\mathbb{I}_{NL}\otimes Df(\mathbf{s})-\mathcal{L}^{\text{supra}}\otimes H\right]\bm{\xi},
	\end{align*}
	By assumption all $L^{l}$, $l=1,\dots,L$, commute pairwise and $\mathcal{L}^{\text{inter}}$ is a $L\times L$ block matrix consisting of $N\times N$ identity matrices multiplied by scalar. Hence, Proposition~\ref{prop:DeterminantDecompDiagAble} can be applied to $\left(\mathcal{L}^{\text{supra}}-\mu\mathbb{I}_{L\cdot N}\right)$ in order to diagonalize $\mathcal{L}^{\text{supra}}$.
\end{proof}
With this Proposition, we have a very powerful tool in order to investigate not only the influence of the multiplex structure network on the stability of the synchronous state but also the impact of different layer topologies which are not necessarily linearly dependent. As an example, we consider a duplex systems with $[L^{1},L^{2}]=0$ and
\begin{align*}
	\mathcal{L}^{\text{supra}}=\begin{pmatrix}
		\sigma L^{1}+\rho m^{12}\mathbb{I} & -\rho m^{12}\mathbb{I}\\
		-\rho m^{21}\mathbb{I} & \sigma L^{2}+\rho m^{21}\mathbb{I}
	\end{pmatrix}.
\end{align*}
Knowing the eigenvalues for $L^{1}$ and $L^{2}$, the master parameter $\mu$ is determined using Eq.~\eqref{eq:DuplexDecomp}. The matrices $L^{1}$ and $L^{2}$ are Laplacian matrices which have at least one zero eigenvalue, corresponding to the so-called Goldstone mode $\hat{1}=(1,\dots N-\text{times}\cdots,1)^T$. As a result there are two parameters $\mu=0$ and $\mu=\rho(m^{12}+m^{21})$. The first value corresponds to the Goldstone mode $\begin{pmatrix}
\hat{1},
\hat{1}
\end{pmatrix}^T$. The second parameter is exclusively induced by the duplex structure and completely independent from the individual layer topologies. Further, we find that in case of $m^{12}=-m^{21}$ another zero parameter $\mu$ exists which corresponds to the eigenmode $\begin{pmatrix}
\hat{1},
-\hat{1}
\end{pmatrix}^T$. The other eigenvalues of the supra-Laplacian matrix $\mathcal{L}^{\text{supra}}$ are given by the non-linear mappings
\begin{multline}\label{eq:masterParamDuplex}
	\mu(\lambda^{1},\lambda^{2}) = \frac{\sigma(\lambda^{1}+\lambda^{2})+\rho(m^{12}+m^{21})}{2}\\
	\pm\frac12\sqrt{
		\begin{aligned}
			\left(\sigma(\lambda^{1}-\lambda^{2})+\rho(m^{12}-m^{21})\right)^2+ 4\rho^2m^{12}m^{21}
		\end{aligned}
	}
\end{multline}
for which we have formally solved equation~\eqref{eq:DuplexDecomp} with respect to $\mu$. Let us consider two special cases. 

First, we assume that there is no connection from the second to the first layer. We have a master-slave set-up which means that $m^{12}=0$. With this, the master parameter is $\mu(\lambda^{1},\lambda^{2}) =\sigma\lambda^{1}$ and $\mu(\lambda^{1},\lambda^{2}) = \sigma\lambda^{2}+\rho m^{21}$. Remarkably, in this set-up the stability of a synchronous state in a duplex network is reduced to the pure one-layer system. The stability in the duplex system is determined by the spectrum of the individual layer topologies where only in the second layer the spectrum is shifted due to the interaction.

The second case starts from the consideration in~\cite{Tang2019s}. In particular, we consider $\left[\mathcal{L}^{\text{intra}},\mathcal{L}^{\text{inter}}\right]=0$ which leads to a pairwise linear dependence of all individual layer topologies, and hence $\lambda_i^{l}=\lambda_i$ for all $l=1,\dots,L$ and $i=1,\dots,N$ with $\lambda_i^l\in\mathbb{C}$. Taking this into account, the equation for the master parameter yields $\mu(\lambda^{1},\lambda^{2}) =\sigma\lambda^{1}+\rho\left(m^{12}+m^{21}\right)$ and $\mu(\lambda^{1},\lambda^{2}) =\sigma\lambda^{1}$. In order to see that the master parameter agrees with the set for the parameter $\gamma$ in ~\eqref{eq:MSEComposite}, we determine the eigenvalues of $\mathcal{L}^{\text{intra}}$ and $\mathcal{L}^{\text{inter}}$ individually. Since $L^{(1)}$ and $L^{(2)}$ are linearly dependent, the set of eigenvalues for $\mathcal{L}^{\text{intra}}$ consists of the eigenvalues of $L^{1}$ with double multiplicity. Using Proposition~\ref{prop:EigenvaluePolynomsMultiplex}, we find that $\mathcal{L}^{\text{inter}}$ has eigenvalues $0$ and $(m^{12}+m^{21})$ each with multiplicity $N$. Since both Laplacian matrices commute, there exists a common set of eigenvectors. We find $\gamma=\rho\lambda^{(1)}$ for the eigenvector $(1,1)^T\otimes v$ and $\gamma=\rho\lambda^{(1)}+\sigma(m_{12}+m_{21})$ for the eigenvector $(m_{12},-m_{21})^T\otimes v$ where $L^{1}v=\lambda^{1}v$. Here, again everything boils down to a pure one-layer set-up with an additional shift.
\subsection{Analytic treatment of diffusive dynamics on multiplex networks}
In the previous section we considered the dynamics of linear systems as they are given by the variational equation~\eqref{eq:MSEComposite}. In the context of diffusive systems on complex networks, recently linear diffusive processes were considered in order to study the dynamics on social as well as transport networks~\cite{Barthelemy2011s,Gomez2013s}. Compared with equation~\eqref{eq:CoupledSysDiff} in~\cite{Gomez2013s}, the authors investigate a duplex system ($L=2$) with $f=0$, $\sigma=D_k$ ($k=1,2$), $H,G=\mathbf{I}$, and $\rho m^{kl}=D_x$. Additionally, they considere weighted networks for which instead of $a_{ij}^k\in\{0,1\}$ coupling weights $w_{ij}^{k}$ ($k=1,2$) were taken. Note that the former results on multiplex matrices still hold true for weighted connections.

Let us assume that the super-Laplacian matrix is given by
\begin{align*}
	\mathcal{L}^{\text{supra}}=\begin{pmatrix}
		D_1 L^{1}+D_x\mathbb{I} & -D_x\mathbb{I}\\
		-D_x \mathbb{I} & D_2 L^{2}+D_x\mathbb{I}
	\end{pmatrix}.
\end{align*}
with $[L^{1},L^{2}]=0$. Due to the structure of the supra-Laplacian we are allowed to apply~Proposition~\ref{prop:MSE_general}. In accordance with~\cite{Gomez2013s} and our former findings in Sec.~\ref{sec:MSA_MltPlx}, we have the two eigenvalues $\mu=0$ and $\mu=2D_x$ corresponding to the Goldstone mode $(\hat{1},\hat{1})^T$ and the vector $(\hat{1},-\hat{1})^T$, respectively. The other eigenvalues are given as solution to the equations~\eqref{eq:DuplexDecomp} and read
\begin{multline*}
	\mu_i = \frac{D_1\lambda_i^{1}+D_2\lambda_i^{2}+2D_x}{2}\\
	\pm\frac12\sqrt{
		\begin{aligned}
			\left(D_1\lambda_i^{1}-D_2\lambda_i^{2}\right)^2+ 4(D_x)^2
		\end{aligned}
	},
\end{multline*}
where $\lambda_i^1$ and $\lambda_i^2$ are the eigenvalues of the Laplacian matrices $L^{1}$ and $L^{2}$, respectively.
\subsection{Regions of synchronization in adaptive multiplex networks}
For an arbitrary duplex equilibrium of the form $\phi_i^{\mu}=\Omega t+a_i^{\mu}$  with $a_{k}^1=(0,\frac{2\pi}{N}k,\dots,(N-1)\frac{2\pi}{N}k)^{T}$ and $a_{k}^2=\mathbf{a}_{k}^1-\Delta a$ we start with the linearized system~(4) of the main text.
This can also be written in the block matrix form
\begin{align*}
	\begin{pmatrix}\dot{\delta\phi^1}\\
		\dot{\delta\phi^2} \\
		\dot{\delta\kappa^1} \\
		\dot{\delta\kappa^2}
	\end{pmatrix} & =\begin{pmatrix}A_1 & m_1\mathbb{I}_N & B_1 & 0\\
		m_2\mathbb{I}_N & A_2 & 0 & B_2\\
		C_1 & 0 & -\epsilon\mathbb{I}_{N^{2}} & 0\\
		0 & C_2 & 0 & -\epsilon\mathbb{I}_{N^{2}}
	\end{pmatrix}\begin{pmatrix}\delta\phi^1\\
		\delta\phi^2\\
		\delta\kappa^1\\
		\delta\kappa^2
	\end{pmatrix}
\end{align*}
with $\left(\delta\phi^\mu,\delta\kappa^\mu\right)^{T}=\left(\delta\phi_{1}^\mu,\dots,\delta\phi_{N}^\mu,\delta\kappa_{11}^\mu,\dots,\delta\kappa_{1N}^\mu,\right.$ $\left.\delta\kappa_{21}^\mu,\dots,\delta\kappa_{NN}^\mu\right)^T$, the matrices $A^{\mu}$, $B^{\mu}$, and $C^{\mu}$ follow from system~(4) of the main text, and $m_1, m_2 \in \mathbb{R}$. With the help of Schur's decomposition the characteristic equation for the linearized system takes the form
\begin{widetext}
	\begin{align}
		(\lambda+\epsilon)^{2(N^2-N)} &= 0\nonumber\\
		\det\begin{pmatrix}
			\left(\lambda\mathbb{I}_N-A_1\right)(\lambda+\epsilon)-B_1C_1 & -(\lambda+\epsilon)m_1\mathbb{I}_N \\
			-(\lambda+\epsilon)m_2\mathbb{I}_N & \left(\lambda\mathbb{I}_N-A_2\right)(\lambda+\epsilon)-B_2C_2 \\
		\end{pmatrix} &= 0. \label{eq:AKS_DublexSchur}
	\end{align}
\end{widetext}
The second equation has the block matrix form which is required from Proposition~\ref{prop:EigenvaluePolynomsMultiplex}. All blocks can be diagonalized and commute since they all possess a cyclic structure; compare Lemma~4.1 of~\cite{BER19s}. Thus, we are allowed to apply Proposition~\ref{prop:EigenvaluePolynomsMultiplex} which we use in order to diagonalize the matrix in Eq.~\eqref{eq:AKS_DublexSchur}. For the diagonalized matrix we find the following equations for the diagonal elements $\mu_i$
\begin{widetext}
	\begin{align}
		(\lambda+\epsilon)^2m_1 m_2 - (p_{i}^1(\lambda;\alpha^{11},\beta^1,\alpha^{12},\sigma^{12})-\mu_i)(p_{i}^2(\lambda;\alpha^{22},\beta^2,\alpha^{21},\sigma^{21})-\mu_i) = 0 \label{eq:EigenvalueMu}
	\end{align}
\end{widetext}
where $i=1,\dots,N$, $p_{i}^\mu(\lambda;\alpha^{\mu\mu},\beta^\mu)$ is a second order polynomial in $\lambda$ which depends continuously on $\alpha$ and $\beta$ as well as functionally on the type of the one-cluster state. For every $i\in\{1,\dots,N\}$, these equations will give us two eigenvalues $\mu_{i,1}$ and $\mu_{i,2}$ for the matrix in Eq.~\eqref{eq:AKS_DublexSchur} depending on $\lambda$ and the system parameters. Thus, we can write Eq.~\eqref{eq:EigenvalueMu} as
\begin{align*}
	\left(\mu_i-\mu_{i,1}(\lambda;\bm{\alpha},\bm{\beta},\bm{\sigma})\right)\left(\mu_i-\mu_{i,2}(\lambda;\bm{\alpha},\bm{\beta},\bm{\sigma})\right)=0
\end{align*}
where $\bm{\alpha},\bm{\beta},\bm{\sigma}$ represent all system parameter chosen for (\ref{eq:PhiDGL_general})\textendash (\ref{eq:KappaDGL_general}). In order to find the eigenvalue $\lambda$ of the linearized system~\eqref{eq:AKS_DublexSchur} one of the eigenvalues $\mu$ has to vanish. This means that we have to find $\lambda$ such that Eq.~\eqref{eq:EigenvalueMu} equals
\begin{align*}
	\mu_i\left(\mu_i-\mu_{i,2}(\lambda;\bm{\alpha},\bm{\beta},\bm{\sigma})\right)=0
\end{align*}
which is equivalent to finding $\lambda$ such that the following quartic equation is solved
\begin{widetext}
	\begin{align}
		p_{i}^1(\lambda;\alpha^{11},\beta^1,\alpha^{12},\sigma^{12})p_{i}^2(\lambda;\alpha^{22},\beta^2,\alpha^{21},\sigma^{21})-(\lambda+\epsilon)^2 m_1 m_2 = 0.
	\end{align}
\end{widetext}
Note that here the diagonal elements of $A_1$ are slightly different from those in Prop.~4.2 of \cite{BER19s} but they do not affect the result, i.e., the diagonal element equals $\rho_{i}(\alpha^{11},\beta^1)-m_1(\alpha_{12})$. The same holds true for $A_2$. Thus, with the two possible eigenvalues $\rho_{i,1,2}(\alpha^{\mu\mu},\beta^\mu)$ for the monoplex system from Corr.~4.3 of \cite{BER19s} one finds the following quartic equation which give the Lyapunov exponents for the lifted duplex one-cluster
\begin{widetext}
	\begin{multline}\label{eq:DuplexAKSLya}
		\left[\left(\lambda-\rho_{i,1}(\alpha^{11},\beta^1)\right)\cdot\left(\lambda-\rho_{i,2}(\alpha^{11},\beta^1)\right)+m_1(\lambda+\epsilon)\right]\times\\
		\left[\left(\lambda-\rho_{i,1}(\alpha^{22},\beta^2)\right)\left(\lambda-\rho_{i,2}(\alpha^{22},\beta^2)\right)+m_2(\lambda+\epsilon)\right]-(\lambda+\epsilon)^2 m_1 m_2 = 0.
	\end{multline}
\end{widetext}

In case of an antipodal monoplex one-cluster given by Eq.~(2) of the main text with $a_i\in \{0,\pi\}$, then the set of Lyapunov exponents $\mathcal{S}$ for Eq.~(4)
in the main text is given by

In the case of a duplex antipodal one-cluster state given by Eq.~(2) of the main text with $a^1_i\in \{0,\pi\}$ and $a_i^2=a_i^1-\Delta a$, Eq.\,(3) possesses the following set of Lyapunov exponents
\begin{align*}
	\mathcal{S}_{\text{Duplex}}=\{-\epsilon,\left(\lambda_{i,1},\lambda_{i,2},\lambda_{i,3},\lambda_{i,4}\right)_{i=1,\dots,N}\}
\end{align*}
where $\lambda_{i,1,\dots,4}$ solve the following $N$ quartic equations
\begin{eqnarray}
	\label{eq:DuplexEigenValues}
	(\lambda&+\epsilon)^2 m_1 m_2-\left[\left(\lambda-{\rho}_{i,1}^{1}\right)\cdot\left(\lambda-{\rho}_{i,2}^{1}\right)+m_1(\lambda+\epsilon)\right]\times\nonumber\\
	&\left[\left(\lambda-{\rho}_{i,1}^{2}\right)\left(\lambda-{\rho}_{i,2}^{2}\right)+m_2(\lambda+\epsilon)\right] = 0,
\end{eqnarray}
with $m_1\equiv\sigma^{12}\cos(\Delta a+\alpha^{12})$, $m_2\equiv\sigma^{21}\cos(\Delta a-\alpha^{21})$ and the eigenvalues ${\rho}_{i,1,2}^{\mu}\equiv{\rho}_{i,1,2}(\alpha^{\mu\mu},\beta^{\mu})$ for the monoplex system

\begin{multline}
	\label{eq:LyapAntipodal}
	\mathcal{S}_{\text{Monoplex}}=\left\{ \left(0\right)_{\text{1}},\left(-\epsilon\right)_{(N-1)N+1},\right.\\
	\left.\left(\rho_{1}\right)_{N-1},\left(\rho_{2}\right)_{N-1}\right\} 
\end{multline}
where $\rho_{1}$ and $\rho_{2}$ solve $\rho^{2}+\left(\epsilon-\cos(\alpha^{11})\sin(\beta^{1})\right)\rho-\epsilon\sin(\alpha^{11}+\beta^{1})=0$. Here, the multiplicities for each eigenvalue are given as lower case labels. The proof and the results for other clusters can be found in~\cite{BER19s,BER19as}.

Using the Lyapunov exponents of the duplex antipodal clusters, the stability for duplex antipodal states can be found. The results are presented in Fig.~4 (in the main text).


\begin{thebibliography}{101}%
	\makeatletter
	\providecommand \@ifxundefined [1]{%
		\@ifx{#1\undefined}
	}%
	\providecommand \@ifnum [1]{%
		\ifnum #1\expandafter \@firstoftwo
		\else \expandafter \@secondoftwo
		\fi
	}%
	\providecommand \@ifx [1]{%
		\ifx #1\expandafter \@firstoftwo
		\else \expandafter \@secondoftwo
		\fi
	}%
	\providecommand \natexlab [1]{#1}%
	\providecommand \enquote  [1]{``#1''}%
	\providecommand \bibnamefont  [1]{#1}%
	\providecommand \bibfnamefont [1]{#1}%
	\providecommand \citenamefont [1]{#1}%
	\providecommand \href@noop [0]{\@secondoftwo}%
	\providecommand \href [0]{\begingroup \@sanitize@url \@href}%
	\providecommand \@href[1]{\@@startlink{#1}\@@href}%
	\providecommand \@@href[1]{\endgroup#1\@@endlink}%
	\providecommand \@sanitize@url [0]{\catcode `\\12\catcode `\$12\catcode
		`\&12\catcode `\#12\catcode `\^12\catcode `\_12\catcode `\%12\relax}%
	\providecommand \@@startlink[1]{}%
	\providecommand \@@endlink[0]{}%
	\providecommand \url  [0]{\begingroup\@sanitize@url \@url }%
	\providecommand \@url [1]{\endgroup\@href {#1}{\urlprefix }}%
	\providecommand \urlprefix  [0]{URL }%
	\providecommand \Eprint [0]{\href }%
	\providecommand \doibase [0]{http://dx.doi.org/}%
	\providecommand \selectlanguage [0]{\@gobble}%
	\providecommand \bibinfo  [0]{\@secondoftwo}%
	\providecommand \bibfield  [0]{\@secondoftwo}%
	\providecommand \translation [1]{[#1]}%
	\providecommand \BibitemOpen [0]{}%
	\providecommand \bibitemStop [0]{}%
	\providecommand \bibitemNoStop [0]{.\EOS\space}%
	\providecommand \EOS [0]{\spacefactor3000\relax}%
	\providecommand \BibitemShut  [1]{\csname bibitem#1\endcsname}%
	\let\auto@bib@innerbib\@empty
	\bibitem [{\citenamefont {Markram}\ \emph {et~al.}(1997)\citenamefont
		{Markram}, \citenamefont {L\"ubke},\ and\ \citenamefont {Sakmann}}]{MAR97a}%
	\BibitemOpen
	\bibfield  {author} {\bibinfo {author} {\bibfnamefont {H.}~\bibnamefont
			{Markram}}, \bibinfo {author} {\bibfnamefont {J.}~\bibnamefont {L\"ubke}}, \
		and\ \bibinfo {author} {\bibfnamefont {B.}~\bibnamefont {Sakmann}},\ }\href
	{\doibase 10.1126/science.275.5297.213} {\bibfield  {journal} {\bibinfo
			{journal} {Science}\ }\textbf {\bibinfo {volume} {275}},\ \bibinfo {pages}
		{213} (\bibinfo {year} {1997})}\BibitemShut {NoStop}%
	\bibitem [{\citenamefont {Abbott}\ and\ \citenamefont {Nelson}(2000)}]{ABB00}%
	\BibitemOpen
	\bibfield  {author} {\bibinfo {author} {\bibfnamefont {L.~F.}\ \bibnamefont
			{Abbott}}\ and\ \bibinfo {author} {\bibfnamefont {S.}~\bibnamefont
			{Nelson}},\ }\href@noop {} {\bibfield  {journal} {\bibinfo  {journal} {Nat.
				Neurosci.}\ }\textbf {\bibinfo {volume} {3}},\ \bibinfo {pages} {1178}
		(\bibinfo {year} {2000})}\BibitemShut {NoStop}%
	\bibitem [{\citenamefont {Caporale}\ and\ \citenamefont {Dan}(2008)}]{CAP08a}%
	\BibitemOpen
	\bibfield  {author} {\bibinfo {author} {\bibfnamefont {N.}~\bibnamefont
			{Caporale}}\ and\ \bibinfo {author} {\bibfnamefont {Y.}~\bibnamefont {Dan}},\
	}\href {\doibase 10.1146/annurev.neuro.31.060407.125639} {\bibfield
		{journal} {\bibinfo  {journal} {Annu. Rev. Neurosci.}\ }\textbf {\bibinfo
			{volume} {31}},\ \bibinfo {pages} {25} (\bibinfo {year} {2008})}\BibitemShut
	{NoStop}%
	\bibitem [{\citenamefont {Meisel}\ and\ \citenamefont
		{Gross}(2009)}]{Meisel2009}%
	\BibitemOpen
	\bibfield  {author} {\bibinfo {author} {\bibfnamefont {C.}~\bibnamefont
			{Meisel}}\ and\ \bibinfo {author} {\bibfnamefont {T.}~\bibnamefont {Gross}},\
	}\href {\doibase 10.1103/PhysRevE.80.061917} {\bibfield  {journal} {\bibinfo
			{journal} {Phys. Rev. E}\ }\textbf {\bibinfo {volume} {80}},\ \bibinfo
		{pages} {061917} (\bibinfo {year} {2009})}\BibitemShut {NoStop}%
	\bibitem [{\citenamefont {L\"ucken}\ \emph {et~al.}(2016)\citenamefont
		{L\"ucken}, \citenamefont {Popovych}, \citenamefont {Tass},\ and\
		\citenamefont {Yanchuk}}]{LUE16}%
	\BibitemOpen
	\bibfield  {author} {\bibinfo {author} {\bibfnamefont {L.}~\bibnamefont
			{L\"ucken}}, \bibinfo {author} {\bibfnamefont {O.}~\bibnamefont {Popovych}},
		\bibinfo {author} {\bibfnamefont {P.}~\bibnamefont {Tass}}, \ and\ \bibinfo
		{author} {\bibfnamefont {S.}~\bibnamefont {Yanchuk}},\ }\href@noop {}
	{\bibfield  {journal} {\bibinfo  {journal} {Phys. Rev. E}\ }\textbf {\bibinfo
			{volume} {93}},\ \bibinfo {pages} {032210} (\bibinfo {year}
		{2016})}\BibitemShut {NoStop}%
	\bibitem [{\citenamefont {Jain}\ and\ \citenamefont {Krishna}(2001)}]{JAI01}%
	\BibitemOpen
	\bibfield  {author} {\bibinfo {author} {\bibfnamefont {S.}~\bibnamefont
			{Jain}}\ and\ \bibinfo {author} {\bibfnamefont {S.}~\bibnamefont {Krishna}},\
	}\href {\doibase 10.1073/pnas.98.2.543} {\bibfield  {journal} {\bibinfo
			{journal} {Proc. Natl. Acad. Sci.}\ }\textbf {\bibinfo {volume} {98}},\
		\bibinfo {pages} {543} (\bibinfo {year} {2001})}\BibitemShut {NoStop}%
	\bibitem [{\citenamefont {Gross}\ \emph
		{et~al.}(2006{\natexlab{a}})\citenamefont {Gross}, \citenamefont {D'Lima},\
		and\ \citenamefont {Blasius}}]{GRO06b}%
	\BibitemOpen
	\bibfield  {author} {\bibinfo {author} {\bibfnamefont {T.}~\bibnamefont
			{Gross}}, \bibinfo {author} {\bibfnamefont {C.~J.~D.}\ \bibnamefont
			{D'Lima}}, \ and\ \bibinfo {author} {\bibfnamefont {B.}~\bibnamefont
			{Blasius}},\ }\href {\doibase 10.1103/physrevlett.96.208701} {\bibfield
		{journal} {\bibinfo  {journal} {Phys. Rev. Lett.}\ }\textbf {\bibinfo
			{volume} {96}},\ \bibinfo {pages} {208701} (\bibinfo {year}
		{2006}{\natexlab{a}})}\BibitemShut {NoStop}%
	\bibitem [{\citenamefont {Gross}\ and\ \citenamefont {Blasius}(2008)}]{GRO08a}%
	\BibitemOpen
	\bibfield  {author} {\bibinfo {author} {\bibfnamefont {T.}~\bibnamefont
			{Gross}}\ and\ \bibinfo {author} {\bibfnamefont {B.}~\bibnamefont
			{Blasius}},\ }\href {\doibase 10.1098/rsif.2007.1229} {\bibfield  {journal}
		{\bibinfo  {journal} {J. R. Soc. Interface}\ }\textbf {\bibinfo {volume}
			{5}},\ \bibinfo {pages} {259} (\bibinfo {year} {2008})}\BibitemShut {NoStop}%
	\bibitem [{\citenamefont {Pikovsky}\ \emph {et~al.}(2001)\citenamefont
		{Pikovsky}, \citenamefont {Rosenblum},\ and\ \citenamefont {Kurths}}]{PIK01}%
	\BibitemOpen
	\bibfield  {author} {\bibinfo {author} {\bibfnamefont {A.}~\bibnamefont
			{Pikovsky}}, \bibinfo {author} {\bibfnamefont {M.~G.}\ \bibnamefont
			{Rosenblum}}, \ and\ \bibinfo {author} {\bibfnamefont {J.}~\bibnamefont
			{Kurths}},\ }\href@noop {} {\emph {\bibinfo {title} {Synchronization: a
				universal concept in nonlinear sciences}}}\ (\bibinfo  {publisher} {Cambridge
		University Press},\ \bibinfo {address} {Cambridge},\ \bibinfo {year}
	{2001})\BibitemShut {NoStop}%
	\bibitem [{\citenamefont {Strogatz}(2001)}]{STR01a}%
	\BibitemOpen
	\bibfield  {author} {\bibinfo {author} {\bibfnamefont {S.~H.}\ \bibnamefont
			{Strogatz}},\ }\href@noop {} {\bibfield  {journal} {\bibinfo  {journal}
			{Nature}\ }\textbf {\bibinfo {volume} {410}},\ \bibinfo {pages} {268}
		(\bibinfo {year} {2001})}\BibitemShut {NoStop}%
	\bibitem [{\citenamefont {Albert}\ and\ \citenamefont
		{Barab{\'a}si}(2002)}]{ALB02a}%
	\BibitemOpen
	\bibfield  {author} {\bibinfo {author} {\bibfnamefont {R.}~\bibnamefont
			{Albert}}\ and\ \bibinfo {author} {\bibfnamefont {A.~L.}\ \bibnamefont
			{Barab{\'a}si}},\ }\href {\doibase 10.1103/revmodphys.74.47} {\bibfield
		{journal} {\bibinfo  {journal} {Rev. Mod. Phys.}\ }\textbf {\bibinfo {volume}
			{74}},\ \bibinfo {pages} {47} (\bibinfo {year} {2002})}\BibitemShut {NoStop}%
	\bibitem [{\citenamefont {Newman}(2003)}]{NEW03}%
	\BibitemOpen
	\bibfield  {author} {\bibinfo {author} {\bibfnamefont {M.~E.~J.}\
			\bibnamefont {Newman}},\ }\href {\doibase 10.1137/s0036144503} {\bibfield
		{journal} {\bibinfo  {journal} {SIAM Review}\ }\textbf {\bibinfo {volume}
			{45}},\ \bibinfo {pages} {167} (\bibinfo {year} {2003})}\BibitemShut
	{NoStop}%
	\bibitem [{\citenamefont {Boccaletti}\ \emph {et~al.}(2018)\citenamefont
		{Boccaletti}, \citenamefont {Pisarchik}, \citenamefont {del Genio},\ and\
		\citenamefont {Amann}}]{BOC18}%
	\BibitemOpen
	\bibfield  {author} {\bibinfo {author} {\bibfnamefont {S.}~\bibnamefont
			{Boccaletti}}, \bibinfo {author} {\bibfnamefont {A.~N.}\ \bibnamefont
			{Pisarchik}}, \bibinfo {author} {\bibfnamefont {C.~I.}\ \bibnamefont {del
				Genio}}, \ and\ \bibinfo {author} {\bibfnamefont {A.}~\bibnamefont {Amann}},\
	}\href@noop {} {\emph {\bibinfo {title} {Synchronization: From Coupled
				Systems to Complex Networks}}}\ (\bibinfo  {publisher} {Cambridge University
		Press},\ \bibinfo {address} {Cambridge},\ \bibinfo {year} {2018})\BibitemShut
	{NoStop}%
	\bibitem [{\citenamefont {Dahms}\ \emph {et~al.}(2012)\citenamefont {Dahms},
		\citenamefont {Lehnert},\ and\ \citenamefont {Sch{\"o}ll}}]{DAH12}%
	\BibitemOpen
	\bibfield  {author} {\bibinfo {author} {\bibfnamefont {T.}~\bibnamefont
			{Dahms}}, \bibinfo {author} {\bibfnamefont {J.}~\bibnamefont {Lehnert}}, \
		and\ \bibinfo {author} {\bibfnamefont {E.}~\bibnamefont {Sch{\"o}ll}},\
	}\href {\doibase 10.1103/physreve.86.016202} {\bibfield  {journal} {\bibinfo
			{journal} {Phys. Rev. E}\ }\textbf {\bibinfo {volume} {86}},\ \bibinfo
		{pages} {016202} (\bibinfo {year} {2012})}\BibitemShut {NoStop}%
	\bibitem [{\citenamefont {Kuramoto}\ and\ \citenamefont
		{Battogtokh}(2002)}]{KUR02a}%
	\BibitemOpen
	\bibfield  {author} {\bibinfo {author} {\bibfnamefont {Y.}~\bibnamefont
			{Kuramoto}}\ and\ \bibinfo {author} {\bibfnamefont {D.}~\bibnamefont
			{Battogtokh}},\ }\href@noop {} {\bibfield  {journal} {\bibinfo  {journal}
			{Nonlin. Phen. in Complex Sys.}\ }\textbf {\bibinfo {volume} {5}},\ \bibinfo
		{pages} {380} (\bibinfo {year} {2002})}\BibitemShut {NoStop}%
	\bibitem [{\citenamefont {Abrams}\ and\ \citenamefont
		{Strogatz}(2004)}]{ABR04}%
	\BibitemOpen
	\bibfield  {author} {\bibinfo {author} {\bibfnamefont {D.~M.}\ \bibnamefont
			{Abrams}}\ and\ \bibinfo {author} {\bibfnamefont {S.~H.}\ \bibnamefont
			{Strogatz}},\ }\href {\doibase 10.1103/physrevlett.93.174102} {\bibfield
		{journal} {\bibinfo  {journal} {Phys. Rev. Lett.}\ }\textbf {\bibinfo
			{volume} {93}},\ \bibinfo {pages} {174102} (\bibinfo {year}
		{2004})}\BibitemShut {NoStop}%
	\bibitem [{\citenamefont {Panaggio}\ and\ \citenamefont
		{Abrams}(2015)}]{PAN15}%
	\BibitemOpen
	\bibfield  {author} {\bibinfo {author} {\bibfnamefont {M.~J.}\ \bibnamefont
			{Panaggio}}\ and\ \bibinfo {author} {\bibfnamefont {D.~M.}\ \bibnamefont
			{Abrams}},\ }\href {\doibase 10.1088/0951-7715/28/3/r67} {\bibfield
		{journal} {\bibinfo  {journal} {Nonlinearity}\ }\textbf {\bibinfo {volume}
			{28}},\ \bibinfo {pages} {R67} (\bibinfo {year} {2015})}\BibitemShut
	{NoStop}%
	\bibitem [{\citenamefont {Aoki}\ and\ \citenamefont {Aoyagi}(2009)}]{AOK09}%
	\BibitemOpen
	\bibfield  {author} {\bibinfo {author} {\bibfnamefont {T.}~\bibnamefont
			{Aoki}}\ and\ \bibinfo {author} {\bibfnamefont {T.}~\bibnamefont {Aoyagi}},\
	}\href {\doibase 10.1103/physrevlett.102.034101} {\bibfield  {journal}
		{\bibinfo  {journal} {Phys. Rev. Lett.}\ }\textbf {\bibinfo {volume} {102}},\
		\bibinfo {pages} {034101} (\bibinfo {year} {2009})}\BibitemShut {NoStop}%
	\bibitem [{\citenamefont {Aoki}\ and\ \citenamefont {Aoyagi}(2011)}]{AOK11}%
	\BibitemOpen
	\bibfield  {author} {\bibinfo {author} {\bibfnamefont {T.}~\bibnamefont
			{Aoki}}\ and\ \bibinfo {author} {\bibfnamefont {T.}~\bibnamefont {Aoyagi}},\
	}\href {\doibase 10.1103/physreve.84.066109} {\bibfield  {journal} {\bibinfo
			{journal} {Phys. Rev. E}\ }\textbf {\bibinfo {volume} {84}},\ \bibinfo
		{pages} {066109} (\bibinfo {year} {2011})}\BibitemShut {NoStop}%
	\bibitem [{\citenamefont {Nekorkin}\ and\ \citenamefont
		{Kasatkin}(2016)}]{NEK16}%
	\BibitemOpen
	\bibfield  {author} {\bibinfo {author} {\bibfnamefont {V.~I.}\ \bibnamefont
			{Nekorkin}}\ and\ \bibinfo {author} {\bibfnamefont {D.~V.}\ \bibnamefont
			{Kasatkin}},\ }\href {\doibase 10.1063/1.4951993} {\bibfield  {journal}
		{\bibinfo  {journal} {AIP Conf. Proc.}\ }\textbf {\bibinfo {volume} {1738}},\
		\bibinfo {pages} {210010} (\bibinfo {year} {2016})}\BibitemShut {NoStop}%
	\bibitem [{\citenamefont {Kasatkin}\ and\ \citenamefont
		{Nekorkin}(2016)}]{KAS16a}%
	\BibitemOpen
	\bibfield  {author} {\bibinfo {author} {\bibfnamefont {D.~V.}\ \bibnamefont
			{Kasatkin}}\ and\ \bibinfo {author} {\bibfnamefont {V.~I.}\ \bibnamefont
			{Nekorkin}},\ }\href {\doibase 10.1007/s11141-016-9662-1} {\bibfield
		{journal} {\bibinfo  {journal} {Radiophysics and Quantum Electronics}\
		}\textbf {\bibinfo {volume} {58}},\ \bibinfo {pages} {877} (\bibinfo {year}
		{2016})}\BibitemShut {NoStop}%
	\bibitem [{\citenamefont {Berner}\ \emph
		{et~al.}(2019{\natexlab{a}})\citenamefont {Berner}, \citenamefont
		{Sch{\"o}ll},\ and\ \citenamefont {Yanchuk}}]{BER19}%
	\BibitemOpen
	\bibfield  {author} {\bibinfo {author} {\bibfnamefont {R.}~\bibnamefont
			{Berner}}, \bibinfo {author} {\bibfnamefont {E.}~\bibnamefont {Sch{\"o}ll}},
		\ and\ \bibinfo {author} {\bibfnamefont {S.}~\bibnamefont {Yanchuk}},\
	}\href@noop {} {\enquote {\bibinfo {title} {Multi-clusters in networks of
				adaptively coupled phase oscillators},}\ } {\bibfield
		{journal} {\bibinfo  {journal} {SIAM J. Appl. Dyn. Syst., in print}\
		} (\bibinfo {year}
		{2019})},\ \bibinfo {note} {arXiv:1809.00573}\BibitemShut
	{NoStop}%
	\bibitem [{\citenamefont {Berner}\ \emph
		{et~al.}(2019{\natexlab{b}})\citenamefont {Berner}, \citenamefont
		{Fialkowski}, \citenamefont {Kasatkin}, \citenamefont {Nekorkin},
		\citenamefont {Sch{\"o}ll},\ and\ \citenamefont {Yanchuk}}]{BER19a}%
	\BibitemOpen
	\bibfield  {author} {\bibinfo {author} {\bibfnamefont {R.}~\bibnamefont
			{Berner}}, \bibinfo {author} {\bibfnamefont {J.}~\bibnamefont {Fialkowski}},
		\bibinfo {author} {\bibfnamefont {D.~V.}\ \bibnamefont {Kasatkin}}, \bibinfo
		{author} {\bibfnamefont {V.}~\bibnamefont {Nekorkin}}, \bibinfo {author}
		{\bibfnamefont {E.}~\bibnamefont {Sch{\"o}ll}}, \ and\ \bibinfo {author}
		{\bibfnamefont {S.}~\bibnamefont {Yanchuk}},\ }\href@noop {} {\enquote
		{\bibinfo {title} {Hierarchical frequency clusters in adaptive networks of
				phase oscillators},}\ } {\bibfield
		{journal} {\bibinfo  {journal} {Chaos, in print}\
		} (\bibinfo {year}
		{2019})},\ \bibinfo
	{note} {arXiv:1904.06927}\BibitemShut {NoStop}%
	\bibitem [{\citenamefont {Kasatkin}\ \emph {et~al.}(2017)\citenamefont
		{Kasatkin}, \citenamefont {Yanchuk}, \citenamefont {Sch{\"o}ll},\ and\
		\citenamefont {Nekorkin}}]{KAS17}%
	\BibitemOpen
	\bibfield  {author} {\bibinfo {author} {\bibfnamefont {D.~V.}\ \bibnamefont
			{Kasatkin}}, \bibinfo {author} {\bibfnamefont {S.}~\bibnamefont {Yanchuk}},
		\bibinfo {author} {\bibfnamefont {E.}~\bibnamefont {Sch{\"o}ll}}, \ and\
		\bibinfo {author} {\bibfnamefont {V.~I.}\ \bibnamefont {Nekorkin}},\ }\href
	{\doibase 10.1103/physreve.96.062211} {\bibfield  {journal} {\bibinfo
			{journal} {Phys. Rev. E}\ }\textbf {\bibinfo {volume} {96}},\ \bibinfo
		{pages} {062211} (\bibinfo {year} {2017})}\BibitemShut {NoStop}%
	\bibitem [{\citenamefont {Kasatkin}\ and\ \citenamefont
		{Nekorkin}(2018{\natexlab{a}})}]{KAS18}%
	\BibitemOpen
	\bibfield  {author} {\bibinfo {author} {\bibfnamefont {D.~V.}\ \bibnamefont
			{Kasatkin}}\ and\ \bibinfo {author} {\bibfnamefont {V.~I.}\ \bibnamefont
			{Nekorkin}},\ }\href {\doibase
		https://aip.scitation.org/doi/pdf/10.1063/1.5031681?class=pdf&} {\bibfield
		{journal} {\bibinfo  {journal} {Chaos}\ }\textbf {\bibinfo {volume} {28}},\
		\bibinfo {pages} {1054} (\bibinfo {year} {2018}{\natexlab{a}})}\BibitemShut
	{NoStop}%
	\bibitem [{\citenamefont {Gushchin}\ \emph {et~al.}(2015)\citenamefont
		{Gushchin}, \citenamefont {Mallada},\ and\ \citenamefont {Tang}}]{GUS15a}%
	\BibitemOpen
	\bibfield  {author} {\bibinfo {author} {\bibfnamefont {A.}~\bibnamefont
			{Gushchin}}, \bibinfo {author} {\bibfnamefont {E.}~\bibnamefont {Mallada}}, \
		and\ \bibinfo {author} {\bibfnamefont {A.}~\bibnamefont {Tang}},\ }in\ \href
	{\doibase 10.1109/ita.2015.7309003} {\emph {\bibinfo {booktitle} {Information
				Theory and Applications Workshop ITA 2015, San Diego, CA, USA}}}\ (\bibinfo
	{publisher} {IEEE},\ \bibinfo {year} {2015})\ pp.\ \bibinfo {pages}
	{291--300}\BibitemShut {NoStop}%
	\bibitem [{\citenamefont {Picallo}\ and\ \citenamefont
		{Riecke}(2011)}]{PIC11a}%
	\BibitemOpen
	\bibfield  {author} {\bibinfo {author} {\bibfnamefont {C.~B.}\ \bibnamefont
			{Picallo}}\ and\ \bibinfo {author} {\bibfnamefont {H.}~\bibnamefont
			{Riecke}},\ }\href {\doibase 10.1103/physreve.83.036206} {\bibfield
		{journal} {\bibinfo  {journal} {Phys. Rev. E}\ }\textbf {\bibinfo {volume}
			{83}},\ \bibinfo {pages} {036206} (\bibinfo {year} {2011})}\BibitemShut
	{NoStop}%
	\bibitem [{\citenamefont {Timms}\ and\ \citenamefont {English}(2014)}]{TIM14}%
	\BibitemOpen
	\bibfield  {author} {\bibinfo {author} {\bibfnamefont {L.}~\bibnamefont
			{Timms}}\ and\ \bibinfo {author} {\bibfnamefont {L.~Q.}\ \bibnamefont
			{English}},\ }\href {\doibase 10.1103/physreve.89.032906} {\bibfield
		{journal} {\bibinfo  {journal} {Phys. Rev. E}\ }\textbf {\bibinfo {volume}
			{89}},\ \bibinfo {pages} {032906} (\bibinfo {year} {2014})}\BibitemShut
	{NoStop}%
	\bibitem [{\citenamefont {Ren}\ and\ \citenamefont {Zhao}(2007)}]{REN07}%
	\BibitemOpen
	\bibfield  {author} {\bibinfo {author} {\bibfnamefont {Q.}~\bibnamefont
			{Ren}}\ and\ \bibinfo {author} {\bibfnamefont {J.}~\bibnamefont {Zhao}},\
	}\href {\doibase 10.1103/physreve.76.016207} {\bibfield  {journal} {\bibinfo
			{journal} {Phys. Rev. E}\ }\textbf {\bibinfo {volume} {76}},\ \bibinfo
		{pages} {016207} (\bibinfo {year} {2007})}\BibitemShut {NoStop}%
	\bibitem [{\citenamefont {Avalos-Gayt\'an}\ \emph {et~al.}(2018)\citenamefont
		{Avalos-Gayt\'an}, \citenamefont {Almendral}, \citenamefont {Leyva},
		\citenamefont {Battiston}, \citenamefont {Nicosia}, \citenamefont {Latora},\
		and\ \citenamefont {Boccaletti}}]{AVA18}%
	\BibitemOpen

	\bibfield  {author} {\bibinfo {author} {\bibfnamefont {V.}~\bibnamefont
			{Avalos-Gayt\'an}}, \bibinfo {author} {\bibfnamefont {J.~A.}\ \bibnamefont
			{Almendral}}, \bibinfo {author} {\bibfnamefont {I.}~\bibnamefont {Leyva}},
		\bibinfo {author} {\bibfnamefont {F.}~\bibnamefont {Battiston}}, \bibinfo
		{author} {\bibfnamefont {V.}~\bibnamefont {Nicosia}}, \bibinfo {author}
		{\bibfnamefont {V.}~\bibnamefont {Latora}}, \ and\ \bibinfo {author}
		{\bibfnamefont {S.}~\bibnamefont {Boccaletti}},\ }\href {\doibase
		10.1103/physreve.97.042301} {\bibfield  {journal} {\bibinfo  {journal} {Phys.
				Rev. E}\ }\textbf {\bibinfo {volume} {97}},\ \bibinfo {pages} {042301}
		(\bibinfo {year} {2018})}\BibitemShut {NoStop}%
	\bibitem [{\citenamefont {Papadopoulos}\ \emph {et~al.}(2017)\citenamefont
		{Papadopoulos}, \citenamefont {Kim}, \citenamefont {Kurths},\ and\
		\citenamefont {Bassett}}]{Papadopoulos2017}%
	\BibitemOpen
	\bibfield  {author} {\bibinfo {author} {\bibfnamefont {L.}~\bibnamefont
			{Papadopoulos}}, \bibinfo {author} {\bibfnamefont {J.~Z.}\ \bibnamefont
			{Kim}}, \bibinfo {author} {\bibfnamefont {J.}~\bibnamefont {Kurths}}, \ and\
		\bibinfo {author} {\bibfnamefont {D.~S.}\ \bibnamefont {Bassett}},\ }\href
	{\doibase 10.1063/1.4994819} {\bibfield  {journal} {\bibinfo  {journal}
			{Chaos An Interdiscip. J. Nonlinear Sci.}\ }\textbf {\bibinfo {volume}
			{27}},\ \bibinfo {pages} {073115} (\bibinfo {year} {2017})}\BibitemShut
	{NoStop}%
	\bibitem [{\citenamefont {Kasatkin}\ and\ \citenamefont
		{Nekorkin}(2018{\natexlab{b}})}]{Kasatkin2018a}%
	\BibitemOpen
	\bibfield  {author} {\bibinfo {author} {\bibfnamefont {D.~V.}\ \bibnamefont
			{Kasatkin}}\ and\ \bibinfo {author} {\bibfnamefont {V.~I.}\ \bibnamefont
			{Nekorkin}},\ }\href {\doibase 10.1140/epjst/e2018-800077-7} {\bibfield
		{journal} {\bibinfo  {journal} {Eur. Phys. J. Spec. Top.}\ }\textbf {\bibinfo
			{volume} {227}},\ \bibinfo {pages} {1051} (\bibinfo {year}
		{2018}{\natexlab{b}})}\BibitemShut {NoStop}%
	\bibitem [{\citenamefont {Kasatkin}\ \emph {et~al.}(2019)\citenamefont
		{Kasatkin}, \citenamefont {Klinshov},\ and\ \citenamefont
		{Nekorkin}}]{Kasatkin2019}%
	\BibitemOpen
	\bibfield  {author} {\bibinfo {author} {\bibfnamefont {D.~V.}\ \bibnamefont
			{Kasatkin}}, \bibinfo {author} {\bibfnamefont {V.~V.}\ \bibnamefont
			{Klinshov}}, \ and\ \bibinfo {author} {\bibfnamefont {V.~I.}\ \bibnamefont
			{Nekorkin}},\ }\href {\doibase 10.1103/PhysRevE.99.022203} {\bibfield
		{journal} {\bibinfo  {journal} {Phys. Rev. E}\ }\textbf {\bibinfo {volume}
			{99}},\ \bibinfo {pages} {1} (\bibinfo {year} {2019})}\BibitemShut {NoStop}%
	\bibitem [{\citenamefont {Lehnert}\ \emph {et~al.}(2014)\citenamefont
		{Lehnert}, \citenamefont {H\"{o}vel}, \citenamefont {Selivanov},
		\citenamefont {Fradkov},\ and\ \citenamefont {Sch{\"o}ll}}]{LEH14}%
	\BibitemOpen
	\bibfield  {author} {\bibinfo {author} {\bibfnamefont {J.}~\bibnamefont
			{Lehnert}}, \bibinfo {author} {\bibfnamefont {P.}~\bibnamefont {H\"{o}vel}},
		\bibinfo {author} {\bibfnamefont {A.~A.}\ \bibnamefont {Selivanov}}, \bibinfo
		{author} {\bibfnamefont {A.~L.}\ \bibnamefont {Fradkov}}, \ and\ \bibinfo
		{author} {\bibfnamefont {E.}~\bibnamefont {Sch{\"o}ll}},\ }\href {\doibase
		10.1103/physreve.90.042914} {\bibfield  {journal} {\bibinfo  {journal} {Phys.
				Rev. E}\ }\textbf {\bibinfo {volume} {90}},\ \bibinfo {pages} {042914}
		(\bibinfo {year} {2014})}\BibitemShut {NoStop}%
	\bibitem [{\citenamefont {Boccaletti}\ \emph {et~al.}(2014)\citenamefont
		{Boccaletti}, \citenamefont {Bianconi}, \citenamefont {Criado}, \citenamefont
		{del Genio}, \citenamefont {G\'omez-Garde\~nes}, \citenamefont {Romance},
		\citenamefont {Sendi\~{n}a Nadal}, \citenamefont {Wang},\ and\ \citenamefont
		{Zanin}}]{BOC14}%
	\BibitemOpen
	\bibfield  {author} {\bibinfo {author} {\bibfnamefont {S.}~\bibnamefont
			{Boccaletti}}, \bibinfo {author} {\bibfnamefont {G.}~\bibnamefont
			{Bianconi}}, \bibinfo {author} {\bibfnamefont {R.}~\bibnamefont {Criado}},
		\bibinfo {author} {\bibfnamefont {C.~I.}\ \bibnamefont {del Genio}}, \bibinfo
		{author} {\bibfnamefont {J.}~\bibnamefont {G\'omez-Garde\~nes}}, \bibinfo
		{author} {\bibfnamefont {M.}~\bibnamefont {Romance}}, \bibinfo {author}
		{\bibfnamefont {I.}~\bibnamefont {Sendi\~{n}a Nadal}}, \bibinfo {author}
		{\bibfnamefont {Z.}~\bibnamefont {Wang}}, \ and\ \bibinfo {author}
		{\bibfnamefont {M.}~\bibnamefont {Zanin}},\ }\href {\doibase
		10.1016/j.physrep.2014.07.001} {\bibfield  {journal} {\bibinfo  {journal}
			{Phys. Rep.}\ }\textbf {\bibinfo {volume} {544}},\ \bibinfo {pages} {1}
		(\bibinfo {year} {2014})}\BibitemShut {NoStop}%
	\bibitem [{\citenamefont {De~Domenico}\ \emph {et~al.}(2013)\citenamefont
		{De~Domenico}, \citenamefont {Sol{\'e}-Ribalta}, \citenamefont {Cozzo},
		\citenamefont {Kivel{\"a}}, \citenamefont {Moreno}, \citenamefont {Porter},
		\citenamefont {G\'omez},\ and\ \citenamefont {Arenas}}]{DE13}%
	\BibitemOpen
	\bibfield  {author} {\bibinfo {author} {\bibfnamefont {M.}~\bibnamefont
			{De~Domenico}}, \bibinfo {author} {\bibfnamefont {A.}~\bibnamefont
			{Sol{\'e}-Ribalta}}, \bibinfo {author} {\bibfnamefont {E.}~\bibnamefont
			{Cozzo}}, \bibinfo {author} {\bibfnamefont {M.}~\bibnamefont {Kivel{\"a}}},
		\bibinfo {author} {\bibfnamefont {Y.}~\bibnamefont {Moreno}}, \bibinfo
		{author} {\bibfnamefont {M.~A.}\ \bibnamefont {Porter}}, \bibinfo {author}
		{\bibfnamefont {S.}~\bibnamefont {G\'omez}}, \ and\ \bibinfo {author}
		{\bibfnamefont {A.}~\bibnamefont {Arenas}},\ }\href {\doibase
		10.1103/physrevx.3.041022} {\bibfield  {journal} {\bibinfo  {journal} {Phys.
				Rev. X}\ }\textbf {\bibinfo {volume} {3}},\ \bibinfo {pages} {041022}
		(\bibinfo {year} {2013})}\BibitemShut {NoStop}%
	\bibitem [{\citenamefont {De~Domenico}\ \emph {et~al.}(2015)\citenamefont
		{De~Domenico}, \citenamefont {Nicosia}, \citenamefont {Arenas},\ and\
		\citenamefont {Latora}}]{DE15}%
	\BibitemOpen
	\bibfield  {author} {\bibinfo {author} {\bibfnamefont {M.}~\bibnamefont
			{De~Domenico}}, \bibinfo {author} {\bibfnamefont {V.}~\bibnamefont
			{Nicosia}}, \bibinfo {author} {\bibfnamefont {A.}~\bibnamefont {Arenas}}, \
		and\ \bibinfo {author} {\bibfnamefont {V.}~\bibnamefont {Latora}},\
	}\href@noop {} {\bibfield  {journal} {\bibinfo  {journal} {Nat. Commun.}\
		}\textbf {\bibinfo {volume} {6}},\ \bibinfo {pages} {6864} (\bibinfo {year}
		{2015})}\BibitemShut {NoStop}%
	\bibitem [{\citenamefont {Kivel{\"a}}\ \emph {et~al.}(2014)\citenamefont
		{Kivel{\"a}}, \citenamefont {Arenas}, \citenamefont {Barth{\'e}lemy},
		\citenamefont {Gleeson}, \citenamefont {Moreno},\ and\ \citenamefont
		{Porter}}]{KIV14}%
	\BibitemOpen
	\bibfield  {author} {\bibinfo {author} {\bibfnamefont {M.}~\bibnamefont
			{Kivel{\"a}}}, \bibinfo {author} {\bibfnamefont {A.}~\bibnamefont {Arenas}},
		\bibinfo {author} {\bibfnamefont {M.}~\bibnamefont {Barth{\'e}lemy}},
		\bibinfo {author} {\bibfnamefont {J.~P.}\ \bibnamefont {Gleeson}}, \bibinfo
		{author} {\bibfnamefont {Y.}~\bibnamefont {Moreno}}, \ and\ \bibinfo {author}
		{\bibfnamefont {M.~A.}\ \bibnamefont {Porter}},\ }\href {\doibase
		10.1093/comnet/cnu016} {\bibfield  {journal} {\bibinfo  {journal} {J. Complex
				Networks}\ }\textbf {\bibinfo {volume} {2}},\ \bibinfo {pages} {203}
		(\bibinfo {year} {2014})}\BibitemShut {NoStop}%
	\bibitem [{\citenamefont {Girvan}\ and\ \citenamefont {Newman}(2002)}]{GIR02}%
	\BibitemOpen
	\bibfield  {author} {\bibinfo {author} {\bibfnamefont {M.}~\bibnamefont
			{Girvan}}\ and\ \bibinfo {author} {\bibfnamefont {M.~E.~J.}\ \bibnamefont
			{Newman}},\ }\href@noop {} {\bibfield  {journal} {\bibinfo  {journal} {Proc.
				Natl. Acad. Sci. USA}\ }\textbf {\bibinfo {volume} {99}},\ \bibinfo {pages}
		{7821} (\bibinfo {year} {2002})}\BibitemShut {NoStop}%
	\bibitem [{\citenamefont {Amato}\ \emph
		{et~al.}(2017{\natexlab{a}})\citenamefont {Amato}, \citenamefont {Kouvaris},
		\citenamefont {Miguel},\ and\ \citenamefont
		{D{\'{i}}az-Guilera}}]{Amato2017}%
	\BibitemOpen
	\bibfield  {author} {\bibinfo {author} {\bibfnamefont {R.}~\bibnamefont
			{Amato}}, \bibinfo {author} {\bibfnamefont {N.~E.}\ \bibnamefont {Kouvaris}},
		\bibinfo {author} {\bibfnamefont {M.~S.}\ \bibnamefont {Miguel}}, \ and\
		\bibinfo {author} {\bibfnamefont {A.}~\bibnamefont {D{\'{i}}az-Guilera}},\
	}\href {\doibase 10.1088/1367-2630/aa936a} {\bibfield  {journal} {\bibinfo
			{journal} {New J. Phys.}\ }\textbf {\bibinfo {volume} {19}},\ \bibinfo
		{pages} {123019} (\bibinfo {year} {2017}{\natexlab{a}})}\BibitemShut
	{NoStop}%
	\bibitem [{\citenamefont {Amato}\ \emph
		{et~al.}(2017{\natexlab{b}})\citenamefont {Amato}, \citenamefont
		{D{\'{i}}az-Guilera},\ and\ \citenamefont {Kleineberg}}]{Amato2017a}%
	\BibitemOpen
	\bibfield  {author} {\bibinfo {author} {\bibfnamefont {R.}~\bibnamefont
			{Amato}}, \bibinfo {author} {\bibfnamefont {A.}~\bibnamefont
			{D{\'{i}}az-Guilera}}, \ and\ \bibinfo {author} {\bibfnamefont {K.-K.}\
			\bibnamefont {Kleineberg}},\ }\href {\doibase 10.1038/s41598-017-06933-2}
	{\bibfield  {journal} {\bibinfo  {journal} {Sci. Rep.}\ }\textbf {\bibinfo
			{volume} {7}},\ \bibinfo {pages} {7087} (\bibinfo {year}
		{2017}{\natexlab{b}})}\BibitemShut {NoStop}%
	\bibitem [{\citenamefont {Cardillo}\ \emph {et~al.}(2013)\citenamefont
		{Cardillo}, \citenamefont {Zanin}, \citenamefont {G{\`o}mez Garde\~nes},
		\citenamefont {Romance}, \citenamefont {Garcia~del Amo},\ and\ \citenamefont
		{Boccaletti}}]{CAR13d}%
	\BibitemOpen
	\bibfield  {author} {\bibinfo {author} {\bibfnamefont {A.}~\bibnamefont
			{Cardillo}}, \bibinfo {author} {\bibfnamefont {M.}~\bibnamefont {Zanin}},
		\bibinfo {author} {\bibfnamefont {J.}~\bibnamefont {G{\`o}mez Garde\~nes}},
		\bibinfo {author} {\bibfnamefont {M.}~\bibnamefont {Romance}}, \bibinfo
		{author} {\bibfnamefont {A.}~\bibnamefont {Garcia~del Amo}}, \ and\ \bibinfo
		{author} {\bibfnamefont {S.}~\bibnamefont {Boccaletti}},\ }\href@noop {}
	{\bibfield  {journal} {\bibinfo  {journal} {Eur. Phys. J. ST}\ }\textbf
		{\bibinfo {volume} {215}},\ \bibinfo {pages} {23} (\bibinfo {year}
		{2013})}\BibitemShut {NoStop}%
	\bibitem [{\citenamefont {Meunier}\ \emph {et~al.}(2010)\citenamefont
		{Meunier}, \citenamefont {Lambiotte},\ and\ \citenamefont
		{Bullmore}}]{Meunier2010}%
	\BibitemOpen
	\bibfield  {author} {\bibinfo {author} {\bibfnamefont {D.}~\bibnamefont
			{Meunier}}, \bibinfo {author} {\bibfnamefont {R.}~\bibnamefont {Lambiotte}},
		\ and\ \bibinfo {author} {\bibfnamefont {E.~T.}\ \bibnamefont {Bullmore}},\
	}\href {\doibase 10.3389/fnins.2010.00200} {\bibfield  {journal} {\bibinfo
			{journal} {Front. Neurosci.}\ }\textbf {\bibinfo {volume} {4}},\ \bibinfo
		{pages} {200} (\bibinfo {year} {2010})}\BibitemShut {NoStop}%
	\bibitem [{\citenamefont {Bentley}\ \emph {et~al.}(2016)\citenamefont
		{Bentley}, \citenamefont {Branicky}, \citenamefont {Barnes}, \citenamefont
		{Chew}, \citenamefont {Yemini}, \citenamefont {Bullmore}, \citenamefont
		{V{\'e}tes},\ and\ \citenamefont {Schafer}}]{BEN16}%
	\BibitemOpen
	\bibfield  {author} {\bibinfo {author} {\bibfnamefont {B.}~\bibnamefont
			{Bentley}}, \bibinfo {author} {\bibfnamefont {R.}~\bibnamefont {Branicky}},
		\bibinfo {author} {\bibfnamefont {C.~L.}\ \bibnamefont {Barnes}}, \bibinfo
		{author} {\bibfnamefont {Y.~L.}\ \bibnamefont {Chew}}, \bibinfo {author}
		{\bibfnamefont {E.}~\bibnamefont {Yemini}}, \bibinfo {author} {\bibfnamefont
			{E.~T.}\ \bibnamefont {Bullmore}}, \bibinfo {author} {\bibfnamefont {P.~E.}\
			\bibnamefont {V{\'e}tes}}, \ and\ \bibinfo {author} {\bibfnamefont {W.~R.}\
			\bibnamefont {Schafer}},\ }\href {\doibase 10.1371/journal.pcbi.1005283}
	{\bibfield  {journal} {\bibinfo  {journal} {PLOS Comput. Biol.}\ }\textbf
		{\bibinfo {volume} {12}},\ \bibinfo {pages} {1} (\bibinfo {year}
		{2016})}\BibitemShut {NoStop}%
	\bibitem [{\citenamefont {Battiston}\ \emph {et~al.}(2017)\citenamefont
		{Battiston}, \citenamefont {Nicosia}, \citenamefont {Chavez},\ and\
		\citenamefont {Latora}}]{BAT17}%
	\BibitemOpen
	\bibfield  {author} {\bibinfo {author} {\bibfnamefont {F.}~\bibnamefont
			{Battiston}}, \bibinfo {author} {\bibfnamefont {V.}~\bibnamefont {Nicosia}},
		\bibinfo {author} {\bibfnamefont {M.}~\bibnamefont {Chavez}}, \ and\ \bibinfo
		{author} {\bibfnamefont {V.}~\bibnamefont {Latora}},\ }\href@noop {}
	{\bibfield  {journal} {\bibinfo  {journal} {Chaos}\ }\textbf {\bibinfo
			{volume} {27}},\ \bibinfo {pages} {047404} (\bibinfo {year}
		{2017})}\BibitemShut {NoStop}%
	\bibitem [{\citenamefont {Vaiana}\ and\ \citenamefont
		{Muldoon}(2018)}]{Vaiana2018}%
	\BibitemOpen
	\bibfield  {author} {\bibinfo {author} {\bibfnamefont {M.}~\bibnamefont
			{Vaiana}}\ and\ \bibinfo {author} {\bibfnamefont {S.~F.}\ \bibnamefont
			{Muldoon}},\ }\href {\doibase 10.1007/s00332-017-9436-8} {\bibfield
		{journal} {\bibinfo  {journal} {J. Nonlinear Sci.}\ ,\ \bibinfo {pages} {1}}
		(\bibinfo {year} {2018})}\BibitemShut {NoStop}%
	\bibitem [{\citenamefont {Ramlow}\ \emph {et~al.}(2019)\citenamefont {Ramlow},
		\citenamefont {Sawicki}, \citenamefont {Zakharova}, \citenamefont {Hlinka},
		\citenamefont {Claussen},\ and\ \citenamefont {Sch{\"o}ll}}]{RAM19}%
	\BibitemOpen
	\bibfield  {author} {\bibinfo {author} {\bibfnamefont {L.}~\bibnamefont
			{Ramlow}}, \bibinfo {author} {\bibfnamefont {J.}~\bibnamefont {Sawicki}},
		\bibinfo {author} {\bibfnamefont {A.}~\bibnamefont {Zakharova}}, \bibinfo
		{author} {\bibfnamefont {J.}~\bibnamefont {Hlinka}}, \bibinfo {author}
		{\bibfnamefont {J.~C.}\ \bibnamefont {Claussen}}, \ and\ \bibinfo {author}
		{\bibfnamefont {E.}~\bibnamefont {Sch{\"o}ll}},\ }\href@noop {} {\bibfield
		{journal} {\bibinfo  {journal} {EPL}\ }\textbf {\bibinfo {volume} {126}},\
		\bibinfo {pages} {50007} (\bibinfo {year} {2019})}\BibitemShut {NoStop}%
	\bibitem [{\citenamefont {Ashourvan}\ \emph {et~al.}(2019)\citenamefont
		{Ashourvan}, \citenamefont {Telesford}, \citenamefont {Verstynen},
		\citenamefont {Vettel},\ and\ \citenamefont {Bassett}}]{Ashourvan2019}%
	\BibitemOpen
	\bibfield  {author} {\bibinfo {author} {\bibfnamefont {A.}~\bibnamefont
			{Ashourvan}}, \bibinfo {author} {\bibfnamefont {Q.~K.}\ \bibnamefont
			{Telesford}}, \bibinfo {author} {\bibfnamefont {T.}~\bibnamefont
			{Verstynen}}, \bibinfo {author} {\bibfnamefont {J.~M.}\ \bibnamefont
			{Vettel}}, \ and\ \bibinfo {author} {\bibfnamefont {D.~S.}\ \bibnamefont
			{Bassett}},\ }\href {\doibase 10.1371/journal.pone.0215520} {\bibfield
		{journal} {\bibinfo  {journal} {PLoS One}\ }\textbf {\bibinfo {volume}
			{14}},\ \bibinfo {pages} {e0215520} (\bibinfo {year} {2019})}\BibitemShut
	{NoStop}%
	\bibitem [{\citenamefont {Zhou}\ \emph {et~al.}(2006)\citenamefont {Zhou},
		\citenamefont {Zemanov{\'{a}}}, \citenamefont {Zamora}, \citenamefont
		{Hilgetag},\ and\ \citenamefont {Kurths}}]{Zhou2006a}%
	\BibitemOpen
	\bibfield  {author} {\bibinfo {author} {\bibfnamefont {C.}~\bibnamefont
			{Zhou}}, \bibinfo {author} {\bibfnamefont {L.}~\bibnamefont
			{Zemanov{\'{a}}}}, \bibinfo {author} {\bibfnamefont {G.}~\bibnamefont
			{Zamora}}, \bibinfo {author} {\bibfnamefont {C.~C.}\ \bibnamefont
			{Hilgetag}}, \ and\ \bibinfo {author} {\bibfnamefont {J.}~\bibnamefont
			{Kurths}},\ }\href {\doibase 10.1103/PhysRevLett.97.238103} {\bibfield
		{journal} {\bibinfo  {journal} {Phys. Rev. Lett.}\ }\textbf {\bibinfo
			{volume} {97}},\ \bibinfo {pages} {238103} (\bibinfo {year}
		{2006})}\BibitemShut {NoStop}%
	\bibitem [{\citenamefont {Zhou}\ \emph {et~al.}(2007)\citenamefont {Zhou},
		\citenamefont {Zemanov{\'{a}}}, \citenamefont {Zamora-L{\'{o}}pez},
		\citenamefont {Hilgetag},\ and\ \citenamefont {Kurths}}]{Zhou2007}%
	\BibitemOpen
	\bibfield  {author} {\bibinfo {author} {\bibfnamefont {C.}~\bibnamefont
			{Zhou}}, \bibinfo {author} {\bibfnamefont {L.}~\bibnamefont
			{Zemanov{\'{a}}}}, \bibinfo {author} {\bibfnamefont {G.}~\bibnamefont
			{Zamora-L{\'{o}}pez}}, \bibinfo {author} {\bibfnamefont {C.~C.}\ \bibnamefont
			{Hilgetag}}, \ and\ \bibinfo {author} {\bibfnamefont {J.}~\bibnamefont
			{Kurths}},\ }\href {\doibase 10.1088/1367-2630/9/6/178} {\bibfield  {journal}
		{\bibinfo  {journal} {New J. Phys.}\ }\textbf {\bibinfo {volume} {9}},\
		\bibinfo {pages} {178} (\bibinfo {year} {2007})}\BibitemShut {NoStop}%
	\bibitem [{\citenamefont {Wang}\ \emph {et~al.}(2019)\citenamefont {Wang},
		\citenamefont {Lin}, \citenamefont {Liu}, \citenamefont {Wu}, \citenamefont
		{Zhou},\ and\ \citenamefont {Zhou}}]{Wang2019}%
	\BibitemOpen
	\bibfield  {author} {\bibinfo {author} {\bibfnamefont {R.}~\bibnamefont
			{Wang}}, \bibinfo {author} {\bibfnamefont {P.}~\bibnamefont {Lin}}, \bibinfo
		{author} {\bibfnamefont {M.}~\bibnamefont {Liu}}, \bibinfo {author}
		{\bibfnamefont {Y.}~\bibnamefont {Wu}}, \bibinfo {author} {\bibfnamefont
			{T.}~\bibnamefont {Zhou}}, \ and\ \bibinfo {author} {\bibfnamefont
			{C.}~\bibnamefont {Zhou}},\ }\href {\doibase 10.1103/physrevlett.123.038301}
	{\bibfield  {journal} {\bibinfo  {journal} {Phys. Rev. Lett.}\ }\textbf
		{\bibinfo {volume} {123}},\ \bibinfo {pages} {38301} (\bibinfo {year}
		{2019})}\BibitemShut {NoStop}%
	\bibitem [{\citenamefont {Zhang}\ \emph {et~al.}(2015)\citenamefont {Zhang},
		\citenamefont {Boccaletti}, \citenamefont {Guan},\ and\ \citenamefont
		{Liu}}]{ZHA15a}%
	\BibitemOpen
	\bibfield  {author} {\bibinfo {author} {\bibfnamefont {X.}~\bibnamefont
			{Zhang}}, \bibinfo {author} {\bibfnamefont {S.}~\bibnamefont {Boccaletti}},
		\bibinfo {author} {\bibfnamefont {S.}~\bibnamefont {Guan}}, \ and\ \bibinfo
		{author} {\bibfnamefont {Z.}~\bibnamefont {Liu}},\ }\href {\doibase
		10.1103/physrevlett.114.038701} {\bibfield  {journal} {\bibinfo  {journal}
			{Phys. Rev. Lett.}\ }\textbf {\bibinfo {volume} {114}},\ \bibinfo {pages}
		{038701} (\bibinfo {year} {2015})}\BibitemShut {NoStop}%
	\bibitem [{\citenamefont {Maksimenko}\ \emph {et~al.}(2016)\citenamefont
		{Maksimenko}, \citenamefont {Makarov}, \citenamefont {Bera}, \citenamefont
		{Ghosh}, \citenamefont {Dana}, \citenamefont {Goremyko}, \citenamefont
		{Frolov}, \citenamefont {Koronovskii},\ and\ \citenamefont {Hramov}}]{MAK16}%
	\BibitemOpen
	\bibfield  {author} {\bibinfo {author} {\bibfnamefont {V.~A.}\ \bibnamefont
			{Maksimenko}}, \bibinfo {author} {\bibfnamefont {V.~V.}\ \bibnamefont
			{Makarov}}, \bibinfo {author} {\bibfnamefont {B.~K.}\ \bibnamefont {Bera}},
		\bibinfo {author} {\bibfnamefont {D.}~\bibnamefont {Ghosh}}, \bibinfo
		{author} {\bibfnamefont {S.~K.}\ \bibnamefont {Dana}}, \bibinfo {author}
		{\bibfnamefont {M.~V.}\ \bibnamefont {Goremyko}}, \bibinfo {author}
		{\bibfnamefont {N.~S.}\ \bibnamefont {Frolov}}, \bibinfo {author}
		{\bibfnamefont {A.~A.}\ \bibnamefont {Koronovskii}}, \ and\ \bibinfo {author}
		{\bibfnamefont {A.~E.}\ \bibnamefont {Hramov}},\ }\href {\doibase
		10.1103/physreve.94.052205} {\bibfield  {journal} {\bibinfo  {journal} {Phys.
				Rev. E}\ }\textbf {\bibinfo {volume} {94}},\ \bibinfo {pages} {052205}
		(\bibinfo {year} {2016})}\BibitemShut {NoStop}%
	\bibitem [{\citenamefont {Jalan}\ and\ \citenamefont {Singh}(2016)}]{JAL16}%
	\BibitemOpen
	\bibfield  {author} {\bibinfo {author} {\bibfnamefont {S.}~\bibnamefont
			{Jalan}}\ and\ \bibinfo {author} {\bibfnamefont {A.}~\bibnamefont {Singh}},\
	}\href {\doibase 10.1209/0295-5075/113/30002} {\bibfield  {journal} {\bibinfo
			{journal} {Europhys. Lett.}\ }\textbf {\bibinfo {volume} {113}},\ \bibinfo
		{pages} {30002} (\bibinfo {year} {2016})}\BibitemShut {NoStop}%
	\bibitem [{\citenamefont {Ghosh}\ \emph {et~al.}(2016)\citenamefont {Ghosh},
		\citenamefont {Kumar}, \citenamefont {Zakharova},\ and\ \citenamefont
		{Jalan}}]{GHO16}%
	\BibitemOpen
	\bibfield  {author} {\bibinfo {author} {\bibfnamefont {S.}~\bibnamefont
			{Ghosh}}, \bibinfo {author} {\bibfnamefont {A.}~\bibnamefont {Kumar}},
		\bibinfo {author} {\bibfnamefont {A.}~\bibnamefont {Zakharova}}, \ and\
		\bibinfo {author} {\bibfnamefont {S.}~\bibnamefont {Jalan}},\ }\href
	{\doibase 10.1209/0295-5075/115/60005} {\bibfield  {journal} {\bibinfo
			{journal} {Europhys. Lett.}\ }\textbf {\bibinfo {volume} {115}},\ \bibinfo
		{pages} {60005} (\bibinfo {year} {2016})}\BibitemShut {NoStop}%
	\bibitem [{\citenamefont {Leyva}\ \emph {et~al.}(2017)\citenamefont {Leyva},
		\citenamefont {Sevilla-Escoboza}, \citenamefont {Sendi{\~n}a-Nadal},
		\citenamefont {Guti{\'e}rrez}, \citenamefont {Buld{\'u}},\ and\ \citenamefont
		{Boccaletti}}]{LEY17a}%
	\BibitemOpen
	\bibfield  {author} {\bibinfo {author} {\bibfnamefont {I.}~\bibnamefont
			{Leyva}}, \bibinfo {author} {\bibfnamefont {R.}~\bibnamefont
			{Sevilla-Escoboza}}, \bibinfo {author} {\bibfnamefont {I.}~\bibnamefont
			{Sendi{\~n}a-Nadal}}, \bibinfo {author} {\bibfnamefont {R.}~\bibnamefont
			{Guti{\'e}rrez}}, \bibinfo {author} {\bibfnamefont {J.~M.}\ \bibnamefont
			{Buld{\'u}}}, \ and\ \bibinfo {author} {\bibfnamefont {S.}~\bibnamefont
			{Boccaletti}},\ }\href {\doibase 10.1038/srep45475} {\bibfield  {journal}
		{\bibinfo  {journal} {Sci. Rep.}\ }\textbf {\bibinfo {volume} {7}},\ \bibinfo
		{pages} {45475} (\bibinfo {year} {2017})}\BibitemShut {NoStop}%
	\bibitem [{\citenamefont {Andrzejak}\ \emph {et~al.}(2017)\citenamefont
		{Andrzejak}, \citenamefont {Ruzzene},\ and\ \citenamefont
		{Malvestio}}]{AND17}%
	\BibitemOpen
	\bibfield  {author} {\bibinfo {author} {\bibfnamefont {R.~G.}\ \bibnamefont
			{Andrzejak}}, \bibinfo {author} {\bibfnamefont {G.}~\bibnamefont {Ruzzene}},
		\ and\ \bibinfo {author} {\bibfnamefont {I.}~\bibnamefont {Malvestio}},\
	}\href {\doibase 10.1063/1.4983841} {\bibfield  {journal} {\bibinfo
			{journal} {Chaos}\ }\textbf {\bibinfo {volume} {17}},\ \bibinfo {pages}
		{053114} (\bibinfo {year} {2017})}\BibitemShut {NoStop}%
	\bibitem [{\citenamefont {Ghosh}\ \emph {et~al.}(2018)\citenamefont {Ghosh},
		\citenamefont {Zakharova},\ and\ \citenamefont {Jalan}}]{GHO18}%
	\BibitemOpen
	\bibfield  {author} {\bibinfo {author} {\bibfnamefont {S.}~\bibnamefont
			{Ghosh}}, \bibinfo {author} {\bibfnamefont {A.}~\bibnamefont {Zakharova}}, \
		and\ \bibinfo {author} {\bibfnamefont {S.}~\bibnamefont {Jalan}},\ }\href
	{\doibase 10.1016/j.chaos.2017.11.010} {\bibfield  {journal} {\bibinfo
			{journal} {Chaos Solitons Fractals}\ }\textbf {\bibinfo {volume} {106}},\
		\bibinfo {pages} {56} (\bibinfo {year} {2018})}\BibitemShut {NoStop}%
	\bibitem [{\citenamefont {Semenova}\ and\ \citenamefont
		{Zakharova}(2018)}]{SEM18}%
	\BibitemOpen
	\bibfield  {author} {\bibinfo {author} {\bibfnamefont {N.}~\bibnamefont
			{Semenova}}\ and\ \bibinfo {author} {\bibfnamefont {A.}~\bibnamefont
			{Zakharova}},\ }\href {\doibase 10.1063/1.5037584} {\bibfield  {journal}
		{\bibinfo  {journal} {Chaos}\ }\textbf {\bibinfo {volume} {28}},\ \bibinfo
		{pages} {051104} (\bibinfo {year} {2018})}\BibitemShut {NoStop}%
	\bibitem [{\citenamefont {Mikhaylenko}\ \emph {et~al.}(2019)\citenamefont
		{Mikhaylenko}, \citenamefont {Ramlow}, \citenamefont {Jalan},\ and\
		\citenamefont {Zakharova}}]{MIK18}%
	\BibitemOpen
	\bibfield  {author} {\bibinfo {author} {\bibfnamefont {M.}~\bibnamefont
			{Mikhaylenko}}, \bibinfo {author} {\bibfnamefont {L.}~\bibnamefont {Ramlow}},
		\bibinfo {author} {\bibfnamefont {S.}~\bibnamefont {Jalan}}, \ and\ \bibinfo
		{author} {\bibfnamefont {A.}~\bibnamefont {Zakharova}},\ }\href {\doibase
		10.1063/1.5057418} {\bibfield  {journal} {\bibinfo  {journal} {{C}haos}\
		}\textbf {\bibinfo {volume} {29}},\ \bibinfo {pages} {023122} (\bibinfo
		{year} {2019})}\BibitemShut {NoStop}%
	\bibitem [{\citenamefont {Sawicki}\ \emph {et~al.}(2018)\citenamefont
		{Sawicki}, \citenamefont {Omelchenko}, \citenamefont {Zakharova},\ and\
		\citenamefont {Sch{\"o}ll}}]{SAW18c}%
	\BibitemOpen
	\bibfield  {author} {\bibinfo {author} {\bibfnamefont {J.}~\bibnamefont
			{Sawicki}}, \bibinfo {author} {\bibfnamefont {I.}~\bibnamefont {Omelchenko}},
		\bibinfo {author} {\bibfnamefont {A.}~\bibnamefont {Zakharova}}, \ and\
		\bibinfo {author} {\bibfnamefont {E.}~\bibnamefont {Sch{\"o}ll}},\
	}\href@noop {} {\bibfield  {journal} {\bibinfo  {journal} {Phys. Rev. E}\
		}\textbf {\bibinfo {volume} {98}},\ \bibinfo {pages} {062224} (\bibinfo
		{year} {2018})}\BibitemShut {NoStop}%
	\bibitem [{\citenamefont {Omelchenko}\ \emph {et~al.}(2019)\citenamefont
		{Omelchenko}, \citenamefont {H{\"u}lser}, \citenamefont {Zakharova},\ and\
		\citenamefont {Sch{\"o}ll}}]{OME19}%
	\BibitemOpen
	\bibfield  {author} {\bibinfo {author} {\bibfnamefont {I.}~\bibnamefont
			{Omelchenko}}, \bibinfo {author} {\bibfnamefont {T.}~\bibnamefont
			{H{\"u}lser}}, \bibinfo {author} {\bibfnamefont {A.}~\bibnamefont
			{Zakharova}}, \ and\ \bibinfo {author} {\bibfnamefont {E.}~\bibnamefont
			{Sch{\"o}ll}},\ }\href@noop {} {\bibfield  {journal} {\bibinfo  {journal}
			{Front. Appl. Math. Stat.}\ }\textbf {\bibinfo {volume} {4}},\ \bibinfo
		{pages} {67} (\bibinfo {year} {2019})}\BibitemShut {NoStop}%
	\bibitem [{\citenamefont {Rybalova}\ \emph {et~al.}(2019)\citenamefont
		{Rybalova}, \citenamefont {Vadivasova}, \citenamefont {Strelkova},
		\citenamefont {Anishchenko},\ and\ \citenamefont {Zakharova}}]{RYB19}%
	\BibitemOpen
	\bibfield  {author} {\bibinfo {author} {\bibfnamefont {E.}~\bibnamefont
			{Rybalova}}, \bibinfo {author} {\bibfnamefont {T.}~\bibnamefont
			{Vadivasova}}, \bibinfo {author} {\bibfnamefont {G.}~\bibnamefont
			{Strelkova}}, \bibinfo {author} {\bibfnamefont {V.}~\bibnamefont
			{Anishchenko}}, \ and\ \bibinfo {author} {\bibfnamefont {A.}~\bibnamefont
			{Zakharova}},\ }\href@noop {} {\bibfield  {journal} {\bibinfo  {journal}
			{Chaos}\ }\textbf {\bibinfo {volume} {29}},\ \bibinfo {pages} {033134}
		(\bibinfo {year} {2019})}\BibitemShut {NoStop}%
	\bibitem [{\citenamefont {Nikitin}\ \emph {et~al.}(2019)\citenamefont
		{Nikitin}, \citenamefont {Omelchenko}, \citenamefont {Zakharova},
		\citenamefont {Avetyan}, \citenamefont {Fradkov},\ and\ \citenamefont
		{Sch{\"o}ll}}]{NIK19}%
	\BibitemOpen
	\bibfield  {author} {\bibinfo {author} {\bibfnamefont {D.}~\bibnamefont
			{Nikitin}}, \bibinfo {author} {\bibfnamefont {I.}~\bibnamefont {Omelchenko}},
		\bibinfo {author} {\bibfnamefont {A.}~\bibnamefont {Zakharova}}, \bibinfo
		{author} {\bibfnamefont {M.}~\bibnamefont {Avetyan}}, \bibinfo {author}
		{\bibfnamefont {A.~L.}\ \bibnamefont {Fradkov}}, \ and\ \bibinfo {author}
		{\bibfnamefont {E.}~\bibnamefont {Sch{\"o}ll}},\ }\href@noop {} {\bibfield
		{journal} {\bibinfo  {journal} {Phil. Trans. R. Soc. A}\ }\textbf {\bibinfo
			{volume} {377}},\ \bibinfo {pages} {20180128} (\bibinfo {year}
		{2019})}\BibitemShut {NoStop}%
	\bibitem [{\citenamefont {Blaha}\ \emph {et~al.}(2019)\citenamefont {Blaha},
		\citenamefont {Huang}, \citenamefont {{Della Rossa}}, \citenamefont {Pecora},
		\citenamefont {Hossein-Zadeh},\ and\ \citenamefont {Sorrentino}}]{Blaha2019}%
	\BibitemOpen
	\bibfield  {author} {\bibinfo {author} {\bibfnamefont {K.~A.}\ \bibnamefont
			{Blaha}}, \bibinfo {author} {\bibfnamefont {K.}~\bibnamefont {Huang}},
		\bibinfo {author} {\bibfnamefont {F.}~\bibnamefont {{Della Rossa}}}, \bibinfo
		{author} {\bibfnamefont {L.~M.}\ \bibnamefont {Pecora}}, \bibinfo {author}
		{\bibfnamefont {M.}~\bibnamefont {Hossein-Zadeh}}, \ and\ \bibinfo {author}
		{\bibfnamefont {F.}~\bibnamefont {Sorrentino}},\ }\href {\doibase
		10.1103/PhysRevLett.122.014101} {\bibfield  {journal} {\bibinfo  {journal}
			{Phys. Rev. Lett.}\ }\textbf {\bibinfo {volume} {122}},\ \bibinfo {pages}
		{014101} (\bibinfo {year} {2019})}\BibitemShut {NoStop}%
	\bibitem [{\citenamefont {Sevilla-Escoboza}\ \emph {et~al.}(2016)\citenamefont
		{Sevilla-Escoboza}, \citenamefont {Sendi{\~{n}}a-Nadal}, \citenamefont
		{Leyva}, \citenamefont {Guti{\'{e}}rrez}, \citenamefont {Buld{\'{u}}},\ and\
		\citenamefont {Boccaletti}}]{Sevilla-Escoboza2016}%
	\BibitemOpen
	\bibfield  {author} {\bibinfo {author} {\bibfnamefont {R.}~\bibnamefont
			{Sevilla-Escoboza}}, \bibinfo {author} {\bibfnamefont {I.}~\bibnamefont
			{Sendi{\~{n}}a-Nadal}}, \bibinfo {author} {\bibfnamefont {I.}~\bibnamefont
			{Leyva}}, \bibinfo {author} {\bibfnamefont {R.}~\bibnamefont
			{Guti{\'{e}}rrez}}, \bibinfo {author} {\bibfnamefont {J.~M.}\ \bibnamefont
			{Buld{\'{u}}}}, \ and\ \bibinfo {author} {\bibfnamefont {S.}~\bibnamefont
			{Boccaletti}},\ }\href {\doibase 10.1063/1.4952967} {\bibfield  {journal}
		{\bibinfo  {journal} {Chaos An Interdiscip. J. Nonlinear Sci.}\ }\textbf
		{\bibinfo {volume} {26}},\ \bibinfo {pages} {065304} (\bibinfo {year}
		{2016})}\BibitemShut {NoStop}%
	\bibitem [{\citenamefont {Requejo}\ and\ \citenamefont
		{D{\'{i}}az-Guilera}(2016)}]{Requejo2016}%
	\BibitemOpen
	\bibfield  {author} {\bibinfo {author} {\bibfnamefont {R.~J.}\ \bibnamefont
			{Requejo}}\ and\ \bibinfo {author} {\bibfnamefont {A.}~\bibnamefont
			{D{\'{i}}az-Guilera}},\ }\href {\doibase 10.1103/PhysRevE.94.022301}
	{\bibfield  {journal} {\bibinfo  {journal} {Phys. Rev. E}\ }\textbf {\bibinfo
			{volume} {94}},\ \bibinfo {pages} {022301} (\bibinfo {year}
		{2016})}\BibitemShut {NoStop}%
	\bibitem [{\citenamefont {Pitsik}\ \emph {et~al.}(2018)\citenamefont {Pitsik},
		\citenamefont {Makarov}, \citenamefont {Kirsanov}, \citenamefont {Frolov},
		\citenamefont {Goremyko}, \citenamefont {Li}, \citenamefont {Wang},
		\citenamefont {Hramov},\ and\ \citenamefont {Boccaletti}}]{Pitsik2018}%
	\BibitemOpen
	\bibfield  {author} {\bibinfo {author} {\bibfnamefont {E.}~\bibnamefont
			{Pitsik}}, \bibinfo {author} {\bibfnamefont {V.}~\bibnamefont {Makarov}},
		\bibinfo {author} {\bibfnamefont {D.}~\bibnamefont {Kirsanov}}, \bibinfo
		{author} {\bibfnamefont {N.}~\bibnamefont {Frolov}}, \bibinfo {author}
		{\bibfnamefont {M.}~\bibnamefont {Goremyko}}, \bibinfo {author}
		{\bibfnamefont {X.}~\bibnamefont {Li}}, \bibinfo {author} {\bibfnamefont
			{Z.}~\bibnamefont {Wang}}, \bibinfo {author} {\bibfnamefont {A.}~\bibnamefont
			{Hramov}}, \ and\ \bibinfo {author} {\bibfnamefont {S.}~\bibnamefont
			{Boccaletti}},\ }\href {\doibase 10.1088/1367-2630/aad00d} {\bibfield
		{journal} {\bibinfo  {journal} {New J. Phys.}\ }\textbf {\bibinfo {volume}
			{20}},\ \bibinfo {pages} {075004} (\bibinfo {year} {2018})}\BibitemShut
	{NoStop}%
	\bibitem [{\citenamefont {Leyva}\ \emph {et~al.}(2018)\citenamefont {Leyva},
		\citenamefont {Sendi{\~{n}}a-Nadal}, \citenamefont {Sevilla-Escoboza},
		\citenamefont {Vera-Avila}, \citenamefont {Chholak},\ and\ \citenamefont
		{Boccaletti}}]{Leyva2018}%
	\BibitemOpen
	\bibfield  {author} {\bibinfo {author} {\bibfnamefont {I.}~\bibnamefont
			{Leyva}}, \bibinfo {author} {\bibfnamefont {I.}~\bibnamefont
			{Sendi{\~{n}}a-Nadal}}, \bibinfo {author} {\bibfnamefont {R.}~\bibnamefont
			{Sevilla-Escoboza}}, \bibinfo {author} {\bibfnamefont {V.~P.}\ \bibnamefont
			{Vera-Avila}}, \bibinfo {author} {\bibfnamefont {P.}~\bibnamefont {Chholak}},
		\ and\ \bibinfo {author} {\bibfnamefont {S.}~\bibnamefont {Boccaletti}},\
	}\href {\doibase 10.1038/s41598-018-26945-w} {\bibfield  {journal} {\bibinfo
			{journal} {Sci. Rep.}\ }\textbf {\bibinfo {volume} {8}},\ \bibinfo {pages}
		{8629} (\bibinfo {year} {2018})}\BibitemShut {NoStop}%
	\bibitem [{\citenamefont {Jalan}\ \emph {et~al.}(2019)\citenamefont {Jalan},
		\citenamefont {Kumar},\ and\ \citenamefont {Leyva}}]{Jalan2019}%
	\BibitemOpen
	\bibfield  {author} {\bibinfo {author} {\bibfnamefont {S.}~\bibnamefont
			{Jalan}}, \bibinfo {author} {\bibfnamefont {A.}~\bibnamefont {Kumar}}, \ and\
		\bibinfo {author} {\bibfnamefont {I.}~\bibnamefont {Leyva}},\ }\href
	{\doibase 10.1063/1.5092226} {\bibfield  {journal} {\bibinfo  {journal}
			{Chaos An Interdiscip. J. Nonlinear Sci.}\ }\textbf {\bibinfo {volume}
			{29}},\ \bibinfo {pages} {041102} (\bibinfo {year} {2019})}\BibitemShut
	{NoStop}%
	\bibitem [{\citenamefont {Frolov}\ \emph {et~al.}(2018)\citenamefont {Frolov},
		\citenamefont {Maksimenko}, \citenamefont {Makarov}, \citenamefont
		{Kirsanov}, \citenamefont {Hramov},\ and\ \citenamefont
		{Kurths}}]{Frolov2018}%
	\BibitemOpen
	\bibfield  {author} {\bibinfo {author} {\bibfnamefont {N.~S.}\ \bibnamefont
			{Frolov}}, \bibinfo {author} {\bibfnamefont {V.~A.}\ \bibnamefont
			{Maksimenko}}, \bibinfo {author} {\bibfnamefont {V.~V.}\ \bibnamefont
			{Makarov}}, \bibinfo {author} {\bibfnamefont {D.~V.}\ \bibnamefont
			{Kirsanov}}, \bibinfo {author} {\bibfnamefont {A.~E.}\ \bibnamefont
			{Hramov}}, \ and\ \bibinfo {author} {\bibfnamefont {J.}~\bibnamefont
			{Kurths}},\ }\href {\doibase 10.1103/PhysRevE.98.022320} {\bibfield
		{journal} {\bibinfo  {journal} {Phys. Rev. E}\ }\textbf {\bibinfo {volume}
			{98}},\ \bibinfo {pages} {022320} (\bibinfo {year} {2018})}\BibitemShut
	{NoStop}%
	\bibitem [{\citenamefont {Maslennikov}\ and\ \citenamefont
		{Nekorkin}(2018)}]{MAS18}%
	\BibitemOpen
	\bibfield  {author} {\bibinfo {author} {\bibfnamefont {O.~V.}\ \bibnamefont
			{Maslennikov}}\ and\ \bibinfo {author} {\bibfnamefont {V.~I.}\ \bibnamefont
			{Nekorkin}},\ }\href@noop {} {\bibfield  {journal} {\bibinfo  {journal}
			{Chaos}\ }\textbf {\bibinfo {volume} {28}},\ \bibinfo {pages} {121101}
		(\bibinfo {year} {2018})}\BibitemShut {NoStop}%
	\bibitem [{\citenamefont {Goremyko}\ \emph {et~al.}(2017)\citenamefont
		{Goremyko}, \citenamefont {Kirsanov}, \citenamefont {Nedaivozov},
		\citenamefont {Makarov},\ and\ \citenamefont {Hramov}}]{Goremyko2017}%
	\BibitemOpen
	\bibfield  {author} {\bibinfo {author} {\bibfnamefont {M.~V.}\ \bibnamefont
			{Goremyko}}, \bibinfo {author} {\bibfnamefont {D.~V.}\ \bibnamefont
			{Kirsanov}}, \bibinfo {author} {\bibfnamefont {V.~O.}\ \bibnamefont
			{Nedaivozov}}, \bibinfo {author} {\bibfnamefont {V.~V.}\ \bibnamefont
			{Makarov}}, \ and\ \bibinfo {author} {\bibfnamefont {A.~E.}\ \bibnamefont
			{Hramov}},\ }in\ \href {\doibase 10.1117/12.2249842} {\emph {\bibinfo
			{booktitle} {Dyn. Fluctuations Biomed. Photonics XIV}}},\ Vol.\ \bibinfo
	{volume} {10063},\ \bibinfo {editor} {edited by\ \bibinfo {editor}
		{\bibfnamefont {V.~V.}\ \bibnamefont {Tuchin}}, \bibinfo {editor}
		{\bibfnamefont {K.~V.}\ \bibnamefont {Larin}}, \bibinfo {editor}
		{\bibfnamefont {M.~J.}\ \bibnamefont {Leahy}}, \ and\ \bibinfo {editor}
		{\bibfnamefont {R.~K.}\ \bibnamefont {Wang}}}\ (\bibinfo {year} {2017})\ p.\
	\bibinfo {pages} {100631C}\BibitemShut {NoStop}%
	\bibitem [{\citenamefont {Citri}\ and\ \citenamefont
		{Malenka}(2008)}]{Citri2008}%
	\BibitemOpen
	\bibfield  {author} {\bibinfo {author} {\bibfnamefont {A.}~\bibnamefont
			{Citri}}\ and\ \bibinfo {author} {\bibfnamefont {R.~C.}\ \bibnamefont
			{Malenka}},\ }\href {\doibase 10.1038/sj.npp.1301559} {\bibfield  {journal}
		{\bibinfo  {journal} {Neuropsychopharmacology}\ }\textbf {\bibinfo {volume}
			{33}},\ \bibinfo {pages} {18} (\bibinfo {year} {2008})}\BibitemShut {NoStop}%
	\bibitem [{\citenamefont {Edelmann}\ \emph {et~al.}(2017)\citenamefont
		{Edelmann}, \citenamefont {Cepeda-Prado},\ and\ \citenamefont
		{Le{\ss}mann}}]{Edelmann2017}%
	\BibitemOpen
	\bibfield  {author} {\bibinfo {author} {\bibfnamefont {E.}~\bibnamefont
			{Edelmann}}, \bibinfo {author} {\bibfnamefont {E.}~\bibnamefont
			{Cepeda-Prado}}, \ and\ \bibinfo {author} {\bibfnamefont {V.}~\bibnamefont
			{Le{\ss}mann}},\ }\href {\doibase 10.3389/fnsyn.2017.00007} {\bibfield
		{journal} {\bibinfo  {journal} {Front. Synaptic Neurosci.}\ }\textbf
		{\bibinfo {volume} {9}},\ \bibinfo {pages} {032805} (\bibinfo {year}
		{2017})}\BibitemShut {NoStop}%
	\bibitem [{\citenamefont {Zenke}\ \emph {et~al.}(2015)\citenamefont {Zenke},
		\citenamefont {Agnes},\ and\ \citenamefont {Gerstner}}]{Zenke2015}%
	\BibitemOpen
	\bibfield  {author} {\bibinfo {author} {\bibfnamefont {F.}~\bibnamefont
			{Zenke}}, \bibinfo {author} {\bibfnamefont {E.~J.}\ \bibnamefont {Agnes}}, \
		and\ \bibinfo {author} {\bibfnamefont {W.}~\bibnamefont {Gerstner}},\ }\href
	{\doibase 10.1038/ncomms7922} {\bibfield  {journal} {\bibinfo  {journal}
			{Nat. Commun.}\ }\textbf {\bibinfo {volume} {6}},\ \bibinfo {pages} {6922}
		(\bibinfo {year} {2015})}\BibitemShut {NoStop}%
	\bibitem [{\citenamefont {Virkar}\ \emph {et~al.}(2016)\citenamefont {Virkar},
		\citenamefont {Shew}, \citenamefont {Restrepo},\ and\ \citenamefont
		{Ott}}]{Virkar2016}%
	\BibitemOpen
	\bibfield  {author} {\bibinfo {author} {\bibfnamefont {Y.~S.}\ \bibnamefont
			{Virkar}}, \bibinfo {author} {\bibfnamefont {W.~L.}\ \bibnamefont {Shew}},
		\bibinfo {author} {\bibfnamefont {J.~G.}\ \bibnamefont {Restrepo}}, \ and\
		\bibinfo {author} {\bibfnamefont {E.}~\bibnamefont {Ott}},\ }\href {\doibase
		10.1103/PhysRevE.94.042310} {\bibfield  {journal} {\bibinfo  {journal} {Phys.
				Rev. E}\ }\textbf {\bibinfo {volume} {94}},\ \bibinfo {pages} {042310}
		(\bibinfo {year} {2016})}\BibitemShut {NoStop}%
	\bibitem [{\citenamefont {John}\ \emph {et~al.}(2018)\citenamefont {John},
		\citenamefont {Liu}, \citenamefont {Chien}, \citenamefont {Kulkarni},
		\citenamefont {Zhu}, \citenamefont {Fu}, \citenamefont {Basu}, \citenamefont
		{Liu},\ and\ \citenamefont {Mathews}}]{John2018}%
	\BibitemOpen
	\bibfield  {author} {\bibinfo {author} {\bibfnamefont {R.~A.}\ \bibnamefont
			{John}}, \bibinfo {author} {\bibfnamefont {F.}~\bibnamefont {Liu}}, \bibinfo
		{author} {\bibfnamefont {N.~A.}\ \bibnamefont {Chien}}, \bibinfo {author}
		{\bibfnamefont {M.~R.}\ \bibnamefont {Kulkarni}}, \bibinfo {author}
		{\bibfnamefont {C.}~\bibnamefont {Zhu}}, \bibinfo {author} {\bibfnamefont
			{Q.}~\bibnamefont {Fu}}, \bibinfo {author} {\bibfnamefont {A.}~\bibnamefont
			{Basu}}, \bibinfo {author} {\bibfnamefont {Z.}~\bibnamefont {Liu}}, \ and\
		\bibinfo {author} {\bibfnamefont {N.}~\bibnamefont {Mathews}},\ }\href
	{\doibase 10.1002/adma.201800220} {\bibfield  {journal} {\bibinfo  {journal}
			{Adv. Mater.}\ }\textbf {\bibinfo {volume} {30}},\ \bibinfo {pages} {1800220}
		(\bibinfo {year} {2018})}\BibitemShut {NoStop}%
	\bibitem [{\citenamefont {Gross}\ \emph
		{et~al.}(2006{\natexlab{b}})\citenamefont {Gross}, \citenamefont {D'Lima},\
		and\ \citenamefont {Blasius}}]{Gross2006}%
	\BibitemOpen
	\bibfield  {author} {\bibinfo {author} {\bibfnamefont {T.}~\bibnamefont
			{Gross}}, \bibinfo {author} {\bibfnamefont {C.~J.~D.}\ \bibnamefont
			{D'Lima}}, \ and\ \bibinfo {author} {\bibfnamefont {B.}~\bibnamefont
			{Blasius}},\ }\href {\doibase 10.1103/PhysRevLett.96.208701} {\bibfield
		{journal} {\bibinfo  {journal} {Phys. Rev. Lett.}\ }\textbf {\bibinfo
			{volume} {96}},\ \bibinfo {pages} {208701} (\bibinfo {year}
		{2006}{\natexlab{b}})}\BibitemShut {NoStop}%
	\bibitem [{\citenamefont {Shai}\ and\ \citenamefont {Dobson}(2013)}]{Shai2013}%
	\BibitemOpen
	\bibfield  {author} {\bibinfo {author} {\bibfnamefont {S.}~\bibnamefont
			{Shai}}\ and\ \bibinfo {author} {\bibfnamefont {S.}~\bibnamefont {Dobson}},\
	}\href {\doibase 10.1103/PhysRevE.87.042812} {\bibfield  {journal} {\bibinfo
			{journal} {Phys. Rev. E}\ }\textbf {\bibinfo {volume} {87}},\ \bibinfo
		{pages} {042812} (\bibinfo {year} {2013})}\BibitemShut {NoStop}%
	\bibitem [{\citenamefont {Wardil}\ and\ \citenamefont
		{Hauert}(2015)}]{Wardil2014}%
	\BibitemOpen
	\bibfield  {author} {\bibinfo {author} {\bibfnamefont {L.}~\bibnamefont
			{Wardil}}\ and\ \bibinfo {author} {\bibfnamefont {C.}~\bibnamefont
			{Hauert}},\ }\href {\doibase 10.1038/srep05725} {\bibfield  {journal}
		{\bibinfo  {journal} {Sci. Rep.}\ }\textbf {\bibinfo {volume} {4}},\ \bibinfo
		{pages} {5725} (\bibinfo {year} {2015})}\BibitemShut {NoStop}%
	\bibitem [{\citenamefont {Klimek}\ \emph {et~al.}(2016)\citenamefont {Klimek},
		\citenamefont {Diakonova}, \citenamefont {Egu{\'{i}}luz}, \citenamefont {{San
				Miguel}},\ and\ \citenamefont {Thurner}}]{Klimek2016}%
	\BibitemOpen
	\bibfield  {author} {\bibinfo {author} {\bibfnamefont {P.}~\bibnamefont
			{Klimek}}, \bibinfo {author} {\bibfnamefont {M.}~\bibnamefont {Diakonova}},
		\bibinfo {author} {\bibfnamefont {V.~M.}\ \bibnamefont {Egu{\'{i}}luz}},
		\bibinfo {author} {\bibfnamefont {M.}~\bibnamefont {{San Miguel}}}, \ and\
		\bibinfo {author} {\bibfnamefont {S.}~\bibnamefont {Thurner}},\ }\href
	{\doibase 10.1088/1367-2630/18/8/083045} {\bibfield  {journal} {\bibinfo
			{journal} {New J. Phys.}\ }\textbf {\bibinfo {volume} {18}},\ \bibinfo
		{pages} {083045} (\bibinfo {year} {2016})}\BibitemShut {NoStop}%
	\bibitem [{\citenamefont {Sakaguchi}\ and\ \citenamefont
		{Kuramoto}(1986)}]{SAK86}%
	\BibitemOpen
	\bibfield  {author} {\bibinfo {author} {\bibfnamefont {H.}~\bibnamefont
			{Sakaguchi}}\ and\ \bibinfo {author} {\bibfnamefont {Y.}~\bibnamefont
			{Kuramoto}},\ }\href@noop {} {\bibfield  {journal} {\bibinfo  {journal}
			{Prog. Theor. Phys}\ }\textbf {\bibinfo {volume} {76}},\ \bibinfo {pages}
		{576} (\bibinfo {year} {1986})}\BibitemShut {NoStop}%
	\bibitem [{\citenamefont {Hoppensteadt}\ and\ \citenamefont
		{Izhikevich}(1996)}]{HOP96}%
	\BibitemOpen
	\bibfield  {author} {\bibinfo {author} {\bibfnamefont {F.~C.}\ \bibnamefont
			{Hoppensteadt}}\ and\ \bibinfo {author} {\bibfnamefont {E.~M.}\ \bibnamefont
			{Izhikevich}},\ }\href {\doibase 10.1007/s004220050280} {\bibfield  {journal}
		{\bibinfo  {journal} {Biol. Cybern.}\ }\textbf {\bibinfo {volume} {75}},\
		\bibinfo {pages} {129 } (\bibinfo {year} {1996})}\BibitemShut {NoStop}%
	\bibitem [{\citenamefont {Seliger}\ \emph {et~al.}(2002)\citenamefont
		{Seliger}, \citenamefont {Young},\ and\ \citenamefont {Tsimring}}]{SEL02}%
	\BibitemOpen
	\bibfield  {author} {\bibinfo {author} {\bibfnamefont {P.}~\bibnamefont
			{Seliger}}, \bibinfo {author} {\bibfnamefont {S.~C.}\ \bibnamefont {Young}},
		\ and\ \bibinfo {author} {\bibfnamefont {L.~S.}\ \bibnamefont {Tsimring}},\
	}\href {\doibase 10.1103/physreve.65.041906} {\bibfield  {journal} {\bibinfo
			{journal} {Phys. Rev. E}\ }\textbf {\bibinfo {volume} {65}},\ \bibinfo
		{pages} {041906} (\bibinfo {year} {2002})}\BibitemShut {NoStop}%
	\bibitem [{\citenamefont {Aoki}(2015)}]{AOK15}%
	\BibitemOpen
	\bibfield  {author} {\bibinfo {author} {\bibfnamefont {T.}~\bibnamefont
			{Aoki}},\ }\href {\doibase 10.1016/j.neunet.2014.05.024} {\bibfield
		{journal} {\bibinfo  {journal} {Neural Networks}\ }\textbf {\bibinfo {volume}
			{62}},\ \bibinfo {pages} {11} (\bibinfo {year} {2015})}\BibitemShut {NoStop}%
	\bibitem [{\citenamefont {Hebb}(1949)}]{HEB49}%
	\BibitemOpen
	\bibfield  {author} {\bibinfo {author} {\bibfnamefont {D.}~\bibnamefont
			{Hebb}},\ }\href@noop {} {\emph {\bibinfo {title} {{The Organization of
					Behavior: A Neuropsychological Theory}}}},\ \bibinfo {edition} {new edition}\
	ed.\ (\bibinfo  {publisher} {Wiley},\ \bibinfo {address} {New York},\
	\bibinfo {year} {1949})\BibitemShut {NoStop}%
	\bibitem [{\citenamefont {Maistrenko}\ \emph {et~al.}(2007)\citenamefont
		{Maistrenko}, \citenamefont {Lysyansky}, \citenamefont {Hauptmann},
		\citenamefont {Burylko},\ and\ \citenamefont {Tass}}]{MAI07}%
	\BibitemOpen
	\bibfield  {author} {\bibinfo {author} {\bibfnamefont {Y.}~\bibnamefont
			{Maistrenko}}, \bibinfo {author} {\bibfnamefont {B.}~\bibnamefont
			{Lysyansky}}, \bibinfo {author} {\bibfnamefont {C.}~\bibnamefont
			{Hauptmann}}, \bibinfo {author} {\bibfnamefont {O.}~\bibnamefont {Burylko}},
		\ and\ \bibinfo {author} {\bibfnamefont {P.~A.}\ \bibnamefont {Tass}},\
	}\href {\doibase 10.1103/physreve.75.066207} {\bibfield  {journal} {\bibinfo
			{journal} {Phys. Rev. E}\ }\textbf {\bibinfo {volume} {75}},\ \bibinfo
		{pages} {066207} (\bibinfo {year} {2007})}\BibitemShut {NoStop}%
	\bibitem [{\citenamefont {Popovych}\ \emph {et~al.}(2013)\citenamefont
		{Popovych}, \citenamefont {Yanchuk},\ and\ \citenamefont {Tass}}]{POP13}%
	\BibitemOpen
	\bibfield  {author} {\bibinfo {author} {\bibfnamefont {O.}~\bibnamefont
			{Popovych}}, \bibinfo {author} {\bibfnamefont {S.}~\bibnamefont {Yanchuk}}, \
		and\ \bibinfo {author} {\bibfnamefont {P.}~\bibnamefont {Tass}},\ }\href@noop
	{} {\bibfield  {journal} {\bibinfo  {journal} {Sci. Rep.}\ }\textbf {\bibinfo
			{volume} {3}},\ \bibinfo {pages} {2926} (\bibinfo {year} {2013})}\BibitemShut
	{NoStop}%
	\bibitem [{\citenamefont {Kuramoto}(1984)}]{KUR84}%
	\BibitemOpen
	\bibfield  {author} {\bibinfo {author} {\bibfnamefont {Y.}~\bibnamefont
			{Kuramoto}},\ }\href@noop {} {\emph {\bibinfo {title} {Chemical Oscillations,
				Waves and Turbulence}}}\ (\bibinfo  {publisher} {Springer-Verlag},\ \bibinfo
	{address} {Berlin},\ \bibinfo {year} {1984})\BibitemShut {NoStop}%
	\bibitem [{\citenamefont {Strogatz}(2000)}]{STR00}%
	\BibitemOpen
	\bibfield  {author} {\bibinfo {author} {\bibfnamefont {S.~H.}\ \bibnamefont
			{Strogatz}},\ }\href@noop {} {\bibfield  {journal} {\bibinfo  {journal}
			{Physica D}\ }\textbf {\bibinfo {volume} {143}},\ \bibinfo {pages} {1}
		(\bibinfo {year} {2000})}\BibitemShut {NoStop}%
	\bibitem [{\citenamefont {Acebr{\'o}n}\ \emph {et~al.}(2005)\citenamefont
		{Acebr{\'o}n}, \citenamefont {Bonilla}, \citenamefont {Vicente},
		\citenamefont {Ritort},\ and\ \citenamefont {Spigler}}]{ACE05a}%
	\BibitemOpen
	\bibfield  {author} {\bibinfo {author} {\bibfnamefont {J.~A.}\ \bibnamefont
			{Acebr{\'o}n}}, \bibinfo {author} {\bibfnamefont {L.~L.}\ \bibnamefont
			{Bonilla}}, \bibinfo {author} {\bibfnamefont {C.~J.~P.}\ \bibnamefont
			{Vicente}}, \bibinfo {author} {\bibfnamefont {F.}~\bibnamefont {Ritort}}, \
		and\ \bibinfo {author} {\bibfnamefont {R.}~\bibnamefont {Spigler}},\
	}\href@noop {} {\bibfield  {journal} {\bibinfo  {journal} {Rev. Mod. Phys.}\
		}\textbf {\bibinfo {volume} {77}},\ \bibinfo {pages} {137} (\bibinfo {year}
		{2005})}\BibitemShut {NoStop}%
	\bibitem [{\citenamefont {Pikovsky}\ and\ \citenamefont
		{Rosenblum}(2008)}]{PIK08}%
	\BibitemOpen
	\bibfield  {author} {\bibinfo {author} {\bibfnamefont {A.}~\bibnamefont
			{Pikovsky}}\ and\ \bibinfo {author} {\bibfnamefont {M.~G.}\ \bibnamefont
			{Rosenblum}},\ }\href {\doibase 10.1103/physrevlett.101.264103} {\bibfield
		{journal} {\bibinfo  {journal} {Phys. Rev. Lett.}\ }\textbf {\bibinfo
			{volume} {101}},\ \bibinfo {eid} {264103} (\bibinfo {year}
		{2008})}\BibitemShut {NoStop}%
	\bibitem [{\citenamefont {Omel'chenko}\ and\ \citenamefont
		{Wolfrum}(2012)}]{OME12b}%
	\BibitemOpen
	\bibfield  {author} {\bibinfo {author} {\bibfnamefont {O.~E.}\ \bibnamefont
			{Omel'chenko}}\ and\ \bibinfo {author} {\bibfnamefont {M.}~\bibnamefont
			{Wolfrum}},\ }\href {\doibase 10.1103/physrevlett.109.164101} {\bibfield
		{journal} {\bibinfo  {journal} {Phys. Rev. Lett.}\ }\textbf {\bibinfo
			{volume} {109}},\ \bibinfo {pages} {164101} (\bibinfo {year}
		{2012})}\BibitemShut {NoStop}%
	\bibitem [{\citenamefont {Choe}\ \emph {et~al.}(2010)\citenamefont {Choe},
		\citenamefont {Dahms}, \citenamefont {H\"{o}vel},\ and\ \citenamefont
		{Sch{\"o}ll}}]{CHO09}%
	\BibitemOpen
	\bibfield  {author} {\bibinfo {author} {\bibfnamefont {C.~U.}\ \bibnamefont
			{Choe}}, \bibinfo {author} {\bibfnamefont {T.}~\bibnamefont {Dahms}},
		\bibinfo {author} {\bibfnamefont {P.}~\bibnamefont {H\"{o}vel}}, \ and\
		\bibinfo {author} {\bibfnamefont {E.}~\bibnamefont {Sch{\"o}ll}},\ }\href
	{\doibase 10.1103/physreve.81.025205} {\bibfield  {journal} {\bibinfo
			{journal} {Phys. Rev. E}\ }\textbf {\bibinfo {volume} {81}},\ \bibinfo
		{pages} {025205(R)} (\bibinfo {year} {2010})}\BibitemShut {NoStop}%
	\bibitem [{\citenamefont {Boyd}\ and\ \citenamefont
		{Vandenberghe}(2004)}]{BOY04}%
	\BibitemOpen
	\bibfield  {author} {\bibinfo {author} {\bibfnamefont {S.}~\bibnamefont
			{Boyd}}\ and\ \bibinfo {author} {\bibfnamefont {L.}~\bibnamefont
			{Vandenberghe}},\ }\href@noop {} {\emph {\bibinfo {title} {Convex
				Optimization}}}\ (\bibinfo  {publisher} {Cambridge University Press},\
	\bibinfo {year} {2004})\BibitemShut {NoStop}%
	\bibitem [{\citenamefont {Liesen}\ and\ \citenamefont
		{Mehrmann}(2015)}]{Liesen2015}%
	\BibitemOpen
	\bibfield  {author} {\bibinfo {author} {\bibfnamefont {J.}~\bibnamefont
			{Liesen}}\ and\ \bibinfo {author} {\bibfnamefont {V.}~\bibnamefont
			{Mehrmann}},\ }\href {\doibase 10.1007/978-3-319-24346-7} {\emph {\bibinfo
			{title} {{Linear Algebra}}}},\ Springer Undergraduate Mathematics Series\
	(\bibinfo  {publisher} {Springer International Publishing},\ \bibinfo
	{address} {Cham},\ \bibinfo {year} {2015})\BibitemShut {NoStop}%
	\bibitem [{\citenamefont {G\'omez}\ \emph {et~al.}(2013)\citenamefont
		{G\'omez}, \citenamefont {D{\'i}az-Guilera}, \citenamefont
		{G\'omez-Garde\~nes}, \citenamefont {P\'{e}rez~Vicente}, \citenamefont
		{Moreno},\ and\ \citenamefont {Arenas}}]{GOM13}%
	\BibitemOpen
	\bibfield  {author} {\bibinfo {author} {\bibfnamefont {S.}~\bibnamefont
			{G\'omez}}, \bibinfo {author} {\bibfnamefont {A.}~\bibnamefont
			{D{\'i}az-Guilera}}, \bibinfo {author} {\bibfnamefont {J.}~\bibnamefont
			{G\'omez-Garde\~nes}}, \bibinfo {author} {\bibfnamefont {C.~J.}\ \bibnamefont
			{P\'{e}rez~Vicente}}, \bibinfo {author} {\bibfnamefont {Y.}~\bibnamefont
			{Moreno}}, \ and\ \bibinfo {author} {\bibfnamefont {A.}~\bibnamefont
			{Arenas}},\ }\href {\doibase 10.1103/physrevlett.110.028701} {\bibfield
		{journal} {\bibinfo  {journal} {Phys. Rev. Lett.}\ }\textbf {\bibinfo
			{volume} {110}},\ \bibinfo {pages} {028701} (\bibinfo {year}
		{2013})}\BibitemShut {NoStop}%
	\bibitem [{\citenamefont {Sol{\'{e}}-Ribalta}\ \emph
		{et~al.}(2013)\citenamefont {Sol{\'{e}}-Ribalta}, \citenamefont {{De
				Domenico}}, \citenamefont {Kouvaris}, \citenamefont {D{\'{i}}az-Guilera},
		\citenamefont {G{\'{o}}mez},\ and\ \citenamefont
		{Arenas}}]{Sole-Ribalta2013}%
	\BibitemOpen
	\bibfield  {author} {\bibinfo {author} {\bibfnamefont {A.}~\bibnamefont
			{Sol{\'{e}}-Ribalta}}, \bibinfo {author} {\bibfnamefont {M.}~\bibnamefont
			{{De Domenico}}}, \bibinfo {author} {\bibfnamefont {N.~E.}\ \bibnamefont
			{Kouvaris}}, \bibinfo {author} {\bibfnamefont {A.}~\bibnamefont
			{D{\'{i}}az-Guilera}}, \bibinfo {author} {\bibfnamefont {S.}~\bibnamefont
			{G{\'{o}}mez}}, \ and\ \bibinfo {author} {\bibfnamefont {A.}~\bibnamefont
			{Arenas}},\ }\href {\doibase 10.1103/PhysRevE.88.032807} {\bibfield
		{journal} {\bibinfo  {journal} {Phys. Rev. E}\ }\textbf {\bibinfo {volume}
			{88}},\ \bibinfo {pages} {032807} (\bibinfo {year} {2013})}\BibitemShut
	{NoStop}%
	\bibitem [{\citenamefont {Pecora}\ and\ \citenamefont {Carroll}(1998)}]{PEC98}%
	\BibitemOpen
	\bibfield  {author} {\bibinfo {author} {\bibfnamefont {L.~M.}\ \bibnamefont
			{Pecora}}\ and\ \bibinfo {author} {\bibfnamefont {T.~L.}\ \bibnamefont
			{Carroll}},\ }\href {\doibase 10.1103/physrevlett.80.2109} {\bibfield
		{journal} {\bibinfo  {journal} {Phys. Rev. Lett.}\ }\textbf {\bibinfo
			{volume} {80}},\ \bibinfo {pages} {2109} (\bibinfo {year}
		{1998})}\BibitemShut {NoStop}%
	\bibitem [{\citenamefont {Tang}\ \emph {et~al.}(2019)\citenamefont {Tang},
		\citenamefont {Wu}, \citenamefont {L{\"u}}, \citenamefont {Lu},\ and\
		\citenamefont {D'Souza}}]{TAN19}%
	\BibitemOpen
	\bibfield  {author} {\bibinfo {author} {\bibfnamefont {L.}~\bibnamefont
			{Tang}}, \bibinfo {author} {\bibfnamefont {X.}~\bibnamefont {Wu}}, \bibinfo
		{author} {\bibfnamefont {J.}~\bibnamefont {L{\"u}}}, \bibinfo {author}
		{\bibfnamefont {J.}~\bibnamefont {Lu}}, \ and\ \bibinfo {author}
		{\bibfnamefont {R.~M.}\ \bibnamefont {D'Souza}},\ }\href@noop {} {\bibfield
		{journal} {\bibinfo  {journal} {Phys. Rev. E}\ }\textbf {\bibinfo {volume}
			{99}} (\bibinfo {year} {2019})},\ \bibinfo {note} {012304}\BibitemShut
	{NoStop}%
	\bibitem{suppl}See Supplemental Material (below) for phase differences of one-cluster duplex states, the characteristic polynomial of duplex matrices, a general form of the multiplex decomposition, a generalization of the master stability approach to multiplex networks, an application to diffusively coupled linear systems, and for the derivation of the Lyapunov exponents of duplex antipodal states.
\end{thebibliography}

\begin{thebibliography}{15}%
	\makeatletter
	\providecommand \@ifxundefined [1]{%
		\@ifx{#1\undefined}
	}%
	\providecommand \@ifnum [1]{%
		\ifnum #1\expandafter \@firstoftwo
		\else \expandafter \@secondoftwo
		\fi
	}%
	\providecommand \@ifx [1]{%
		\ifx #1\expandafter \@firstoftwo
		\else \expandafter \@secondoftwo
		\fi
	}%
	\providecommand \natexlab [1]{#1}%
	\providecommand \enquote  [1]{``#1''}%
	\providecommand \bibnamefont  [1]{#1}%
	\providecommand \bibfnamefont [1]{#1}%
	\providecommand \citenamefont [1]{#1}%
	\providecommand \href@noop [0]{\@secondoftwo}%
	\providecommand \href [0]{\begingroup \@sanitize@url \@href}%
	\providecommand \@href[1]{\@@startlink{#1}\@@href}%
	\providecommand \@@href[1]{\endgroup#1\@@endlink}%
	\providecommand \@sanitize@url [0]{\catcode `\\12\catcode `\$12\catcode
		`\&12\catcode `\#12\catcode `\^12\catcode `\_12\catcode `\%12\relax}%
	\providecommand \@@startlink[1]{}%
	\providecommand \@@endlink[0]{}%
	\providecommand \url  [0]{\begingroup\@sanitize@url \@url }%
	\providecommand \@url [1]{\endgroup\@href {#1}{\urlprefix }}%
	\providecommand \urlprefix  [0]{URL }%
	\providecommand \Eprint [0]{\href }%
	\providecommand \doibase [0]{http://dx.doi.org/}%
	\providecommand \selectlanguage [0]{\@gobble}%
	\providecommand \bibinfo  [0]{\@secondoftwo}%
	\providecommand \bibfield  [0]{\@secondoftwo}%
	\providecommand \translation [1]{[#1]}%
	\providecommand \BibitemOpen [0]{}%
	\providecommand \bibitemStop [0]{}%
	\providecommand \bibitemNoStop [0]{.\EOS\space}%
	\providecommand \EOS [0]{\spacefactor3000\relax}%
	\providecommand \BibitemShut  [1]{\csname bibitem#1\endcsname}%
	\let\auto@bib@innerbib\@empty
	\bibitem [{\citenamefont {Sakaguchi}\ and\ \citenamefont
		{Kuramoto}(1986)}]{SAK86s}%
	\BibitemOpen
	\bibfield  {author} {\bibinfo {author} {\bibfnamefont {H.}~\bibnamefont
			{Sakaguchi}}\ and\ \bibinfo {author} {\bibfnamefont {Y.}~\bibnamefont
			{Kuramoto}},\ }\href@noop {} {\bibfield  {journal} {\bibinfo  {journal}
			{Prog. Theor. Phys}\ }\textbf {\bibinfo {volume} {76}},\ \bibinfo {pages}
		{576} (\bibinfo {year} {1986})}\BibitemShut {NoStop}%
	\bibitem [{\citenamefont {Keane}\ \emph {et~al.}(2012)\citenamefont {Keane},
		\citenamefont {Dahms}, \citenamefont {Lehnert}, \citenamefont
		{Suryanarayana}, \citenamefont {H{\"{o}}vel},\ and\ \citenamefont
		{Sch{\"{o}}ll}}]{Keane2012s}%
	\BibitemOpen
	\bibfield  {author} {\bibinfo {author} {\bibfnamefont {A.}~\bibnamefont
			{Keane}}, \bibinfo {author} {\bibfnamefont {T.}~\bibnamefont {Dahms}},
		\bibinfo {author} {\bibfnamefont {J.}~\bibnamefont {Lehnert}}, \bibinfo
		{author} {\bibfnamefont {S.}~\bibnamefont {Suryanarayana}}, \bibinfo {author}
		{\bibfnamefont {P.}~\bibnamefont {H{\"{o}}vel}}, \ and\ \bibinfo {author}
		{\bibfnamefont {E.}~\bibnamefont {Sch{\"{o}}ll}},\ }\href {\doibase
		10.1140/epjb/e2012-30810-x} {\bibfield  {journal} {\bibinfo  {journal} {Eur.
				Phys. J. B}\ }\textbf {\bibinfo {volume} {85}},\ \bibinfo {pages} {407}
		(\bibinfo {year} {2012})}\BibitemShut {NoStop}%
	\bibitem [{\citenamefont {Lehnert}(2016)}]{Lehnert2016s}%
	\BibitemOpen
	\bibfield  {author} {\bibinfo {author} {\bibfnamefont {J.}~\bibnamefont
			{Lehnert}},\ }\href {\doibase 10.1007/978-3-319-25115-8} {\emph {\bibinfo
			{title} {{Controlling Synchronization Patterns in Complex Networks}}}},\
	Springer Theses\ (\bibinfo  {publisher} {Springer International Publishing},\
	\bibinfo {address} {Cham},\ \bibinfo {year} {2016})\BibitemShut {NoStop}%
	\bibitem [{\citenamefont {Pecora}\ and\ \citenamefont
		{Carroll}(1998)}]{Pecora1998}%
	\BibitemOpen
	\bibfield  {author} {\bibinfo {author} {\bibfnamefont {L.~M.}\ \bibnamefont
			{Pecora}}\ and\ \bibinfo {author} {\bibfnamefont {T.~L.}\ \bibnamefont
			{Carroll}},\ }\href {\doibase 10.1103/PhysRevLett.80.2109} {\bibfield
		{journal} {\bibinfo  {journal} {Phys. Rev. Lett.}\ }\textbf {\bibinfo
			{volume} {80}},\ \bibinfo {pages} {2109} (\bibinfo {year}
		{1998})}\BibitemShut {NoStop}%
	\bibitem [{\citenamefont {Silvester}(2000)}]{Silvester2000s}%
	\BibitemOpen
	\bibfield  {author} {\bibinfo {author} {\bibfnamefont {J.~R.}\ \bibnamefont
			{Silvester}},\ }\href {\doibase 10.2307/3620776} {\bibfield  {journal}
		{\bibinfo  {journal} {Math. Gaz.}\ }\textbf {\bibinfo {volume} {84}},\
		\bibinfo {pages} {460} (\bibinfo {year} {2000})}\BibitemShut {NoStop}%
	\bibitem [{\citenamefont {Boyd}\ and\ \citenamefont
		{Vandenberghe}(2004)}]{BOY04s}%
	\BibitemOpen
	\bibfield  {author} {\bibinfo {author} {\bibfnamefont {S.}~\bibnamefont
			{Boyd}}\ and\ \bibinfo {author} {\bibfnamefont {L.}~\bibnamefont
			{Vandenberghe}},\ }\href@noop {} {\emph {\bibinfo {title} {Convex
				Optimization}}}\ (\bibinfo  {publisher} {Cambridge University Press},\
	\bibinfo {year} {2004})\BibitemShut {NoStop}%
	\bibitem [{\citenamefont {Liesen}\ and\ \citenamefont
		{Mehrmann}(2015)}]{Liesen2015s}%
	\BibitemOpen
	\bibfield  {author} {\bibinfo {author} {\bibfnamefont {J.}~\bibnamefont
			{Liesen}}\ and\ \bibinfo {author} {\bibfnamefont {V.}~\bibnamefont
			{Mehrmann}},\ }\href {\doibase 10.1007/978-3-319-24346-7} {\emph {\bibinfo
			{title} {{Linear Algebra}}}},\ Springer Undergraduate Mathematics Series\
	(\bibinfo  {publisher} {Springer International Publishing},\ \bibinfo
	{address} {Cham},\ \bibinfo {year} {2015})\BibitemShut {NoStop}%
	\bibitem [{\citenamefont {Powell}(2011)}]{Powell2011}%
	\BibitemOpen
	\bibfield  {author} {\bibinfo {author} {\bibfnamefont {P.~D.}\ \bibnamefont
			{Powell}},\ }\href@noop {} {\bibfield  {journal} {\bibinfo  {journal} {arXiv:
				1112.4379}\ } (\bibinfo {year} {2011})}\BibitemShut {NoStop}%
	\bibitem [{\citenamefont {Tang}\ \emph {et~al.}(2019)\citenamefont {Tang},
		\citenamefont {Wu}, \citenamefont {L{\"{u}}}, \citenamefont {Lu},\ and\
		\citenamefont {D'Souza}}]{Tang2019s}%
	\BibitemOpen
	\bibfield  {author} {\bibinfo {author} {\bibfnamefont {L.}~\bibnamefont
			{Tang}}, \bibinfo {author} {\bibfnamefont {X.}~\bibnamefont {Wu}}, \bibinfo
		{author} {\bibfnamefont {J.}~\bibnamefont {L{\"{u}}}}, \bibinfo {author}
		{\bibfnamefont {J.-a.}\ \bibnamefont {Lu}}, \ and\ \bibinfo {author}
		{\bibfnamefont {R.~M.}\ \bibnamefont {D'Souza}},\ }\href {\doibase
		10.1103/PhysRevE.99.012304} {\bibfield  {journal} {\bibinfo  {journal} {Phys.
				Rev. E}\ }\textbf {\bibinfo {volume} {99}},\ \bibinfo {pages} {012304}
		(\bibinfo {year} {2019})}\BibitemShut {NoStop}%
	\bibitem [{\citenamefont {{De Domenico}}\ \emph {et~al.}(2013)\citenamefont
		{{De Domenico}}, \citenamefont {Sol{\'{e}}-Ribalta}, \citenamefont {Cozzo},
		\citenamefont {Kivel{\"{a}}}, \citenamefont {Moreno}, \citenamefont {Porter},
		\citenamefont {G{\'{o}}mez},\ and\ \citenamefont {Arenas}}]{DeDomenico2013s}%
	\BibitemOpen
	\bibfield  {author} {\bibinfo {author} {\bibfnamefont {M.}~\bibnamefont {{De
					Domenico}}}, \bibinfo {author} {\bibfnamefont {A.}~\bibnamefont
			{Sol{\'{e}}-Ribalta}}, \bibinfo {author} {\bibfnamefont {E.}~\bibnamefont
			{Cozzo}}, \bibinfo {author} {\bibfnamefont {M.}~\bibnamefont {Kivel{\"{a}}}},
		\bibinfo {author} {\bibfnamefont {Y.}~\bibnamefont {Moreno}}, \bibinfo
		{author} {\bibfnamefont {M.~A.}\ \bibnamefont {Porter}}, \bibinfo {author}
		{\bibfnamefont {S.}~\bibnamefont {G{\'{o}}mez}}, \ and\ \bibinfo {author}
		{\bibfnamefont {A.}~\bibnamefont {Arenas}},\ }\href {\doibase
		10.1103/PhysRevX.3.041022} {\bibfield  {journal} {\bibinfo  {journal} {Phys.
				Rev. X}\ }\textbf {\bibinfo {volume} {3}},\ \bibinfo {pages} {041022}
		(\bibinfo {year} {2013})}\BibitemShut {NoStop}%
	\bibitem [{\citenamefont {Kivela}\ \emph {et~al.}(2014)\citenamefont {Kivela},
		\citenamefont {Arenas}, \citenamefont {Barthelemy}, \citenamefont {Gleeson},
		\citenamefont {Moreno},\ and\ \citenamefont {Porter}}]{Kivela2014s}%
	\BibitemOpen
	\bibfield  {author} {\bibinfo {author} {\bibfnamefont {M.}~\bibnamefont
			{Kivela}}, \bibinfo {author} {\bibfnamefont {A.}~\bibnamefont {Arenas}},
		\bibinfo {author} {\bibfnamefont {M.}~\bibnamefont {Barthelemy}}, \bibinfo
		{author} {\bibfnamefont {J.~P.}\ \bibnamefont {Gleeson}}, \bibinfo {author}
		{\bibfnamefont {Y.}~\bibnamefont {Moreno}}, \ and\ \bibinfo {author}
		{\bibfnamefont {M.~A.}\ \bibnamefont {Porter}},\ }\href {\doibase
		10.1093/comnet/cnu016} {\bibfield  {journal} {\bibinfo  {journal} {J. Complex
				Networks}\ }\textbf {\bibinfo {volume} {2}},\ \bibinfo {pages} {203}
		(\bibinfo {year} {2014})}\BibitemShut {NoStop}%
	\bibitem [{\citenamefont {Barth{\'{e}}lemy}(2011)}]{Barthelemy2011s}%
	\BibitemOpen
	\bibfield  {author} {\bibinfo {author} {\bibfnamefont {M.}~\bibnamefont
			{Barth{\'{e}}lemy}},\ }\href {\doibase 10.1016/j.physrep.2010.11.002}
	{\bibfield  {journal} {\bibinfo  {journal} {Phys. Rep.}\ }\textbf {\bibinfo
			{volume} {499}},\ \bibinfo {pages} {1} (\bibinfo {year} {2011})}\BibitemShut
	{NoStop}%
	\bibitem [{\citenamefont {G{\'{o}}mez}\ \emph {et~al.}(2013)\citenamefont
		{G{\'{o}}mez}, \citenamefont {D{\'{i}}az-Guilera}, \citenamefont
		{G{\'{o}}mez-Garde{\~{n}}es}, \citenamefont {P{\'{e}}rez-Vicente},
		\citenamefont {Moreno},\ and\ \citenamefont {Arenas}}]{Gomez2013s}%
	\BibitemOpen
	\bibfield  {author} {\bibinfo {author} {\bibfnamefont {S.}~\bibnamefont
			{G{\'{o}}mez}}, \bibinfo {author} {\bibfnamefont {A.}~\bibnamefont
			{D{\'{i}}az-Guilera}}, \bibinfo {author} {\bibfnamefont {J.}~\bibnamefont
			{G{\'{o}}mez-Garde{\~{n}}es}}, \bibinfo {author} {\bibfnamefont {C.~J.}\
			\bibnamefont {P{\'{e}}rez-Vicente}}, \bibinfo {author} {\bibfnamefont
			{Y.}~\bibnamefont {Moreno}}, \ and\ \bibinfo {author} {\bibfnamefont
			{A.}~\bibnamefont {Arenas}},\ }\href {\doibase
		10.1103/PhysRevLett.110.028701} {\bibfield  {journal} {\bibinfo  {journal}
			{Phys. Rev. Lett.}\ }\textbf {\bibinfo {volume} {110}},\ \bibinfo {pages}
		{028701} (\bibinfo {year} {2013})}\BibitemShut {NoStop}%
	\bibitem [{\citenamefont {Berner}\ \emph
		{et~al.}(2019{\natexlab{a}})\citenamefont {Berner}, \citenamefont
		{Sch{\"o}ll},\ and\ \citenamefont {Yanchuk}}]{BER19s}%
	\BibitemOpen
	\bibfield  {author} {\bibinfo {author} {\bibfnamefont {R.}~\bibnamefont
			{Berner}}, \bibinfo {author} {\bibfnamefont {E.}~\bibnamefont {Sch{\"o}ll}},
		\ and\ \bibinfo {author} {\bibfnamefont {S.}~\bibnamefont {Yanchuk}},\
	}\href@noop {} {\enquote {\bibinfo {title} {Multi-clusters in networks of
				adaptively coupled phase oscillators},}\ } {\bibfield
		{journal} {\bibinfo  {journal} {SIAM J. Appl. Dyn. Syst., in print}\
		} (\bibinfo {year}
		{2019})},\ \bibinfo {note} {arXiv:1809.00573}\BibitemShut
	{NoStop}%
	\bibitem [{\citenamefont {Berner}\ \emph
		{et~al.}(2019{\natexlab{b}})\citenamefont {Berner}, \citenamefont
		{Fialkowski}, \citenamefont {Kasatkin}, \citenamefont {Nekorkin},
		\citenamefont {Sch{\"o}ll},\ and\ \citenamefont {Yanchuk}}]{BER19as}%
	\BibitemOpen
	\bibfield  {author} {\bibinfo {author} {\bibfnamefont {R.}~\bibnamefont
			{Berner}}, \bibinfo {author} {\bibfnamefont {J.}~\bibnamefont {Fialkowski}},
		\bibinfo {author} {\bibfnamefont {D.~V.}\ \bibnamefont {Kasatkin}}, \bibinfo
		{author} {\bibfnamefont {V.}~\bibnamefont {Nekorkin}}, \bibinfo {author}
		{\bibfnamefont {E.}~\bibnamefont {Sch{\"o}ll}}, \ and\ \bibinfo {author}
		{\bibfnamefont {S.}~\bibnamefont {Yanchuk}},\ }\href@noop {} {\enquote
		{\bibinfo {title} {Hierarchical frequency clusters in adaptive networks of
				phase oscillators},}\ } {\bibfield
		{journal} {\bibinfo  {journal} {Chaos, in print}\
		} (\bibinfo {year}
		{2019})},\ \bibinfo
	{note} {arXiv:1904.06927}\BibitemShut {NoStop}%
\end{thebibliography}
\end{document}